\documentclass{article}
\usepackage[utf8]{inputenc}

\usepackage{amsmath, amsthm, amssymb}
\usepackage{mathabx}
\usepackage{mathtools}
\usepackage{graphicx}
\usepackage{natbib}
\usepackage{caption}
\usepackage{subcaption}
\usepackage{enumerate}

\newcommand{\supplementname}{Supplement}

\usepackage{hyperref}

\usepackage[margin=1.45in]{geometry}

\hypersetup{hyperfootnotes=false,colorlinks,citecolor=blue,urlcolor=blue,linkcolor=blue}

\usepackage[shortlabels]{enumitem}

\usepackage[linesnumbered,ruled,resetcount]{algorithm2e}
\DontPrintSemicolon
\SetAlFnt{\normalsize}
\SetKwInOut{Input}{Input}
\newcommand{\nextnr}{\stepcounter{AlgoLine}\ShowLn}

\newcommand{\hrulealg}[0]{\vspace{1mm} \hrule \vspace{1mm}}

\usepackage{xparse}
\let\oldeqref\eqref
\makeatletter
\RenewDocumentCommand\eqref{s m}{%
  \IfBooleanTF#1%
  {\oldeqref{#2}}
  {\oldeqref{#2}}
}
\makeatother


\usepackage{nikos_tex}

\usepackage[hang,flushmargin]{footmisc}

\usepackage{changes}

\graphicspath{{./figures/}}

\theoremstyle{plain}
\newtheorem{prop}{Proposition}
\newtheorem{theo}[prop]{Theorem}

\newtheorem{coro}[prop]{Corollary}

\theoremstyle{definition}
\newtheorem{exam}{Example}

\newtheorem{assum}{Assumption}

\theoremstyle{remark}
\newtheorem{rema}{Remark}

\newcommand{\authorblock}[1]{\begin{tabular}{@{}c@{}}#1\end{tabular}}

\title{Empirical Bayes mean estimation with nonparametric errors via order statistic regression on replicated data}

\date{\today}

\author{ 
  \begin{tabular}{c@{\qquad}c} 
  \authorblock{Nikolaos Ignatiadis\thanks{This work was done as part of an internship in Google Ads.}\\
                \texttt{ignat@stanford.edu}} & 
  \authorblock{Sujayam Saha \\ \texttt{sujayam@google.com}} \\
  \\
  \authorblock{Dennis L. Sun \\ \texttt{dsun09@calpoly.edu}} &
  \authorblock{Omkar Muralidharan \\ \texttt{omuralidharan@google.com}} \\
  & 
  \end{tabular}  
}

\begin{document}

\maketitle

\begin{abstract}
We study empirical Bayes estimation of the effect sizes of $\n$ units 
from $\B$ noisy observations on each unit. We show that it is possible to achieve near-Bayes optimal mean squared error, without any assumptions or knowledge about the effect size distribution or the noise. The noise distribution can be heteroskedastic and vary arbitrarily from unit to unit. Our proposal, which we call Aurora, leverages the replication inherent in the $\B$ observations per unit and recasts the effect size estimation problem as a general regression problem. 
Aurora with linear regression provably matches the performance of a wide array of estimators including the sample mean, the trimmed mean, the sample median, as well as James-Stein shrunk versions thereof. Aurora automates effect size estimation for Internet-scale datasets, as we demonstrate on data from a large technology firm.\\
\newline
\noindent%
{\small{\it Keywords:} Empirical Bayes, Linear regression, Nonparametric regression, L-statistics, Asymptotic optimality}
\end{abstract}

\section{Introduction}

Empirical Bayes (EB)~\citep{efron2012large,robbins1956empirical,robbins1964empirical} and related shrinkage methods are the {\em de facto} standard for estimating effect sizes in many disciplines. In genomics, EB 
is used to detect differentially expressed genes when the number of samples is small~\citep{smyth2004linear, love2014moderated}. In survey sampling, EB improves noisy estimates of quantities, like the 
average income, for small communities~\citep{rao2015small}. 
The key insight of EB is that one can often estimate unit-level quantities better by sharing information across units, rather than analyzing each unit separately.

Formally, EB models the observed data $\boldZ = (Z_1, ..., Z_\n)$ as arising from the following generative process:
\begin{align}
\mu_i &\sim G, &Z_i \mid \mu_i &\sim F(\cdot \mid \mu_i), & i &= 1, ..., \n.
\label{eq:basic_eb}
\end{align}
The goal here is to estimate the mean parameters,  
$\mu_i := \EE[F]{Z_i \mid \mu_i}$ for $i=1, ..., \n$,
from the observed data 
$\boldZ$. If $G$ and $F$ are fully 
specified, then the optimal estimator (in the sense of mean squared error) 
is the posterior mean $\EE[G, F]{ \mu_i \cond Z_i}$, 
which achieves the Bayes risk. Empirical Bayes deals with the case where $F$ or $G$ is unknown, so the Bayes rule cannot be calculated. Most modern EB methods~\citep{jiang2009general, brown2009nonparametric, muralidharan2012high, saha2017nonparametric} assume that $F$ is known (say, $F(\cdot \cond \mu_i) = \mathcal{N}(\mu_i,1)$) and construct estimators $\hat{\mu}_i$ that asymptotically match the risk of the unknown Bayes rule, without making any assumptions about the unknown prior $G$. 

We examine the same problem of estimating the $\mu_i$s 
when the likelihood $F$ is also unknown. Indeed, knowledge of $F$ is an assumption that requires substantial domain expertise. For example, it took many years for 
the genomics community to agree on an EB model for detecting differences in gene expression based on microarray data~\citep{baldi2001bayesian, lonnstedt2002replicated, smyth2004linear}. Then, once this technology was superseded by bulk RNA-Seq, the community had to 
devise a new model from scratch, 
eventually settling on the negative binomial likelihood~\citep{love2014moderated, gierlinski2015statistical}.

Unfortunately, there is no way to avoid making such strong assumptions when there is no 
information besides the one $Z_i$ 
per $\mu_i$. If $F$ is 
even slightly
underspecified, then it becomes hopeless to disentangle $F$ 
from $G$. To appreciate the problem, consider the
Normal-Normal model:
\begin{align}
    G &= \mathcal{N}(0, A) & F(\cdot \mid \mu_i) &= 
\mathcal{N}(\mu_i, \sigma^2).
\label{eq:normal_model_unknown_variance}
\end{align}
Here, $Z_i$ is marginally distributed as
$\mathcal{N}(0, A + \sigma^2)$, and the observations $Z_i$ only provide information about $A+\sigma^2$. Now, when $\sigma^2$ is known, $A$ can 
be estimated by first estimating the marginal variance and subtracting $\sigma^2$. Indeed,~\citet{efron1973stein} showed that by plugging in a particular estimate of $A$ into the Bayes rule $\EE[G, F]{\mu_i \cond Z_i} = (1 - \frac{\sigma^2}{\sigma^2 + A}) Z_i$, one recovers the celebrated James-Stein estimator~\citep{james1961estimation}. Yet, as soon as $\sigma^2$ is unknown, then $A$
(and hence, $G$) is unidentified, and there is no hope of approximating the unknown Bayes rule.

However, as any student of random effects knows, 
the Normal-Normal model~\eqref{eq:normal_model_unknown_variance} 
becomes identifiable if 
we simply have independent replicates $Z_{ij}$ for 
each unit $i$. The driving force behind this work is an analogous observation in the context of empirical Bayes estimation: replication makes it possible to estimate $\mu_i$ with no assumptions on $F$ or $G$ whatsoever. The method we propose, described in the next 
section, performs well in practice and nearly matches the risk of 
the Bayes rule, which depends on the unknown $F$ and $G$.

\section{The Aurora Method}
\label{sec:method}

First, we formally specify the EB model when 
replicates $Z_{ij} \in \RR$ are available. 
\begin{align}
\label{eq:universal_eb}
(\mu_i, \alpha_i) &\simiid G, & i &= 1, \dotsc, \n \nonumber \\ 
Z_{ij} \mid (\mu_i, \alpha_i) & \simiid F(\cdot \mid \mu_i, \alpha_i) & j&= 1,\dotsc, \B.
\end{align}
Again, the quantity of interest is the mean parameter 
$\mu_i := \EE[F]{Z_{ij} \mid \mu_i, \alpha_i}$. The additional parameter 
$\alpha_i$ is a nuisance parameter that allows 
for heterogeneity across the units, while 
preserving exchangeability~\citep{galvao2014estimation,okui2019kernel}.
For example, $\alpha_i$ is commonly taken to be
the conditional variance 
$\sigma_i^2 := \Var[F]{Z_{ij} \mid \mu_i, \alpha_i}$ to allow for heteroskedasticity. 
However, $\alpha_i$ could even be infinite-dimensional---for instance, a random 
element from 
a space of distributions. The $\alpha_i$ have no impact on our 
estimation strategy and are purely a technical device.

Given data from model~\eqref{eq:universal_eb}, one approach 
would be to collapse the replicates into a single observation per unit---say, by taking their mean---which would bring us back to the setting of model~\eqref{eq:basic_eb}. We could appeal to 
the Central Limit Theorem to justify knowing that 
the likelihood is Normal. An important message of this paper is that we can do \emph{better} by using the replicates.

\subsection{Proposed Method}
Since we have replicates, we can split the $Z_{ij}$ 
into two groups for each $i$. First, consider 
the case of $\B = 2$ replicates so that we can write $(X_i, Y_i)$ for $(Z_{i1}, Z_{i2})$. 
Now, $X_i$ and $Y_i$ are 
conditionally independent given $(\mu_i, \alpha_i)$. Figure~\ref{fig:two_replicates} illustrates the relationship 
between $X_i$ and $Y_i$ under two different settings. 
The key insight is that the conditional mean 
$\EE[G, F]{Y_i \cond X_i}$ is (almost surely) identical to the posterior 
mean $\EE[G, F]{\mu_i \cond X_i}$, by the 
following elementary calculation~\citep{krutchkoff1967supplementary}. 
(For convenience, 
we suppress the dependence of the expected values on $G,F$.)
\begin{equation}
\label{eq:eb_to_reg}
\EE{Y_i \cond X_i} = \EE{\EE{Y_i \cond \mu_i, \alpha_i, X_i} \cond X_i} = \EE{\EE{Y_i \cond \mu_i, \alpha_i} \cond X_i} = \EE{ \mu_i \cond X_i}.
\end{equation}
This suggests that we can estimate the 
Bayes rule based on $X_i$ (i.e., the posterior mean $\EEInline{\mu_i \cond X_i}$) by simply 
regressing $Y_i$ on $X_i$ using any black-box predictive model, 
such as a local averaging smoother. Let $\hat{m}(\cdot)$ be 
the fitted regression 
function; our estimate of each $\mu_i$ is then just 
$\hat{\mu}_i = \hat{m}(X_i)$. 

\begin{figure}
    \centering
    \includegraphics[width=\linewidth]{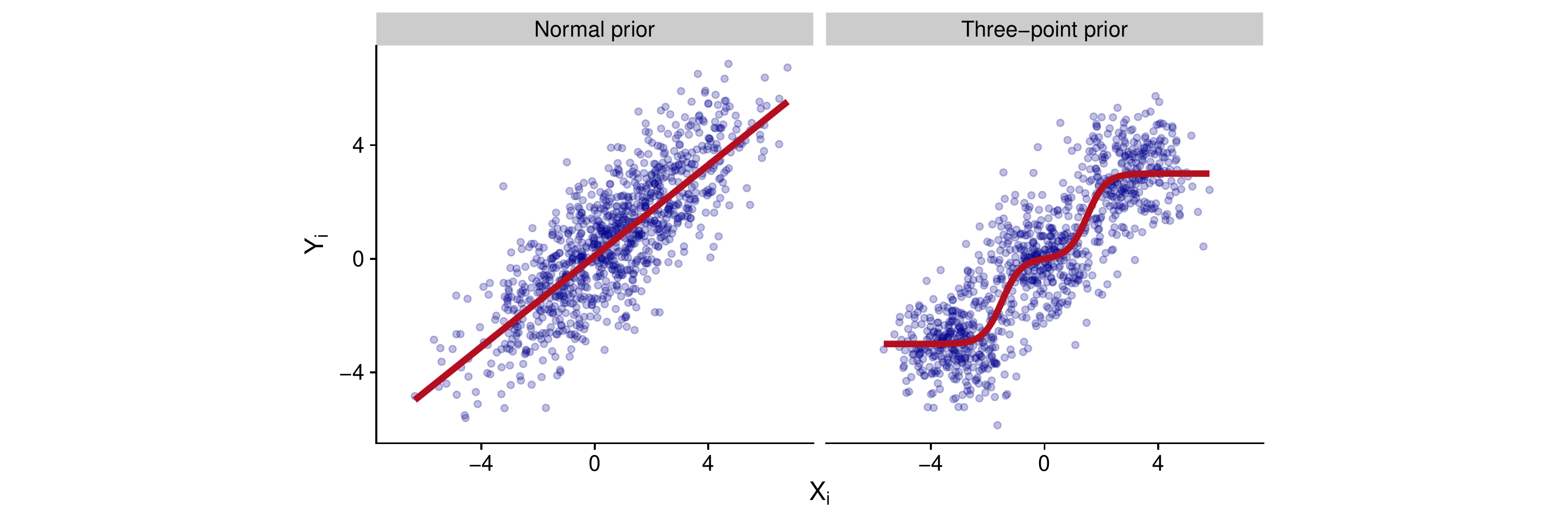}
    \caption{{\small\textbf{Empirical Bayes with replicates:}
    Two simulations with $\n = 1000$ and $\B = 2$. 
    First, we draw $\mu_i \sim G$, where $G= \mathcal{N}(0.5,4)$ in the left panel and $G$ is the uniform distribution on the discrete set $\cb{-3,\; 0,\; 3}$ in the right panel. Then, for each $i$, we draw $X_i, Y_i \cond \mu_i \simiid \mathcal{N}(\mu_i,1)$ and plot the points $(X_i, Y_i)$. The line shows the posterior mean $\EE{\mu_i \cond X_i}$, which in light of~\eqref{eq:eb_to_reg}, is identical to the conditional mean 
    $\EE{Y_i \cond X_i}$.}}
    \label{fig:two_replicates}
\end{figure}

To extend this method to $\B > 2$, we can again split the replicates $\boldZ_i$ into 
two parts: 
\begin{align}
    \boldX_i &:= \boldX_i(j) := (Z_{i1},\dotsc, Z_{i(j-1)}, Z_{i(j+1)},\dotsc, Z_{i\B}) & Y_i &:= Y_i(j) := Z_{ij},
    \label{eq:xy_def}
\end{align}
where $j \in \cb{1,\dotsc,\B}$ is an arbitrary fixed index for now. (We suppress $j$ in the notation below.)
Now, one approach is to summarize the vector $\boldX_i$ by 
the mean of its values $\bar X_i$ and regress $Y_i$ on $\bar X_i$ 
to learn $\EE{ \mu_i \cond \widebar{X}_i}$
\citep{coey2019improving}. This works for essentially 
the same reason as the $\B=2$ case. However, unless $\bar{X}_i$ 
is sufficient for $(\mu_i, \alpha_i)$ in 
model~\eqref{eq:universal_eb}, then 
$\EE{ \mu_i \cond \widebar{X}_i}$ will be different
from and suboptimal to 
$\EE{ \mu_i \cond \boldX_i}$.

Instead, we propose learning
$\EE{ \mu_i \cond \boldX_i}$ directly. 
The rationale is contained in the following 
result.

\begin{prop}
\label{prop:main}
Let $Z_{ij}$ be generated according to \eqref{eq:universal_eb} and assume that $\EEInline{\abs{\mu_i}} < \infty$, $\EEInline{\abs{Z_{ij}}} < \infty$. Define $\boldX_i$ and $Y_i$ as in \eqref{eq:xy_def}. 
Let $\ordX_i$ be the vector of order statistics of $\boldX_i$:
$$\ordX_i := \p{X_{i}^{(1)}, \dotsc, X_{i}^{(\B-1)}},\;\;\;\;\;\; X_{i}^{(1)} \leq \dotsc \leq X_{i}^{(\B-1)}.$$
That is, $\ordX_i$ is simply a sorted version of $\boldX_i$. Then, almost surely
\begin{equation}
\label{eq:eb_to_reg2}
\EE{Y_i \mid \ordX_i} = \EE{\mu_i \mid \boldX_i}.
\end{equation}
\end{prop}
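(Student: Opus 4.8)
The argument combines two ingredients: the ``regression identity'' $\EE{Y_i \mid \boldX_i} = \EE{\mu_i \mid \boldX_i}$, which is the $\B>2$ analogue of~\eqref{eq:eb_to_reg}, and a symmetry (exchangeability) argument showing that conditioning on the unordered vector $\boldX_i$ is the same as conditioning on its sorted version $\ordX_i$. Granting these, the claim follows from the tower property: since $\sigma(\ordX_i) \subseteq \sigma(\boldX_i)$,
\begin{equation*}
\EE{Y_i \mid \ordX_i} = \EE{\EE{Y_i \mid \boldX_i} \mid \ordX_i} = \EE{\EE{\mu_i \mid \boldX_i} \mid \ordX_i},
\end{equation*}
and the symmetry argument shows the innermost conditional expectation is already $\sigma(\ordX_i)$-measurable, so the outer conditioning does nothing.

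First I would establish the regression identity. Under~\eqref{eq:universal_eb} the replicates $Z_{i1}, \dotsc, Z_{i\B}$ are conditionally i.i.d.\ given $(\mu_i,\alpha_i)$, so $Y_i = Z_{ij}$ is conditionally independent of $\boldX_i$ given $(\mu_i, \alpha_i)$, and $\EE{Y_i \mid \mu_i, \alpha_i} = \mu_i$ by definition of $\mu_i$. Iterating expectations exactly as in~\eqref{eq:eb_to_reg}, $\EE{Y_i \mid \boldX_i} = \EE{\EE{Y_i \mid \mu_i,\alpha_i,\boldX_i} \mid \boldX_i} = \EE{\EE{Y_i \mid \mu_i,\alpha_i} \mid \boldX_i} = \EE{\mu_i \mid \boldX_i}$; the assumptions $\EEInline{\abs{\mu_i}}<\infty$ and $\EEInline{\abs{Z_{ij}}}<\infty$ guarantee every conditional expectation here is well defined.

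The substantive step is the symmetry argument. Because $Z_{i1},\dotsc,Z_{i\B}$ are conditionally i.i.d., the joint law of $(\boldX_i, Y_i)$ is invariant under any permutation $\pi$ of the $\B-1$ coordinates of $\boldX_i$, i.e.\ $(\pi\boldX_i, Y_i) \stackrel{d}{=} (\boldX_i, Y_i)$. Let $g$ be a Borel function with $g(\boldX_i) = \EE{Y_i \mid \boldX_i}$ a.s. Testing the defining property of conditional expectation against bounded functions of $\boldX_i$ and using permutation invariance, one checks that $x \mapsto g(\pi x)$ is also a version of $\EE{Y_i \mid \boldX_i}$, hence $g(\pi\boldX_i) = g(\boldX_i)$ a.s.\ for each of the $(\B-1)!$ permutations $\pi$; averaging $g$ over them produces a symmetric Borel function $\bar g$ with $\bar g(\boldX_i) = \EE{Y_i \mid \boldX_i}$ a.s.\ (and $\bar g(\boldX_i)$ is integrable since $\pi\boldX_i \stackrel{d}{=} \boldX_i$). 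A symmetric Borel function of a vector factors through its order statistics, so $\bar g(\boldX_i) = \bar g(\ordX_i)$ is $\sigma(\ordX_i)$-measurable, and substituting into the display above yields $\EE{Y_i \mid \ordX_i} = \bar g(\boldX_i) = \EE{\mu_i \mid \boldX_i}$ almost surely.

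I expect the only real obstacle to be the measure-theoretic bookkeeping in the symmetry step: selecting versions of the conditional expectations, handling the exceptional null sets uniformly over the finitely many permutations, and invoking that $\sigma(\ordX_i)$ is exactly the permutation-invariant sub-$\sigma$-algebra of $\sigma(\boldX_i)$; the probabilistic content is elementary. An alternative that repackages the same fact is to condition on $(\mu_i,\alpha_i)$ throughout: since $Y_i$ is conditionally independent of $\ordX_i$ given $(\mu_i,\alpha_i)$ one gets $\EE{Y_i\mid\ordX_i} = \EE{\mu_i\mid\ordX_i}$ immediately, and then one still needs that the conditional law of $(\mu_i,\alpha_i)$ given $\boldX_i$ is permutation-invariant in order to identify $\EE{\mu_i\mid\ordX_i}$ with $\EE{\mu_i\mid\boldX_i}$---so the symmetry argument reappears, merely in a different guise.
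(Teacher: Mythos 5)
Your proof is correct, and it reaches the conclusion by a slightly different decomposition than the paper's. The paper applies the regression identity directly to the ordered vector, obtaining $\EE{Y_i \mid \ordX_i} = \EE{\mu_i \mid \ordX_i}$, and then closes the gap with a one-line appeal to sufficiency of the order statistics for $(\mu_i,\alpha_i)$ under i.i.d.\ sampling, which gives $\EE{\mu_i \mid \ordX_i} = \EE{\mu_i \mid \boldX_i}$. You instead prove the regression identity on the unordered vector, $\EE{Y_i \mid \boldX_i} = \EE{\mu_i \mid \boldX_i}$, and then show by symmetrizing a version of the regression function over the $(\B-1)!$ permutations that $\EE{Y_i \mid \boldX_i}$ is already $\sigma(\ordX_i)$-measurable, so the tower property collapses $\EE{Y_i\mid\ordX_i}$ onto it. The two routes invoke the same underlying fact---as you note yourself in your closing remark, sufficiency of the order statistics is exactly the permutation-invariance you exploit---but your version has the minor merit of needing only the marginal exchangeability of $(\boldX_i, Y_i)$ in the ordering step, with no further reference to the latent $(\mu_i,\alpha_i)$, at the cost of the version-selection and null-set bookkeeping you flag; the paper's version is shorter because it outsources that bookkeeping to the standard sufficiency result. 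Both are complete proofs.
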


\begin{proof}
The same argument as \eqref{eq:eb_to_reg} shows that \smash{$\EEInline{Y_i \mid \ordX_i} = \EEInline{ \mu_i \mid \ordX_i}$} almost surely. Now, under exchangeable sampling, the order statistics $\ordX_i$ are sufficient for $(\mu_i, \alpha_i)$ and therefore it follows that \smash{$\EEInline{\mu_i \mid \ordX_i} = \EEInline{\mu_i \mid \boldX_i}$} almost surely, with no assumptions on $F$.
\end{proof}

Equation~\eqref{eq:eb_to_reg2} suggests that we should 
regress $Y_i$ on the order statistics $\ordX_i$ to learn 
a function \smash{$\hat{m}(\cdot):\RR^{\B-1}\to \RR$} that approximates the posterior mean. The fitted values \smash{$\hat{m}(\ordX_i)$}  can be used to estimate $\mu_i$.
There is one more detail worth mentioning. The estimate  \smash{$\hat{m}(\ordX_i)$} depends on the arbitrary choice of $j$ in~\eqref{eq:xy_def}. To reduce the variance of the estimate, we average over all choices of $j \in \cb{1,\dotsc,\B}$. This method, 
summarized in Table~\ref{tab:algorithm}, is called Aurora, which 
stands for ``Averages of Units by Regressing on Ordered Replicates Adaptively.''

\begin{table}
    \centering
    \begin{tabular}{cp{5in}}
    \hline
    \multicolumn{2}{l}{Aurora: ``Averages of Units by Regressing on Ordered Replicates Adaptively.''}   \\
    \hline
    \multicolumn{2}{l}{For $j \in \cb{1,\dotsc,\B}$}\\
    \;\;\;\;1. & Split the replicates for each unit, $\boldZ_i$,
    into $\boldX_i := (Z_{i1},\dotsc, Z_{i(j-1)}, Z_{i(j+1)}, \dotsc, Z_{i\B})$ and 
    $Y_i := Z_{ij}$, as in \eqref{eq:xy_def}. \\
    \;\;\;\;2. & For each $\boldX_i$, order the values to obtain 
    $\ordX_i$. \\
    \;\;\;\;3. & Regress $Y_i$ on $\ordX_i$ using any black-box 
    predictive model. Let $\hat m_j$ be the fitted regression function.\\
    \;\;\;\;4. &  Let $\muebds_{i,j} := \hat m_j(\ordX_i)$. \\
    \multicolumn{2}{l}{end}\\
    \multicolumn{2}{l}{Estimate each $\mu_i$ by $\muebds_i := \frac{1}{\B}\sum_{j=1}^{\B} \muebds_{i,j}$.}\\
    \hline
    \end{tabular}
    \caption{{\small \textbf{A summary of Aurora}, which is the proposed method for 
    estimating the means $\mu_i$ when the data come from 
    model~\eqref{eq:universal_eb}.}}
    \label{tab:algorithm}
\end{table}

\section{Related work}
The Aurora method is  closely related to the extensive literature on empirical Bayes, which we cite throughout this paper. One work that is worth emphasizing is~\citet{stigler1990}, who motivated empirical 
Bayes estimators, like James-Stein, 
through the lens of regression to the mean; 
for us, regression is not just a motivation but the estimation 
strategy itself. We were surprised 
to find a similar idea to Aurora in a forgotten 
manuscript, uncited to date, that 
\citet{johns1986fully} contributed to a symposium for Herbert Robbins
\citep{van1986adaptive}. Although \citet{johns1986fully} used a 
fairly complex predictive model (projection pursuit 
regression~\citep{friedman1981projection}),  we show theoretically (Sections~\ref{sec:properties},\ref{sec:linear_ebds}) and empirically (Sections~\ref{sec:empirical_simulations},\ref{subsec:cpc}) that the
order statistics encode enough structure 
that linear regression and $k$-nearest neighbor regression can be used as the predictive model.

Models similar to~\eqref{eq:universal_eb} with replicated noisy measurements of unobservable random quantities have been studied in the context of deconvolution and error-in-variables regression~\citep{devanarayan2002empirical,schennach2004estimation}. In econometrics, 
panel data with random effects are often modelled as in~\eqref{eq:universal_eb} with the additional potential complication of time dependence, i.e., $j=1,\dotsc,\B$ indexes time, while $i=1,\dotsc,\n$ may correspond to different geographic regions~\citep*{horowitz1996semiparametric, hall2003inference, neumann2007deconvolution, jochmans2018inference}. \citet{fithian2017family} use observations from model~\eqref{eq:universal_eb} to learn a low-dimensional smooth 
parametric model that describes the data well. However, their
goal is testing, while Aurora is geared for mean estimation.

The Aurora method is also related to a recent line of research that leverages black-box prediction methods to solve statistical tasks that are \emph{not} predictive in nature. For example, \citet{chernozhukov2017double} consider inference for low dimensional causal quantities when high dimensional nuisance components are estimated by machine learning. \citet{boca2018direct} reinterpret the multiple testing problem in the presence of informative covariates as a regression problem and estimate the proportion of null hypotheses conditionally on the covariates. The estimated conditional proportion of null hypotheses may then be used for downstream multiple testing methods~\citep{ignatiadis2018covariate}. Black-box regression models can also be used to improve empirical Bayes point estimates in the presence of side-information~\citep{ignatiadis2019covariate}. 

Finally, a crucial ingredient in Aurora is data splitting, which is
a classical idea in statistics~\citep{cox1975note}, 
typically used to ensure 
honest inference for low dimensional parameters. In the context of simultaneous inference,~\citet*{rubin2006method} and \citet*{ habiger2014compound} use data-splitting to improve power in multiple testing.

\section{Properties of the Aurora estimator}
\label{sec:properties}

We provide theoretical guarantees for the general
Aurora estimator described in Section~\ref{sec:method}. Throughout this section we split $\boldZ_i$ into $\boldX_i(j)$ and $Y_i(j)$ as in~\eqref{eq:xy_def}, and we write $\ordX_i(j)$ for the order statistics of $\boldX_i(j)$. We also write $\boldZ$ and $\ordX(j)$ for the 
concatenation of all the $\boldZ_i$ and $\ordX_i(j)$,
respectively. As before, we omit $j$ whenever a definition does not depend on the specific choice of $j$.
\subsection{Three oracle benchmarks}
In this section we define three benchmarks for assessing the quality of a mean estimator in model~\eqref{eq:universal_eb} (in terms of mean squared error). These benchmarks provide the required context to interpret the theoretical guarantees of Aurora that we establish in Section~\ref{subsec:regret_aurora}.

First, it is impossible to improve on the Bayes rule, 
so the Bayes risk serves as an oracle. 
We denote the Bayes risk (based on all 
$\B$ replicates) by
\begin{align}
\bayesRisk{\B}{G, F} &:= \EEInline[G,F]{ (\mu_i - \EEInline[G,F]{\mu_i \cond \boldZ_i})^2} \\
&= \EEInline[G,F]{ (\mu_i - \EEInline[G,F]{\mu_i \cond \ordX_i, Y_i})^2}. \nonumber
\end{align}
As explained in Proposition~\ref{prop:main}, the function we seek to mimic (for each $j$ in the loop of the Aurora algorithm in Table~\ref{tab:algorithm})  is the oracle Bayes rule based on the order statistics of $\B-1$ replicates,
\begin{equation}
\label{eq:bayesrule_minus_one}
     m^*(\ordX_i) := \EE[G,F]{\mu_i \cond \ordX_i}.
\end{equation}
The risk of $m^*(\cdot)$ is
the Bayes risk based on $\B - 1$ replicates,
\begin{align}
\bayesRisk{\B - 1}{G, F} &:= \EEInline[G,F]{ (\mu_i - \EEInline[G, F]{\mu_i \cond \ordX_i})^2} = \EEInline[G,F]{ (\mu_i - m^*(\ordX_i))^2}.
\label{eq:bayes_estimator_risk}
\end{align}
In the Aurora algorithm we average over the choice of held-out replicate $j$. Thus, we also define an oracle rule that averages $m^*(\ordX_i(j))$ over all $j$. We use the following notation for this oracle and its risk:
\begin{equation}
\label{eq:avgbayesrule}
\AvgBayesRule(\boldZ_i) = \frac{1}{\B} \sum_{j=1}^{\B}m^*\p{\ordX_i(j)},\;\;\AvgBayesRisk{\B-1 }{G, F} = \EE[G,F]{ \p{\mu_i - \AvgBayesRule(\boldZ_i)}^2}.
\end{equation}
It is immediate that $ \bayesRisk{\B}{G,F} \leq \AvgBayesRisk{\B-1}{G, F}$.  Also, by the definition of $\AvgBayesRule$ in~\eqref{eq:avgbayesrule} and by Jensen's inequality, it holds that,
$$
\begin{aligned}
\EE[G,F]{ \p{\mu_i - \AvgBayesRule(\boldZ_i)}^2}\, &= \,\EE[G,F]{\cb{\frac{1}{\B}\sum_{j=1}^{\B} \p{\mu_i - m^*\p{\ordX_i(j)}}}^2} \\
&\leq\, \frac{1}{\B} \sum_{j=1}^{\B} \EE[G,F]{ \p{\mu_i - m^*\p{\ordX_i(j)}}^2},
\end{aligned}
$$
\color{black}
so $\AvgBayesRisk{\B-1 }{G, F} \leq \bayesRisk{\B-1}{G,F}$.
This bound can be improved through the following insight: $m^*(\ordX_i(j)),\, j=1,\dotsc,\B$ are jackknife estimates of the posterior mean $\EE{ \mu_i \mid \boldZ_i}$ and $\AvgBayesRule(\boldZ_i)$ is their average. By a fundamental result for the jackknife~\citep[Theorem 2]{efron1981jackknife}, it holds that\footnote{Here we use the fact that the entries of $\boldZ_i$ are independent conditionally on $\mu_i, \alpha_i $. This inequality is also known in the theory of U-statistics~\citep[Theorem 5.2]{hoeffding1948class}.} $\VarInline{\AvgBayesRule(\boldZ_i) \mid \mu_i, \alpha_i} \leq \VarInline{   m^*(\ordX_i) \mid \mu_i, \alpha_i} \cdot (\B-1)/\B$. Armed with this insight, in \supplementname~\ref{subsec:proof_jackknife}, we prove that:

\begin{prop} 
\label{prop:averaged_jackknife}
Under model~\eqref{eq:universal_eb} with $\EE{\mu_i^2}<\infty,\; \EE{Z_{ij}^2} < \infty$, it holds that:
\begin{equation}
\label{eq:threebenchmarks}
\bayesRisk{\B}{G,F} \leq \AvgBayesRisk{\B-1}{G, F} \leq \bayesRisk{\B-1}{G,F} \, - \, \EE{\VarInline{   m^*(\ordX_i) \mid \mu_i, \alpha_i}}\big/ \B.
\end{equation}
\end{prop}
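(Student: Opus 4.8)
The plan is to prove the two inequalities separately; only the second requires real work, since the first is an immediate consequence of the optimality of the Bayes rule. For the first inequality, note that $\AvgBayesRule$ is a measurable function of $\boldZ_i$, and among all functions of $\boldZ_i$ the posterior mean $\EE[G,F]{\mu_i \mid \boldZ_i}$ minimizes mean squared error; hence $\bayesRisk{\B}{G,F} = \EE[G,F]{(\mu_i - \EE[G,F]{\mu_i \mid \boldZ_i})^2} \leq \EE[G,F]{(\mu_i - \AvgBayesRule(\boldZ_i))^2} = \AvgBayesRisk{\B-1}{G,F}$, with both sides finite by the moment assumptions.

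For the second inequality I would condition on the latent pair $(\mu_i, \alpha_i)$ throughout and carry out a bias--variance decomposition. Since the $Z_{ij}$ are i.i.d.\ given $(\mu_i,\alpha_i)$, the vectors $\boldX_i(j)$, and therefore their order statistics $\ordX_i(j)$, all share the same conditional distribution given $(\mu_i,\alpha_i)$ as $j$ ranges over $\cb{1,\dotsc,\B}$. Consequently the conditional bias $b_i := \EE[G,F]{m^*(\ordX_i(j)) \mid \mu_i, \alpha_i}$ and the conditional variance $v_i := \VarInline{m^*(\ordX_i(j)) \mid \mu_i, \alpha_i}$ do not depend on $j$, and averaging gives $\EE[G,F]{\AvgBayesRule(\boldZ_i)\mid \mu_i,\alpha_i} = b_i$. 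Then
\begin{align*}
\EE[G,F]{(\mu_i - \AvgBayesRule(\boldZ_i))^2 \mid \mu_i, \alpha_i} &= (\mu_i - b_i)^2 + \VarInline{\AvgBayesRule(\boldZ_i) \mid \mu_i, \alpha_i}, \\
\EE[G,F]{(\mu_i - m^*(\ordX_i))^2 \mid \mu_i, \alpha_i} &= (\mu_i - b_i)^2 + v_i,
\end{align*}
so subtracting cancels the common squared-bias term and it remains only to bound $\VarInline{\AvgBayesRule(\boldZ_i)\mid\mu_i,\alpha_i}$ by $v_i$.

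Here I would invoke the jackknife variance inequality already cited in the text (Theorem 2 of \citet{efron1981jackknife}, equivalently the $U$-statistic variance bound, \citet[Theorem 5.2]{hoeffding1948class}): because $m^*(\ordX_i(j))$, $j = 1,\dotsc,\B$, are the leave-one-out evaluations of a single symmetric function of the conditionally independent coordinates of $\boldZ_i$, their average satisfies $\VarInline{\AvgBayesRule(\boldZ_i)\mid\mu_i,\alpha_i} \leq \frac{\B-1}{\B}\, v_i$. Substituting gives $\EE[G,F]{(\mu_i - \AvgBayesRule(\boldZ_i))^2 \mid \mu_i, \alpha_i} \leq \EE[G,F]{(\mu_i - m^*(\ordX_i))^2 \mid \mu_i, \alpha_i} - v_i/\B$, and taking expectations over $(\mu_i,\alpha_i)$ yields exactly $\AvgBayesRisk{\B-1}{G,F} \leq \bayesRisk{\B-1}{G,F} - \EE[G,F]{\VarInline{m^*(\ordX_i)\mid\mu_i,\alpha_i}}/\B$.

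The calculation is essentially routine; the points needing care are (i) verifying that $b_i$ and $v_i$ are genuinely $j$-invariant, which is where the conditional exchangeability of the replicates enters and which makes the squared-bias terms cancel \emph{exactly}, and (ii) checking that the map $\boldZ_i \mapsto m^*(\ordX_i(j))$ really does fit the template of the cited jackknife/$U$-statistic inequality as leave-one-out evaluations of one symmetric statistic, conditionally on $(\mu_i,\alpha_i)$. Finiteness of the conditional means and variances appearing in the decompositions follows from $\EEInline{\mu_i^2}<\infty$, $\EEInline{Z_{ij}^2}<\infty$ and Jensen's inequality. I expect step (ii) to be the only genuine obstacle, and even that is resolved by the result the text already quotes.
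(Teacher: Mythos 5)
Your proof is correct, and the one point you flag as a possible obstacle is not one: $m^*$ is a fixed symmetric function of the $\B-1$ retained coordinates (it depends on them only through their order statistics), so the $\B$ evaluations $m^*(\ordX_i(j))$ are precisely the leave-one-out values of a single symmetric statistic of the conditionally i.i.d.\ coordinates of $\boldZ_i$, and the cited jackknife/U-statistic inequality applies conditionally on $(\mu_i,\alpha_i)$ exactly as you use it. The paper reaches the same bound with the same key lemma but assembles it unconditionally: it writes $\bayesRisk{\B-1}{G,F} = \Var{\mu_i} - \VarInline{m^*(\ordX_i)}$ via the law of total variance, expands $\AvgBayesRisk{\B-1}{G,F}$ into variance and covariance terms, uses $\Cov{\mu_i,\, \AvgBayesRule(\boldZ_i)} = \VarInline{m^*(\ordX_i)}$ (which relies on $m^*$ being the posterior mean $\EEInline{\mu_i \mid \ordX_i}$), and only then applies the law of total variance plus the jackknife bound to $\VarInline{\AvgBayesRule(\boldZ_i)}$. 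Your conditional bias--variance decomposition short-circuits that bookkeeping: the squared-bias terms cancel exactly because the $\ordX_i(j)$ are conditionally identically distributed, so the covariance identity is never needed and nothing about $m^*$ being a posterior mean is used --- your argument in fact shows that \emph{any} fixed symmetric statistic of $\B-1$ replicates, once jackknife-averaged over the held-out index, improves on a single evaluation by at least $\EE{\VarInline{m^*(\ordX_i)\mid \mu_i,\alpha_i}}/\B$ in mean squared error. What the paper's route buys in exchange is a set of exact variance identities along the way, which make it transparent when the right-hand inequality holds with equality (as in the Normal--Normal example).
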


\begin{rema} 
\label{rema:rates_of_diffs}
For sufficiently ``regular'' problems, $\bayesRisk{\B}{G, F}$ will typically be of order $O(1/\B)$, while \smash{$\bayesRisk{\B-1}{G, F} - \bayesRisk{\B}{G, F}$} will be of order $O(1/\B^2)$; we make this argument rigorous 
for location families in Section~\ref{subsec:loc_family}
and \supplementname~\ref{sec:loc_family_example}. The correction term on the right hand side of~\eqref{eq:threebenchmarks} will also be of order $O(1/\B^2)$ in such problems. In our next example, we demonstrate the importance of this additional term.
\end{rema}

\begin{exam}[Normal likelihood with Normal prior]
\label{exam:normal_normal_regret} We consider the 
Normal-Normal model \eqref{eq:normal_model_unknown_variance} from the introduction with prior variance $A$ and noise variance $\sigma^2=1$, and with replicates $Z_{ij}, j=1, \dotsc, \B$ as in~\eqref{eq:universal_eb}. In this case,\footnote{The details are given in \supplementname~\ref{sec:example_details}.} we can analytically compute that $\bayesRisk{\B}{G, F} = A/(KA+1)$, and so it follows that $\bayesRisk{\B-1}{G, F} - \bayesRisk{\B}{G, F} = \Theta(1/\B^2)$. The right-most inequality in~\eqref{eq:threebenchmarks} is an equality and \smash{$\AvgBayesRisk{\B-1}{G, F}  - \bayesRisk{\B}{G, F} = \Theta(1/\B^4)$}. We conclude that, in the Normal-Normal model, the averaged oracle $\AvgBayesRule$ has risk closer to the full Bayes estimator than to the Bayes estimator based on $\B-1$ replicates.
\end{exam}

\subsection{Regret bound for Aurora}
\label{subsec:regret_aurora}

In this section we derive our main regret bound for Aurora. The basic idea is that the in-sample prediction error of the regression method $\hat{m}$ directly translates to bounds on the estimation error for the effect sizes $\mu_i$, and thus, Aurora can leverage black-box predictive models.  To this end, we define the in-sample prediction error for estimating the averaged oracle $\AvgBayesRule$,
\begin{equation}
\label{eq:in_sample_error_avg}
\overline{\Err}\p{m^*, \hat{m}} := \frac{1}{\n}\sum_{i=1}^\n \EE[G, F]{\cb{\AvgBayesRule(\boldZ_i) - \frac{1}{\B}\sum_{j=1}^{\B} \hat{m}_j\p{\ordX_i(j)}}^2}.
\end{equation}
When the predictive mechanism of the Aurora algorithm (Table~\ref{tab:algorithm}) is the same for all $j$,\footnote{That is, the map \smash{$(\ordX_i(j), Y_i(j))_{i=1,\dotsc,\n} \mapsto \hat{m}_j$} is the same for all $j$. This does not imply that $\hat{m}_j = \hat{m}_{j'}$ for $j \neq j'$, since the training data changes.} then Jensen's inequality provides the following convenient upper bound on~\eqref{eq:in_sample_error_avg}:
\begin{equation}
\label{eq:in_sample_error}
\overline{\Err}\p{m^*, \hat{m}} \leq \, \Err\p{m^*, \hat{m}} := \frac{1}{\n}\sum_{i=1}^\n \EE[G, F]{\cb{m^*\p{\ordX_i} - \hat{m}\p{\ordX_i}}^2}.
\end{equation}
$\Err\p{m^*, \hat{m}}$ is the in-sample error from approximating $m^*$ by $\hat m$. We are ready to state our main result:

\begin{theo}
\label{theo:eb_decomp}
The mean squared error of the Aurora estimator $\muebds_i$  (described in Table~\ref{tab:algorithm})
satisfies the following regret bound under model~\eqref{eq:universal_eb} with $\EE{\mu_i^2}<\infty,\; \EE{Z_{ij}^2} < \infty$.
$$
\begin{aligned}
\frac{1}{\n}\sum_{i=1}^{\n} \EE[G, F]{\p{\mu_i - \muebds_i}^2} &\leq  \bayesRisk{\B}{G, F}\quad\quad &\text{(Irreducible Bayes error)}\\
&+ 2\p{\AvgBayesRisk{\B-1}{G, F} - \bayesRisk{\B}{G, F}} &\text{(Error due to data splitting)}\\
&+ 2\overline{\Err}\p{m^*, \hat{m}} & \text{(Estimation error)}
\end{aligned}
$$
$\muebds_{i,j}$ (i.e., the Aurora estimator based on a single held-out response replicate)  satisfies the above regret bound with \smash{$\AvgBayesRisk{\B-1}{G, F},\, \overline{\Err}\p{m^*, \hat{m}}$} replaced by  \smash{$\bayesRisk{\B-1}{G,F},\, \Err\p{m^*, \hat{m}}$}.
\end{theo}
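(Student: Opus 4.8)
The plan is to derive both inequalities from two Pythagorean (orthogonality) identities together with the elementary bound $(x+y)^2\le 2x^2+2y^2$; I describe the averaged estimator $\muebds_i$ and note the trivial modifications for $\muebds_{i,j}$ along the way. \textbf{Step 1 (peel off the irreducible Bayes error).} Although $\muebds_i$ depends on the \emph{whole} sample $\boldZ$ through the fitted maps $\hat m_j$ (unit $i$ is itself a training point), the units are i.i.d.\ in~\eqref{eq:universal_eb}, so $\EEInline{\mu_i \mid \boldZ} = \EEInline{\mu_i \mid \boldZ_i}$ almost surely. Decompose $\mu_i - \muebds_i = (\mu_i - \EEInline{\mu_i\mid\boldZ_i}) + (\EEInline{\mu_i\mid\boldZ_i} - \muebds_i)$: the second summand is $\sigma(\boldZ)$-measurable while the conditional expectation of the first given $\sigma(\boldZ)$ is $\EEInline{\mu_i\mid\boldZ}-\EEInline{\mu_i\mid\boldZ_i}=0$, so the cross term vanishes and
\[
\EEInline{(\mu_i - \muebds_i)^2} = \bayesRisk{\B}{G,F} + \EEInline{\p{\EEInline{\mu_i \mid \boldZ_i} - \muebds_i}^2}.
\]
(If $\muebds_i\notin L^2$ both sides are $+\infty$ and there is nothing to prove; note $\AvgBayesRule(\boldZ_i)\in L^2$ and $m^*(\ordX_i)\in L^2$ automatically by conditional Jensen since $\EEInline{\mu_i^2}<\infty$.)

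\textbf{Step 2 (separate estimation error from the data-splitting penalty).} Insert $\AvgBayesRule(\boldZ_i)$ and use $(x+y)^2\le 2x^2+2y^2$:
\[
\EEInline{\p{\EEInline{\mu_i\mid\boldZ_i} - \muebds_i}^2} \le 2\,\EEInline{\p{\EEInline{\mu_i\mid\boldZ_i} - \AvgBayesRule(\boldZ_i)}^2} + 2\,\EEInline{\p{\AvgBayesRule(\boldZ_i) - \muebds_i}^2}.
\]
Averaging the last term over $i=1,\dots,\n$ reproduces $2\,\overline{\Err}(m^*,\hat m)$ by the definition~\eqref{eq:in_sample_error_avg}. \textbf{Step 3 (identify the penalty).} Because $\AvgBayesRule(\boldZ_i)=\frac{1}{\B}\sum_{j} m^*(\ordX_i(j))$ is $\sigma(\boldZ_i)$-measurable, a second Pythagorean split --- projecting $\mu_i$ onto $\sigma(\boldZ_i)$ --- gives $\EEInline{(\mu_i - \AvgBayesRule(\boldZ_i))^2} = \bayesRisk{\B}{G,F} + \EEInline{\p{\EEInline{\mu_i\mid\boldZ_i} - \AvgBayesRule(\boldZ_i)}^2}$, so the first term on the right of Step 2 equals $2\p{\AvgBayesRisk{\B-1}{G,F}-\bayesRisk{\B}{G,F}}$. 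Chaining Steps 1--3 and averaging over $i$ yields the stated bound. For $\muebds_{i,j}$: in Step 2 split through $m^*(\ordX_i(j))$ instead; its error term averages to $2\,\Err(m^*,\hat m)$ because the law of $\hat m_j$ does not depend on $j$ when the predictive mechanism is fixed; and in Step 3 the nesting $\sigma(\ordX_i(j))\subset\sigma(\boldZ_i)$ together with $\EEInline{(\mu_i-m^*(\ordX_i(j)))^2}=\bayesRisk{\B-1}{G,F}$ gives $\EEInline{\p{\EEInline{\mu_i\mid\boldZ_i}-m^*(\ordX_i(j))}^2}=\bayesRisk{\B-1}{G,F}-\bayesRisk{\B}{G,F}$.

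\textbf{Main obstacle.} The one subtle point is Step 1: since $\muebds_i$ uses unit $i$'s own replicates, the cross term cannot be killed by the naive ``a plug-in estimator is a function of $\boldZ_i$'' argument --- it vanishes only because of the stronger identity $\EEInline{\mu_i\mid\boldZ}=\EEInline{\mu_i\mid\boldZ_i}$, which is exactly where the across-units independence of model~\eqref{eq:universal_eb} enters. The rest is routine $L^2$-projection bookkeeping, and the hypotheses $\EEInline{\mu_i^2}<\infty$, $\EEInline{Z_{ij}^2}<\infty$ are what guarantee that the oracles $\EEInline{\mu_i\mid\boldZ_i}$, $m^*(\ordX_i)$ and $\AvgBayesRule(\boldZ_i)$ are square-integrable.
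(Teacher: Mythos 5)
Your proof is correct and follows essentially the same route as the paper's: the same Pythagorean split through $\EE{\mu_i \mid \boldZ} = \EE{\mu_i \mid \boldZ_i}$ to peel off $\bayesRisk{\B}{G,F}$, the same insertion of the oracle ($\AvgBayesRule(\boldZ_i)$, resp.\ $m^*(\ordX_i)$) with $(x+y)^2 \le 2x^2+2y^2$, and the same second orthogonality identity to recognize the data-splitting penalty. Your explicit flagging of why the cross term still vanishes despite $\muebds_i$ depending on unit $i$'s own data, and the $L^2$ integrability caveat, are welcome clarifications of points the paper treats more tersely.
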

\noindent 
As elaborated in Remark~\ref{rema:rates_of_diffs}, the second error term in the above decomposition will typically be negligible compared to $\bayesRisk{\B}{G, F}$; this is
the price we pay for making no 
assumptions about $F$ and $G$. Hence, beyond the
irreducible Bayes error, 
the main source of error depends on how well we can 
estimate $m^*(\cdot)$. Crucially, this error is the in-sample estimation error of $\hat{m}$, which is often easier to analyze and smaller in magnitude than out-of-sample estimation error~\citep{hastie2008elements, chatterjee2013assumptionless, rosset2018fixed}.

\subsection{Aurora with $k$-Nearest-Neighbors is universally consistent}
Theorem~\ref{theo:eb_decomp} demonstrated that a regression model with
small in-sample error translates, through Aurora, to mean estimates with small mean squared error. Now, we combine this result with results from nonparametric regression to prove that under model~\eqref{eq:universal_eb},  it is possible to asymptotically (in $\n$) match the oracle risk~\eqref{eq:avgbayesrule} and in view of Proposition~\ref{prop:averaged_jackknife}, to outperform the Bayes risk based on $\B-1$ replicates.

\begin{theo}[Universal consistency with $k$-Nearest-Neighbor ($k$NN) estimator]
\label{theo:johns}
Consider model~\eqref{eq:universal_eb} with \smash{$\EEInline{\mu_i^2}<\infty,\;\EEInline{Z_{ij}^2} < \infty$}. We estimate $\mu_i$ with the Aurora algorithm where $\hat{m}(\cdot)$ is the $k$-Nearest-Neighbor ($k$NN) estimator with $k=k_{\n} \in \mathbb{N}$, i.e., the nonparametric regression estimator which predicts\footnote{The definition below assumes no ties. In the proof we explain how to randomize to deal with ties.}
$$ \hat{m}(\boldx) = \frac{1}{k} \sum_{i \in \mathcal{S}_k(\boldx)} Y_i, \;\; \text{where } \small{ \mathcal{S}_{k}(\boldx) = \cb{i \in \cb{1,\dotsc,\n}: \sum_{j \neq i} \ind\p{ \Norm{\ordX_i - \boldx}_2 > \Norm{\ordX_j - \boldx}_2} < k }},$$
and $\Norm{\cdot}_2$ is the Euclidean distance. If $k=k_{\n}$ satisfies $k \to \infty$, $k/\n \to 0$ as $\n \to \infty$, then:
$$ \limsup_{\n \to \infty}\frac{1}{\n}\sum_{i=1}^{\n}\EE{\p{\mu_i - \muebds_i}^2} =  \AvgBayesRisk{\B-1}{G, F} \leq \bayesRisk{\B-1}{G,F}.$$
\end{theo}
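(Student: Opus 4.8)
The plan is to combine the regret bound of Theorem~\ref{theo:eb_decomp} with a classical universal consistency result for $k$NN regression. By Theorem~\ref{theo:eb_decomp}, it suffices to show that $\overline{\Err}(m^*, \hat m) \to 0$ as $\n \to \infty$, since the remaining two terms on the right-hand side of the regret bound sum to $2\AvgBayesRisk{\B-1}{G, F} - \bayesRisk{\B}{G, F}$, which together with the irreducible error gives $\AvgBayesRisk{\B-1}{G, F}$ in the limit once the estimation error vanishes; actually we need to be a little careful, since the regret bound gives $\bayesRisk{\B}{G,F} + 2(\AvgBayesRisk{\B-1}{G,F} - \bayesRisk{\B}{G,F}) + 2\overline{\Err} = 2\AvgBayesRisk{\B-1}{G,F} - \bayesRisk{\B}{G,F} + 2\overline{\Err}$ as an upper bound, which is \emph{larger} than $\AvgBayesRisk{\B-1}{G,F}$. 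So the upper bound from Theorem~\ref{theo:eb_decomp} alone is not tight enough, and one must also establish the matching lower bound $\liminf \frac{1}{\n}\sum_i \EE{(\mu_i - \muebds_i)^2} \geq \AvgBayesRisk{\B-1}{G,F}$, together with a sharper upper bound that does not pay the data-splitting term twice. I would therefore prove the result by directly analyzing the $L^2$ convergence of the fitted $\hat m_j$ to $m^*$, rather than only invoking the crude regret bound.

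Concretely, the key steps are as follows. First, fix $j$; by exchangeability the analysis is identical for every $j$. The pairs $(\ordX_i(j), Y_i(j))_{i=1}^{\n}$ are i.i.d.\ draws from a fixed joint distribution on $\RR^{\B-1} \times \RR$ with $\EE{Y_i(j)^2} = \EE{Z_{ij}^2} < \infty$ and regression function $m^*(\cdot) = \EE{Y_i(j) \mid \ordX_i(j)}$ by Proposition~\ref{prop:main}. The Stone/Devroye universal consistency theorem for $k$NN regression (valid in any fixed Euclidean dimension, with $k \to \infty$, $k/\n \to 0$, and appropriate tie-breaking randomization) gives $\EE{(\hat m_j(\ordX_{\text{new}}) - m^*(\ordX_{\text{new}}))^2} \to 0$ for a fresh test point, and an accompanying statement for the \emph{in-sample} average $\frac{1}{\n}\sum_i \EE{(\hat m_j(\ordX_i) - m^*(\ordX_i))^2} \to 0$, which is what enters $\Err(m^*, \hat m)$; this handles the estimation-error term and shows $\overline{\Err}(m^*,\hat m) \le \Err(m^*,\hat m) \to 0$. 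Second, to get the sharp constant, I would write $\muebds_i - \mu_i = (\AvgBayesRule(\boldZ_i) - \mu_i) + (\muebds_i - \AvgBayesRule(\boldZ_i))$, bound the cross term by Cauchy--Schwarz using $\overline{\Err}(m^*,\hat m) \to 0$, and observe that the first squared term averages to exactly $\AvgBayesRisk{\B-1}{G,F}$ by definition~\eqref{eq:avgbayesrule}; this yields both $\limsup \le \AvgBayesRisk{\B-1}{G,F}$ and $\liminf \ge \AvgBayesRisk{\B-1}{G,F}$, hence equality. The final inequality $\AvgBayesRisk{\B-1}{G,F} \le \bayesRisk{\B-1}{G,F}$ is immediate from Proposition~\ref{prop:averaged_jackknife}.

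The main obstacle is establishing the in-sample $L^2$ consistency of the $k$NN fit carefully: standard references state universal consistency for the out-of-sample risk $\EE{(\hat m(X) - m^*(X))^2}$ with $X$ an independent copy, whereas Aurora's error $\Err(m^*,\hat m)$ is an \emph{in-sample} quantity in which $\ordX_i$ appears both as a query point and in the training set. One must argue that the extra dependence is asymptotically negligible — e.g., by a leave-one-out comparison showing the self-inclusion perturbs the $k$NN average by $O(1/k)$ per point, or by appealing to the fact that the in-sample and out-of-sample risks of local averaging estimators are asymptotically equivalent under the stated conditions. A secondary technical point is the handling of ties in the order-statistic covariates (ties within $\ordX_i$ are allowed, and exact ties between distinct units $\ordX_i = \ordX_{i'}$ can occur for atomic $F$); the promised randomized tie-breaking in the $k$NN neighbor selection must be shown not to disturb consistency, which follows from the corresponding robustness of the classical theorem. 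Everything else — integrability bookkeeping, the Cauchy--Schwarz cross-term estimate, and averaging over $j$ via Jensen as in~\eqref{eq:in_sample_error} — is routine.
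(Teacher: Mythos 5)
Your proposal is correct and follows essentially the same route as the paper: the exact decomposition around $\AvgBayesRule(\boldZ_i)$ with a Cauchy--Schwarz bound on the cross term (rather than the factor-two regret bound of Theorem~\ref{theo:eb_decomp}, whose insufficiency you correctly diagnose), universal consistency of $k$NN for the out-of-sample risk, a leave-one-out comparison to transfer this to the in-sample error $\Err(m^*,\hat m)$, and randomization to break ties. The one detail worth making explicit is that the leave-one-out perturbation $(Y_1 - Y_{j^*})/k$ involves a data-dependent index $j^*$, so bounding $\EE{Y_{j^*}^2}$ requires the Stone-type combinatorial lemma (a point can be among the $k$ nearest neighbors of at most $\gamma k$ others), which yields $\EE{Y_{j^*}^2} \le \gamma k\, \EE{Y_1^2}$ and hence a squared perturbation of order $1/k$, exactly as in the paper's appeal to Gy\"{o}rfi et al.
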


\noindent This result is a consequence of universal consistency in nonparametric regression~\citep*{stone1977consistent, gyorfi2006distribution}. It demonstrates that Aurora can asymptotically match the Bayes risk with substantial generality,\footnote{ Vernon~\citet{johns1957non} established a result similar to Theorem~\ref{theo:johns}. We find this remarkable, since \citet{johns1957non} anticipated later developments in nonparametric regression, independently proving universal consistency for partition-based regression estimators as an intermediate step.} and suggests the power and expressivity of the Aurora algorithm.

To apply Aurora with $k$NN, a data-driven choice of $k$ is required. In \supplementname~\ref{subsec:auroraknn_crossval} we describe a procedure that chooses the number of nearest neighbors $k_j$ per held-out response replicate $j$ through leave-one-out (LOO) cross-validation, and we study its performance empirically in the simulations of Section~\ref{sec:empirical_simulations}. Nevertheless, Aurora-$k$NN tuned by LOO, is computationally involved. This motivates our next section; there we study a procedure that is interpretable, easy to implement and that scales well to large datasets.

\section{Aurora with Linear Regression}
\label{sec:linear_ebds}
In this section we analyze the Aurora algorithm when linear regression is used as the predictive model. That is, $\hat m$ is a linear function of the 
order statistics 
\begin{equation}
\hat m(\ordX_i) =  \hat{\beta}_0 + \sum_{j'=1}^{\B-1} \hat{\beta}_{j'} X_i^{(j')},
\label{eq:linear_predictor}
\end{equation}
where $\hat{\beta}$ are the ordinary least squares coefficients of the linear regression $Y \sim \ordX$. We call the method Auroral, with the final ``l'' signifying ``linear'' and write $\muebdsl_i$ for the resulting estimates of $\mu_i$. Our main result is that Auroral matches the performance of the best estimator that is linear in the order statistics based on $\B-1$ replicates.  To state this result formally, we first define the minimum 
risk among estimators in class $\Ccal$:
\begin{equation}
\label{eq:bayes_risk_ccal}
\Rcal^{\Ccal}_{\B-1}(G, F) := \inf_{m \in \Ccal} \cb{ \frac{1}{\n}\sum_{i=1}^\n \EE[G, F]{\p{\mu_i - m\p{\ordX_i}}^2}}.
\end{equation}
The class of interest to us specifically 
is the class of estimators linear in the order statistics:
\begin{equation}
\label{eq:linear_class}
\text{Lin} := \text{Lin}\p{\RR^{\B-1}} := \cb{ m: m(x) = \beta_0 + \sum_{j'=1}^{\B-1} \beta_{j'} x^{(j')}}.
\end{equation}
This is a broad class that includes all estimators of $\mu_i$ based on \smash{$\ordX_i$} that proceed through the following two steps: (1) summarization, where an appropriate summary statistic of the likelihood $F$ is used, that is linear in the order statistics, and (2) linear shrinkage, where the 
summary statistic is {\em linearly} shrunk towards a fixed location. Schematically:
\vspace{-0.3cm}
\begin{equation*}
\begin{aligned}
  &\text{Step 1 (Summarization)}:\; &\boldX_i \mapsto T\p{\boldX_i} \in \RR,\; \text{where}\;\; &T(\cdot) \in \left\{\begin{matrix*}[l] \text{Sample mean}
\\ \text{Sample median}
\\ \text{Trimmed mean}
\\ \;\;\;\;\; \vdots
\end{matrix*}\right.\\
    &\text{Step 2 (Linear shrinkage)}: &T\p{\boldX_i} \mapsto \alpha T\p{\boldX_i} + \gamma,\;\;\; &\text{(e.g. James-Stein shrinkage)}
\end{aligned}
\end{equation*}
The summarization step can apply non-linear functions of 
the original data, such as the median and the trimmed mean, 
because the 
input to the linear model are the order statistics, rather 
than the original data. Such linear combinations of the order statistics are known as $L$-statistics, which is a class so large that it includes 
efficient estimators of the mean in any smooth, symmetric location family, cf.~\citet{van2000asymptotic} and Section~\ref{subsec:loc_family}.

We now show that Auroral matches 
$\linearRisk{\B-1}{G, F}$ asymptotically in $\n$.

\begin{theo}[Regret over linear estimators]
\label{theo:lin_regret} 
$\phantom{1}$\\
\begin{enumerate}[(i)]
\item Assume there exists $C_{\n} > 0$ such that $\EE{ \displaystyle\max_{i=1,\dotsc,\n} \VarInline{Y_i \cond \ordX_i}} \leq C_{\n}$,\footnote{
By the 
law of total variance, almost surely,  it holds that
\begin{align}
\Var{Y_i \cond \ordX_i} &= \Var{ \EE{Y_i \cond \mu_i, \alpha_i, \ordX_i} \cond \ordX_i} + \EE{ \Var{Y_i \cond \mu_i, \alpha_i, \ordX_i} \cond \ordX_i } \nonumber \\
&= \hspace{0.95in} \Var{ \mu_i \cond \ordX_i} + \EE{ \sigma_i^2 \cond \ordX_i}.
\label{eq:conditional_variance_formula}
\end{align}
Thus, for example, when 
$\mu_i, \sigma_i^2$ have bounded support, there exists $C>0$ such that $\Var{Y_i \cond \ordX_i} \leq C$ almost surely and so, the stated assumption holds with  $C_{\n} = C$.}
then:
$$ \frac{1}{\n}\sum_{i=1}^{\n} \EE{\p{\mu_i - \muebdsl_i}^2} \leq \linearRisk{\B-1}{G, F} +  C_{\n}\frac{\B}{\n}.$$
\item Assume there exists $\Gamma >0$, such that $\EE{Y_i^4} \leq \Gamma^2$, then:
$$ \frac{1}{\n}\sum_{i=1}^{\n} \EE{\p{\mu_i - \muebdsl_i}^2} \leq \linearRisk{\B-1}{G, F} + \Gamma \sqrt{\frac{\B}{\n}}.$$
\end{enumerate}
\noindent If $m^* \in \linclass$, then the conclusion of Theorem~\ref{theo:eb_decomp} holds for $\muebdsl$ with the term $\overline{\Err}\p{m^*, \hat{m}}$ bounded by $C_{\n} \B /\n$ (under Assumption (i)), resp. $\Gamma \sqrt{\B/\n}$ (under (ii)).
\end{theo}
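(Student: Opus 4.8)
The plan is to fix the held-out index $j$, condition throughout on the matrix $\ordX$ of all order statistics, and exploit the fact that ordinary least squares returns an orthogonal projection of the response vector. Writing the $\n\times\B$ design matrix (the $\B-1$ ordered columns together with an intercept column) as $\mathbf W$, the Auroral fit is $\muebdsl_{i,j}=(\mathbf H\mathbf Y)_i$ with $\mathbf H=\mathbf W(\mathbf W^\top\mathbf W)^{+}\mathbf W^\top$ the orthogonal projection onto $\operatorname{col}(\mathbf W)$ (the Moore--Penrose pseudoinverse handles rank deficiency, so the fitted values are always well defined); in particular $0\le H_{ii}\le 1$ and $\operatorname{tr}(\mathbf H)=\operatorname{rank}(\mathbf W)\le\B$. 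By Proposition~\ref{prop:main} and the fact that $(\mu_i,\boldZ_i)$ are i.i.d.\ across units, $\EE{Y_i\mid\ordX}=m^*(\ordX_i)=\EE{\mu_i\mid\ordX}$. I would therefore set $\mathbf m^*=(m^*(\ordX_1),\dots,m^*(\ordX_\n))^\top$, $\boldsymbol\varepsilon=\mathbf Y-\mathbf m^*$, and $\boldsymbol\delta=\boldsymbol\mu-\mathbf m^*$, noting that conditionally on $\ordX$ both vectors are centered with independent coordinates, that $\EE{\varepsilon_i^2\mid\ordX}=\VarInline{Y_i\mid\ordX_i}$, and that the within-unit cross moment is $\EE{\varepsilon_i\delta_i\mid\ordX}=\EE{\mu_i^2\mid\ordX_i}-m^*(\ordX_i)^2=\VarInline{\mu_i\mid\ordX_i}$ (using the tower rule together with $\EEInline{Y_i\mid\mu_i,\alpha_i,\ordX_i}=\mu_i$).

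Next I would write $\boldsymbol\mu-\mathbf H\mathbf Y=(\mathbf I-\mathbf H)\mathbf m^*+\boldsymbol\delta-\mathbf H\boldsymbol\varepsilon$ and take $\EE{\Norm{\cdot}_2^2\mid\ordX}$. The deterministic vector $(\mathbf I-\mathbf H)\mathbf m^*$ lies in $\operatorname{col}(\mathbf W)^{\perp}$, so its inner product with $\mathbf H\boldsymbol\varepsilon$ is identically zero, while its cross term with $\boldsymbol\delta$ vanishes in conditional expectation since $\boldsymbol\delta$ is conditionally centered; what survives is
\begin{equation*}
\begin{aligned}
\EE{\Norm{\boldsymbol\mu-\mathbf H\mathbf Y}_2^2\mid\ordX}
={}&\Norm{(\mathbf I-\mathbf H)\mathbf m^*}_2^2+\sum_{i=1}^{\n}\VarInline{\mu_i\mid\ordX_i}\\
&+\sum_{i=1}^{\n}H_{ii}\p{\VarInline{Y_i\mid\ordX_i}-2\VarInline{\mu_i\mid\ordX_i}}.
\end{aligned}
\end{equation*}
By the law of total variance~\eqref{eq:conditional_variance_formula}, $\VarInline{Y_i\mid\ordX_i}-2\VarInline{\mu_i\mid\ordX_i}=\EE{\sigma_i^2\mid\ordX_i}-\VarInline{\mu_i\mid\ordX_i}\le\EE{\sigma_i^2\mid\ordX_i}\le\VarInline{Y_i\mid\ordX_i}$, so the last sum --- the ``variance inflation'' from fitting $\B$ regression coefficients --- is at most $\sum_{i=1}^{\n}H_{ii}\VarInline{Y_i\mid\ordX_i}$.

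Then I would take expectations. The term $\EE{\Norm{(\mathbf I-\mathbf H)\mathbf m^*}_2^2}$ is the expected residual sum of squares of the best in-sample linear fit to $\mathbf m^*$, hence is at most $\inf_{m\in\linclass}\EE{\sum_{i}(m^*(\ordX_i)-m(\ordX_i))^2}=\n\inf_{m\in\linclass}\EE{(m^*(\ordX_i)-m(\ordX_i))^2}$ (the units are i.i.d.); combined with $\EE{\sum_i\VarInline{\mu_i\mid\ordX_i}}=\n\,\bayesRisk{\B-1}{G,F}$ and the exact identity $\linearRisk{\B-1}{G,F}=\bayesRisk{\B-1}{G,F}+\inf_{m\in\linclass}\EE{(m^*(\ordX_i)-m(\ordX_i))^2}$ (the cross term vanishes because $m^*$ is the conditional mean of $\mu_i$ given $\ordX_i$), the first two pieces add up to at most $\n\,\linearRisk{\B-1}{G,F}$. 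For the inflation term I would split into the two regimes: under assumption (i), $\sum_i H_{ii}\VarInline{Y_i\mid\ordX_i}\le\big(\max_i\VarInline{Y_i\mid\ordX_i}\big)\operatorname{tr}(\mathbf H)\le\B\max_i\VarInline{Y_i\mid\ordX_i}$, which in expectation is $\le\B C_\n$; under assumption (ii), use $\VarInline{Y_i\mid\ordX_i}\le\EE{Y_i^2\mid\ordX_i}$ and Cauchy--Schwarz (first in $\RR^\n$, then on the expectation) to get $\EE{\sum_i H_{ii}\EE{Y_i^2\mid\ordX_i}}\le\big(\EE{\sum_i H_{ii}^2}\big)^{1/2}\big(\EE{\sum_i(\EE{Y_i^2\mid\ordX_i})^2}\big)^{1/2}\le\Gamma\sqrt{\B\n}$, using $\sum_i H_{ii}^2\le\operatorname{tr}(\mathbf H)\le\B$ and $\sum_i\EE{Y_i^4}\le\n\Gamma^2$ together with conditional Jensen $(\EE{Y_i^2\mid\ordX_i})^2\le\EE{Y_i^4\mid\ordX_i}$. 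Dividing by $\n$ proves the stated bounds for the single-held-out estimator $\muebdsl_{i,j}$; since these bounds do not depend on $j$, Jensen's inequality over $j$ transfers them to $\muebdsl_i=\frac{1}{\B}\sum_{j=1}^{\B}\muebdsl_{i,j}$.

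Finally, if $m^*\in\linclass$ then $\mathbf m^*\in\operatorname{col}(\mathbf W)$, so $\mathbf H\mathbf m^*=\mathbf m^*$ and $\mathbf H\mathbf Y-\mathbf m^*=\mathbf H\boldsymbol\varepsilon$, whence $\EE{\Norm{\mathbf H\mathbf Y-\mathbf m^*}_2^2\mid\ordX}=\sum_i H_{ii}\VarInline{Y_i\mid\ordX_i}$; bounding the right-hand side exactly as in the previous step yields $\Err(m^*,\hat m)\le C_\n\B/\n$ under (i) and $\le\Gamma\sqrt{\B/\n}$ under (ii), and then $\overline{\Err}(m^*,\hat m)\le\Err(m^*,\hat m)$ by~\eqref{eq:in_sample_error} gives the last claim via Theorem~\ref{theo:eb_decomp}. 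I expect the only genuinely delicate step to be the exact conditional decomposition --- keeping track of the favorable sign of the $-2\sum_i H_{ii}\VarInline{\mu_i\mid\ordX_i}$ cross term and correctly identifying the within-unit covariance of $\boldsymbol\varepsilon$ and $\boldsymbol\delta$ --- together with, under the weaker moment assumption (ii), arranging the Cauchy--Schwarz split so that no maximum over units survives; everything else is bookkeeping.
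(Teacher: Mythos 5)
Your proof is correct and follows essentially the same route as the paper's: condition on the design, exploit that OLS is an orthogonal projection with $\operatorname{tr}(\mathbf H)\le\B$, bound the approximation piece by $\linearRisk{\B-1}{G,F}$, control the variance-inflation term $\sum_i H_{ii}\VarInline{Y_i\cond\ordX_i}$ via the max bound under (i) and Cauchy--Schwarz under (ii), and finish with Jensen over $j$. The only (cosmetic) difference is that you expand the exact conditional identity with the within-unit covariance $\EE{\varepsilon_i\delta_i\mid\ordX}=\VarInline{\mu_i\mid\ordX_i}$ and then discard the favorable $-2\sum_i H_{ii}\VarInline{\mu_i\mid\ordX_i}$ term, whereas the paper groups the terms as $(\mu-\projX m^*)+\projX(m^*-Y)$ and shows the single cross term is nonpositive --- the same quantity, organized differently.
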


In datasets, we typically 
encounter $\B \ll \n$. 
Then, Theorem~\ref{theo:lin_regret} implies that Auroral will almost match the risk $\linearRisk{\B-1}{G}$ of the best $L$-statistic based on $\B-1$ replicates. This result is in the spirit of retricted empirical Bayes~\citep{griffin1971optimal, maritz1974aligning, norberg1980empirical, robbins1983some}, which seeks to find the best estimator among estimators in a given class, such as the best Bayes linear estimators of~\citet{hartigan1969linear}. In these works linearity typically refers to linearity in $\boldX_i$ and not \smash{$\ordX_i$};~\citet{lwin1976optimal} however uses empirical Bayes to learn the best L-statistic from the class $\text{Lin}$, when the likelihood takes the form of a known location-scale family.

\subsection{Examples of Auroral estimation}
\label{subsec:ebds_examples}
In this section we give three examples in which Auroral satisfies strong risk guarantees.

\begin{exam}[Point mass prior] Suppose 
the prior on $\mu$ is a point mass at $\bar\mu$; that is, 
\smash{$\PP[G]{ \mu_i = \bar{\mu}}=1$}. Then, the Bayes 
rule based on the order statistics, 
\smash{$m^*(\ordX_i) \equiv \bar\mu$}, has risk 0 and is trivially a 
member of $\linclass$, so $\linearRisk{\B-1}{G, F} = 0$. Therefore, by Theorem~\ref{theo:lin_regret}, provided that $\sigma_i^2 \leq C$ almost surely  ($\sigma_i^2 = \Var{Z_{ij} \mid \mu_i, \alpha_i}$), 
the risk of the Auroral estimator satisfies 
$$\frac{1}{\n} \sum_{i=1}^\n \EE{\p{\mu_i - \muebdsl_i}^2} \leq C \frac{\B}{\n}.$$
\end{exam}

\begin{exam}[Normal likelihood with Normal prior]
\label{exam:normal_normal} We revisit Example~\ref{exam:normal_normal_regret},\footnote{See \supplementname~\ref{sec:example_details} for details regarding the regret bounds we discuss here.} wherein we argued that the averaged oracle $\AvgBayesRule(\boldZ_i)$~\eqref{eq:avgbayesrule}  has risk equal to the Bayes risk $\bayesRisk{\B}{G,F}$ plus $O(1/K^4)$. 
Auroral attains this risk, plus the linear least squares estimation error that decays as $O(K/N)$. 

In addition to Auroral, we consider two estimators, that take advantage of the fact that $\bZ_i := \frac{1}{\B} \sum_{j=1}^\B Z_{ij}$ is sufficient for $\mu_i$ in the 
Normal model with $\B$ replicates:
\begin{itemize}[leftmargin=*,wide]
   \item \citet{coey2019improving} (CC-L): We proceed similarly to the Auroral algorithm with one modification. For the $j$-th held-out replicate we regress $Y_i(j)$ on
    $\bX_i(j) = \frac{1}{\B-1}\sum_{j' \neq j} Z_{ij'}$ using ordinary least squares to obtain \smash{$\hat m_j^{\text{CC-L}}$} and  \smash{$ \hat\mu_{ij}^{\text{CC-L}} := \hat m_j^{\text{CC-L}}(\bX_i(j))$}. Finally we estimate $\mu_i$  by \smash{$\hat\mu_{i}^{\text{CC-L}} := \frac{1}{\B}\sum_{j=1}^{\B}\hat\mu_{ij}^{\text{CC-L}}$}.
    \item \citet{james1961estimation} (JS): We estimate $\mu_i$ by $\displaystyle \hat{\mu}^{\text{JS}}_i= \p{1 - \frac{(\n-2)/\B}{\sum_{i=1}^{\n} {\bZ_i}^2}} \bZ_i$.
\end{itemize} 

CC-L implicitly uses 
the assumption of Normal likelihood by reducing the replicates 
to their mean, which is the sufficient statistic. Similar to Auroral, its risk is equal to \smash{$\bayesRisk{\B}{G,F}$} plus $O(1/K^4)$ and the least squares estimation error, which in this case is $O(1/N)$ (only the slope and intercept need to be estimated for each held-out replicate). James-Stein makes full use of the model assumptions 
(Normal prior, Normal likelihood, known variance) and is 
expected to perform best. JS achieves the Bayes risk based on 
$\B$ observations, \smash{$\bayesRisk{\B}{G,F}$} plus an error term that decays as \smash{$O\p{1/(\n\B^2)}$}. The price Auroral pays compared to CC-L and JS for using no assumptions whatsoever, when reduction to the 
mean was possible, is modest.
\end{exam}

\begin{exam}[Exponential families with conjugate priors]
\label{exam:conjugate_exponential_families}
Let $\nu$ be a $\sigma$-finite measure on the Borel sets of $\RR$ and $\mathfrak{X}$ be the interior of the convex hull of the support of $\nu$. Let \smash{$M(\theta) = \log\p{\int \exp(z \cdot \theta)d\nu(z)}$}. Suppose $\Theta = \cb{\theta \in \RR: M(\theta) < \infty}$ is open, that both $\mathfrak{X}, \Theta$ are non-empty and that $M(\cdot)$ is differentiable for $\theta \in \Theta$ with strictly increasing derivative $\theta \mapsto M'(\theta)$.

Now, suppose $\mu_i, Z_{ij}$ are generated from model~\eqref{eq:universal_eb} as follows (with $G,F$ implicitly defined). First,  $\theta_i$ is drawn from a prior with Lebesgue density proportional to $\exp(\B_0 (z_0 \cdot \theta -  M(\theta)))$ on $\Theta$ for some $\B_0 > 0$ and $z_0 \in \mathfrak{X}$. Next, $\mu_i = M'(\theta_i)$ and $Z_{ij} \cond \mu_i$ is drawn from the distribution with $\nu$-density equal to $\exp\p{z \cdot \theta_i - M(\theta_i)}$, where \smash{$\theta_i = (M')^{-1}(\mu_i)$}. \citet[Theorem 2]{diaconis1979conjugate} then prove that,
$$m^*(\ordX_i) = \EE{\mu_i \cond \ordX_i} = \frac{\B_0 z_0  + (\B-1) \bar X_i}{\B_0 + \B -1}.$$
Thus, $m^* \in \linclass$ and Auroral matches the risk of $m^*$ up to the error term in Theorem~\ref{theo:lin_regret}.
\end{exam}

\subsection{Auroral estimation in location families}
\label{subsec:loc_family}
In this section we provide another example of Auroral estimation. In contrast to the rest of this paper, which treats $\B$ as fixed, we consider an asymptotic regime in which both $\B$ and $\n$ tend to $\infty$ (with $\B$ growing substantially slower than $\n$). In doing so, we hope to provide the following conceptual insights: first, we provide a concrete setting in which Auroral dominates any method that first summarizes $\boldZ_i$ as $\bar{Z}_i$. Second, we elaborate on the expressivity of the class $\linclass$~\eqref{eq:linear_class}.

We consider model~\eqref{eq:universal_eb} with $\mu_i \sim G$ for a smooth prior $G$ and \smash{$Z_{ij} \simiid F(\cdot \mid \mu_i)$}, where $F(\cdot \mid \mu_i)$ has density $f( \cdot \; - \;\mu_i)$ with respect to the Lebesgue measure and $f(\cdot)$ is a density that is symmetric around $0$.

We proceed with a heuristic discussion, that we will make rigorous in the formal statements below. Suppose for now that $f(\cdot)$ is sufficiently regular with Fisher Information,\footnote{$\;\,$In location families, the Fisher information is constant as a function of the location parameter $\mu_i$.}
\begin{equation}
    \label{eq:fisherinfo}
    \mathcal{I}(f) = \int \frac{f'(x)^2}{f(x)}\ind(f(x)>0)dx\, < \infty.
\end{equation}
Then, by classical parametric theory, we expect that $\B \cdot \bayesRisk{\B}{G, F} \to \mathcal{I}(f)^{-1}$ as $\B \to \infty$\footnote{Since $\B \to \infty$, the likelihood swamps the prior. Furthermore, we will place regularity assumptions on the prior to rule out the possibility of superefficiency.}. On the other hand, another classical result in the theory of robust statistics~\citep*{bennett1953asymptotic, jung1956linear, chernoff1967asymptotic, van2000asymptotic} states that for smooth location families, there exists an L-statistic (i.e., a linear combination of the order statistics) that is asymptotically efficient. By Theorem~\ref{theo:lin_regret}, we thus anticipate that the risk of Auroral is equal to $(1+o(1))\mathcal{I}(f)^{-1}/\B$.

In contrast, if we first summarize $\boldZ_i$ by $\bar{Z}_i$, then the best estimator we can possibly use is the posterior mean, $\EE{\mu_i \mid \bar{Z}_i}$.  However, for $\B$ large (so that the likelihood swamps the prior), $\EE{\mu_i \mid \bar{Z}_i} \approx \bar{Z}_i$, and so the risk will behave roughly as $\sigma^2/\B$, where $\sigma^2=\int f^2(x)dx$. We summarize our findings in the Corollary below, and provide the formal proofs in \supplementname~\ref{sec:loc_family_example}.

\begin{coro}[Smooth location families]
\label{coro:smoothloc}
Suppose that $G$ satisfies regularity  Assumption~\ref{assum:regular_prior} and $f(\cdot)$ satisfies regularity Assumption~\ref{assum:regular_loc} (where both assumptions are stated in \supplementname~\ref{subsec:regularity}). Then, in an asymptotic regime with $\B, \n \to \infty, \B^2/\n \to 0$, it holds that,
$$\frac{1}{\n}\sum_{i=1}^{\n} \EE{\p{ \mu_i - \muebdsl_i}^2} \bigg/ \frac{\mathcal{I}(f)^{-1}}{\B} \to 1,\;\; \frac{1}{\n}\sum_{i=1}^{\n} \EE{\p{ \mu_i - \hat{\mu}^{\text{Avg}}_i}^2} \bigg/ \frac{\sigma^2}{\B} \to 1 \text { as } \n \to \infty,$$
where $\hat{\mu}^{\text{Avg}}_i$ can be either the CC-L estimator $\hat{\mu}^{\text{CC-L}}_i$ or the Bayes estimator $\EE{\mu_i \mid \bar{Z}_i}$. 
\end{coro}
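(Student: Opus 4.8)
The plan is to establish the two ratios separately, pinning down the limiting constant in each case---$\mathcal{I}(f)^{-1}$ for Auroral and the noise variance $\sigma^2$ for the averaging estimators---by sandwiching the risk between an upper bound coming from Theorems~\ref{theo:eb_decomp}--\ref{theo:lin_regret} (or their CC-L analogue) and a matching Bayesian lower bound. For Auroral, Theorem~\ref{theo:lin_regret}(i) gives $\tfrac{1}{\n}\sum_{i=1}^{\n}\EE{(\mu_i-\muebdsl_i)^2}\le\linearRisk{\B-1}{G,F}+C_\n\B/\n$, and the regularity Assumptions~\ref{assum:regular_prior}--\ref{assum:regular_loc} (applied through the conditional-variance identity~\eqref{eq:conditional_variance_formula}) ensure that $C_\n$ can be taken to be a fixed constant, so the estimation term is $o(1/\B)$ precisely in the regime $\B^2/\n\to0$. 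Everything then reduces to showing $\linearRisk{\B-1}{G,F}=(1+o(1))\mathcal{I}(f)^{-1}/\B$.

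For the upper bound on $\linearRisk{\B-1}{G,F}$ I would invoke the classical theory of $L$-estimation in smooth symmetric location families~\citep{bennett1953asymptotic, chernoff1967asymptotic, van2000asymptotic}: for each $\B$ it produces an $L$-statistic $\hat\theta_L(\ordX_i)\in\linclass$ with symmetric weights summing to one---hence exactly location-unbiased for $\mu_i$---with $\B\,\VarInline{\hat\theta_L(\ordX_i)\mid\mu_i}\to\mathcal{I}(f)^{-1}$, and Assumption~\ref{assum:regular_loc} supplies the moment control that upgrades this to convergence of the mean squared error, so $\linearRisk{\B-1}{G,F}\le\EE{(\mu_i-\hat\theta_L(\ordX_i))^2}=(1+o(1))\mathcal{I}(f)^{-1}/\B$. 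For the matching lower bound, $\linearRisk{\B-1}{G,F}\ge\bayesRisk{\B-1}{G,F}\ge\bayesRisk{\B}{G,F}$, and since $\muebdsl_i$ is a measurable function of $\boldZ$ while $\EE{\mu_i\mid\boldZ}=\EE{\mu_i\mid\boldZ_i}$ by independence across units, the Auroral risk is itself at least $\bayesRisk{\B}{G,F}$; the Bernstein--von Mises theorem for the location model~\citep{van2000asymptotic}, which under the smoothness of $G$ applies and rules out superefficiency, then gives $\B\,\bayesRisk{\B}{G,F}\to\mathcal{I}(f)^{-1}$. Combining, the Auroral risk is at most $(1+o(1))\mathcal{I}(f)^{-1}/\B$ and at least $\bayesRisk{\B}{G,F}=(1+o(1))\mathcal{I}(f)^{-1}/\B$, which is the first limit.

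For the averaging estimators I would handle $\EEInline{\mu_i\mid\bar{Z}_i}$ through the orthogonality of the posterior-mean residual, $\EE{(\mu_i-\EE{\mu_i\mid\bar{Z}_i})^2}=\EE{(\mu_i-\bar{Z}_i)^2}-\EE{(\bar{Z}_i-\EE{\mu_i\mid\bar{Z}_i})^2}$, where the first term equals $\sigma^2/\B$ exactly by conditional unbiasedness of $\bar{Z}_i$; the crux is that the second term is $o(1/\B)$, which I would extract from a quantitative Tweedie expansion---smoothness of $G$ yields $\bar{Z}_i-\EE{\mu_i\mid\bar{Z}_i}=(\sigma^2/\B)(\log g)'(\bar{Z}_i)+O_{\mathbb P}((1/\B)^2)=O_{\mathbb P}(1/\B)$, with Assumption~\ref{assum:regular_loc} and finiteness of the Fisher information of $g$ controlling the relevant moments---so $\EEInline{\mu_i\mid\bar{Z}_i}$ has risk $(1+o(1))\sigma^2/\B$. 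For CC-L, running the argument of Theorems~\ref{theo:eb_decomp}--\ref{theo:lin_regret} with the summary $\bar{X}_i(j)$ in place of $\ordX_i(j)$ bounds its risk by that of the averaged oracle $\tfrac1\B\sum_j\mathrm{BLP}(Y_i(j)\mid\bar{X}_i(j))$ plus an $o(1/\B)$ estimation term; since these best linear predictors are affine with $j$-independent coefficients, the averaged oracle is simply affine in $\tfrac1\B\sum_j\bar{X}_i(j)=\bar{Z}_i$, and a direct computation---using that $\bar{X}_i(j)$ equals $\mu_i$ plus mean-zero noise of variance $\sigma^2/(\B-1)$ that is independent of $\mu_i$ in a location family---shows its risk is $(1+o(1))\sigma^2/\B$; a reverse-triangle-inequality argument then transfers both the upper and the lower bound from the averaged oracle to CC-L.

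The bookkeeping above is routine; the genuine obstacles are the two sharp $\B\to\infty$ limits underpinning it. The first is upgrading the asymptotic efficiency of the optimal $L$-statistic to a statement about its finite-$\B$ mean squared error, and proving $\B\,\bayesRisk{\B}{G,F}\to\mathcal{I}(f)^{-1}$ while excluding superefficiency; both rest delicately on the regularity hypotheses collected in Assumptions~\ref{assum:regular_prior}--\ref{assum:regular_loc} (uniform integrability of the relevant squared errors, positivity and smoothness of the densities, and so on). The second is the quantitative Tweedie estimate $\EE{(\bar{Z}_i-\EE{\mu_i\mid\bar{Z}_i})^2}=o(1/\B)$, whose proof requires smoothness of the marginal law of $\bar{Z}_i$---and hence of the prior---together with control of the non-Gaussian correction to Tweedie's formula.
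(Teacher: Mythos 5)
Your overall architecture (sandwich each risk between an upper bound from Theorem~\ref{theo:lin_regret} with an explicit member of $\linclass$ and a matching Bayes lower bound) is the right one, and your treatment of the upper bounds essentially coincides with the paper's: the paper makes your appeal to ``classical $L$-estimation theory'' concrete by constructing the $L$-statistic with weights $h(j/\B)$, $h(u)=-\ell''(F^{-1}(u))/\mathcal{I}(f)$, and verifying $\int_0^1 h=1$ and the covariance identity that yields variance $\mathcal{I}(f)^{-1}/(\B-1)+o(1/\B)$ --- this is where the H\"older condition in Assumption~\ref{assum:regular_loc} actually gets used, and it is the one substantive computation your sketch defers entirely to a citation. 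The genuine gap, however, is in the two lower bounds, which you yourself flag as unresolved ``obstacles.'' The paper's mechanism for both is the van Trees inequality: the regularity conditions on $G$ (compact support, density vanishing at the endpoints, finite $\mathcal{I}(g)$) are tailored precisely so that $\bayesRisk{\B}{G,F}\geq 1/(\B\,\mathcal{I}(f)+\mathcal{I}(g))$ holds non-asymptotically, which immediately gives $\liminf \B\,\bayesRisk{\B}{G,F}\geq \mathcal{I}(f)^{-1}$ with no superefficiency discussion and no Bernstein--von Mises machinery. For the $\bar{Z}_i$-based estimator the paper combines van Trees with Johnson's Fisher-information CLT ($\mathcal{I}(U_{i,\B})\to 1$ for the normalized sum), so that $\EE{(\mu_i-\EE{\mu_i\mid\bar{Z}_i})^2}\geq 1/(\B\sigma^{-2}(1+o(1))+\mathcal{I}(g))$. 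Your alternative --- the exact orthogonality identity plus a quantitative Tweedie expansion showing $\EE{(\bar{Z}_i-\EE{\mu_i\mid\bar{Z}_i})^2}=o(1/\B)$ --- is not merely ``bookkeeping'' left to the reader: Tweedie's formula is exact only for Gaussian noise, the non-Gaussian correction at finite $\B$ is not controlled by Assumption~\ref{assum:regular_loc}, and $(\log g)'$ is unbounded near $t_1,t_2$ precisely because $g$ vanishes there, so the moment bound $\EE{((\log g)'(\bar{Z}_i))^2}=O(1)$ you need is itself delicate. As written, the decisive estimate is asserted rather than proved, and the proposed route is materially harder than the one the hypotheses were designed for.

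A second, smaller gap concerns the CC-L lower bound. Your plan is to transfer the lower bound from the averaged population best-linear-predictor to $\hat{\mu}^{\text{CC-L}}_i$ by a reverse triangle inequality in $L^2$; this requires a \emph{moment} bound of the form $\EE{((\hat{\beta}(j)-\lambda)\bar{X}_i(j))^2}=o(1/\B)$, where $\hat{\beta}(j)$ is a ratio of empirical sums correlated with $\bar{X}_i(j)$, and controlling second moments of such ratios (whose denominators can be small) is exactly the painful part. The paper sidesteps this by proving only an in-probability statement, $\sqrt{\B}\max_j|\hat{\beta}(j)-1|=o_{\mathbb P}(1)$ --- this is where the hypothesis $\B^2/\n\to 0$ enters, via a union bound over $j$ and Chebyshev --- and then deducing $\sqrt{\B}(\hat{\mu}^{\text{CC-L}}_1-\mu_1)\Rightarrow\mathcal{N}(0,\sigma^2)$ and applying Fatou's lemma, which suffices because a lower bound only needs a one-sided $\liminf$. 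If you pursue your $L^2$ transfer you will need additional moment assumptions or a truncation argument that your sketch does not supply.
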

Recalling that $\mathcal{I}(f)^{-1} \leq \sigma^2$ with equality when $f(\cdot)$ is the Gaussian density, we see that Auroral adapts\footnote{
This adaptivity is perhaps expected, in light of existing 
theory on semiparametric efficiency in location families. 
For example, it is 
known that even for $\n=1$ one can asymptotically (as $\B\to 
\infty$) match the variance of the parametric maximum likelihood estimator in symmetric location families, even without 
precise knowledge of $F$~\citep{stein1956efficient, bickel1993efficient}.
However, the simulations of Section~\ref{subsec:first_illustration} demonstrate that Aurora adapts to the unknown likelihood already for $\B=10$, 
while semiparametric efficiency results are truly asymptotic in $\B$, requiring an initial nonparametric density estimate. 
} to the unknown density $f(\cdot)$ and outperforms any estimator that first averages $\boldZ_i$.

What about location families that are not regular? Below we give an example, namely the Rectangular location family with $f(\cdot) = U[-B,B]$, $B>0$ in which a similar conclusion to Corollary~\ref{coro:smoothloc} holds. The advantage of Auroral is even more pronounced in this case ($O(K^{-2})$ risk for Auroral, versus $O(K^{-1})$ for any method that first averages $\boldZ_i$).

\begin{coro}[Rectangular location family] 
\label{coro:rectangular_loc}
Suppose $f(\cdot)$ is the uniform density on $[-B,B]$ for a $B>0$, that $G$ satisfies the regularity assumption~\ref{assum:regular_prior} in \supplementname~\ref{subsec:regularity} and that  $K,N\to \infty, \, K^3/N\to 0$. Then,
$$\limsup_{\n \to \infty}\cb{\B \cdot \p{\frac{1}{\n}\sum_{i=1}^{\n} \EE{\p{ \mu_i - \muebdsl_i}^2} \bigg /\frac{1}{\n}\sum_{i=1}^{\n} \EE{\p{ \mu_i - \hat{\mu}^{\text{Avg}}_i}^2}}} \leq 6.$$
\end{coro}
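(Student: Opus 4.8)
\emph{Strategy.} Write $R_\n := \frac{1}{\n}\sum_{i=1}^{\n}\EE{(\mu_i - \muebdsl_i)^2}$ and $D_\n := \frac{1}{\n}\sum_{i=1}^{\n}\EE{(\mu_i - \hat{\mu}^{\text{Avg}}_i)^2}$, and set $\sigma^2 := \VarInline{Z_{ij}\cond\mu_i} = B^2/3$. The bracketed quantity equals $\B R_\n/D_\n = (\B^2 R_\n)/(\B D_\n)$, so it suffices to show $\limsup_{\n}\B^2 R_\n \leq 6\sigma^2$ and $\liminf_{\n}\B D_\n \geq \sigma^2$; the $\limsup$ of the ratio is then at most $6\sigma^2/\sigma^2 = 6$. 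The first bound exploits that the rectangular family admits a very accurate ($1/\B^2$-rate) $L$-statistic for $\mu_i$, namely the midrange; the second exploits that any estimator built from the per-unit average $\bar Z_i$ alone (as both candidates for $\hat{\mu}^{\text{Avg}}$ are, essentially) loses this advantage, because given $\bar Z_i$ the relevant conditional Fisher information is only of order $\B/\sigma^2$.

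\emph{Numerator.} The midrange $m(x) = \tfrac12 x^{(1)} + \tfrac12 x^{(\B-1)}$ belongs to $\linclass$. Since $f$ is symmetric on $[-B,B]$, $m(\ordX_i)$ is conditionally unbiased for $\mu_i$ and, this being a location family, $\VarInline{m(\ordX_i)\cond\mu_i}$ does not depend on $\mu_i$; a direct order-statistics computation gives $\VarInline{m(\ordX_i)\cond\mu_i} = 2B^2/(\B(\B+1)) = 6\sigma^2/(\B(\B+1))$, so $\linearRisk{\B-1}{G, F} \leq 6\sigma^2/(\B(\B+1))$. Next, since each replicate lies within $B$ of $\mu_i$ we have $X_i^{(\B-1)} - B \leq \mu_i \leq X_i^{(1)} + B$ a.s., so conditionally on $\ordX_i$ the parameter $\mu_i$ lies in an interval of length at most $2B$; by Popoviciu's inequality (a distribution on an interval of length $\ell$ has variance $\leq \ell^2/4$) we get $\VarInline{\mu_i\cond\ordX_i} \leq B^2$, and with \eqref{eq:conditional_variance_formula} (using $\sigma_i^2 = \sigma^2$) this yields $\Var{Y_i\cond\ordX_i} \leq 4\sigma^2$ a.s. Applying Theorem~\ref{theo:lin_regret}(i) with $C_\n = 4\sigma^2$ gives $R_\n \leq 6\sigma^2/(\B(\B+1)) + 4\sigma^2\,\B/\n$, hence $\B^2 R_\n \leq 6\sigma^2\,\B/(\B+1) + 4\sigma^2\,\B^3/\n \to 6\sigma^2$ because $\B^3/\n \to 0$. (This is exactly why part (i), with constant $C_\n$, is used rather than part (ii), and why the hypothesis is $\B^3/\n\to 0$.)

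\emph{Denominator.} I would treat the two candidates separately. For $\hat{\mu}^{\text{Avg}}_i = \EE{\mu_i\cond\bar Z_i}$: let $g_\B$ be the density of $\bar Z_i - \mu_i$, the scaled sum of $\B$ iid $U[-B,B]$ variables, which for $\B\geq 3$ is a spline of positive degree and has finite Fisher information $\mathcal{I}(g_\B)$. The van Trees (Bayesian Cram\'er--Rao) inequality gives $D_\n = \EE{(\mu_i - \EE{\mu_i\cond\bar Z_i})^2} \geq 1/(\mathcal{I}(g_\B) + \mathcal{I}(G))$, with $\mathcal{I}(G)<\infty$ by Assumption~\ref{assum:regular_prior}; the Fisher-information central limit theorem gives $\mathcal{I}(g_\B) = (1+o(1))\,\B/\sigma^2$, so $\B D_\n \geq (1-o(1))\sigma^2$. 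For $\hat{\mu}^{\text{Avg}}_i = \hat{\mu}^{\text{CC-L}}_i$: the CC-L procedure fits ordinary least squares with only two free coefficients per held-out replicate, so the argument of Theorem~\ref{theo:lin_regret}(i) bounds its in-sample error against the affine oracle $\frac{1}{\B}\sum_{j=1}^{\B}\ell_j^*(\bar X_i(j))$ — with $\ell_j^*$ the best linear predictor of $\mu_i$ on $\bar X_i(j)$ — by $O(1/\n) = o(1/\B)$; a Cauchy--Schwarz argument (analogous to the proof of Theorem~\ref{theo:eb_decomp}) then shows $D_\n$ is at least $(1-o(1))$ times the risk of that oracle. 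A direct computation using $\VarInline{\bar X_i(j)\cond\mu_i} = \sigma^2/(\B-1)$ and $\tau^2 := \VarInline{\mu_i} \in (0,\infty)$ (finite and positive under Assumption~\ref{assum:regular_prior}) shows the oracle has slope $1-O(1/\B)$ and risk $\sigma^2/\B + O(1/\B^2)$, whence again $\B D_\n \geq (1-o(1))\sigma^2$.

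\emph{Obstacle.} Combining the two bounds gives $\limsup_{\n}\B R_\n/D_\n \leq 6\sigma^2/\sigma^2 = 6$. The numerator is routine once the containment $\mu_i\in[X_i^{(\B-1)}-B,\,X_i^{(1)}+B]$ is observed; the main obstacle is the denominator. One must verify that the Fisher-information CLT applies to the $U[-B,B]$ summands (it does — a sum of three of them already has finite Fisher information), that van Trees is valid under the prior regularity of Assumption~\ref{assum:regular_prior}, and — most delicately — that the passage from the finite-sample CC-L estimator to its affine oracle can be carried out as a two-sided bound, so that it yields a genuine \emph{lower} bound on $D_\n$ rather than only the upper bounds furnished by Theorem~\ref{theo:eb_decomp}.
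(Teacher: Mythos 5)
Your overall strategy coincides with the paper's: bound the numerator by $\limsup_{\n}\B^2 R_\n \leq 6\sigma^2 = 2B^2$ via the midrange, and the denominator from below by $\liminf_\n \B D_\n \geq \sigma^2 = B^2/3$. The numerator argument is essentially verbatim the paper's (midrange $T_i=(X_i^{(1)}+X_i^{(\B-1)})/2$ unbiased with $\VarInline{T_i\cond\mu_i}=2B^2/(\B(\B+1))$, then Theorem~\ref{theo:lin_regret}(i) and $\B^3/\n\to 0$); your explicit $C_\n=4\sigma^2$ via the containment $\mu_i\in[X_i^{(\B-1)}-B,\,X_i^{(1)}+B]$ and Popoviciu is a nice, slightly sharper version of the paper's ``$\VarInline{Y_i\cond\ordX_i}\leq C$ by compact support.'' The $\EE{\mu_i\cond\bar Z_i}$ branch of the denominator also matches the paper exactly, including the key observation that the Fisher-information CLT of Johnson applies because a convolution of uniforms already has finite Fisher information.

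The one genuine gap is in your CC-L lower bound, and you have correctly identified where it is. The difficulty is that the in-sample error term of Theorem~\ref{theo:lin_regret} controls $\frac{1}{\n}\EE{\Norm{\projX Y - \projX m^*}_2^2}$, i.e.\ the distance of the OLS fit to the \emph{random sample projection} of the regression function, not to the fixed population affine oracle $\ell^*$. An upper bound of this type cannot by itself certify that $\hat{\mu}^{\text{CC-L}}_i$ is within $o(1/\B)$ in $L^2$ of a \emph{fixed} comparator, which is what your Cauchy--Schwarz step needs in order to transfer the oracle's risk downward to $D_\n$. Closing this requires a separate control of the fluctuation of the estimated coefficients around their population values at scale $o(\B^{-1/2})$, uniformly over the $\B$ held-out replicates. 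This is precisely what the paper does instead (Supplement~\ref{subsubsec:ccl_loc_proof}): it shows $\sqrt{\B}\max_{j}\abs{\hat\beta(j)-1}=o_{\mathbb P}(1)$ by Chebyshev bounds and a union bound over $j$ (this is where $\B^2/\n\to 0$ enters), concludes via Slutsky that $\sqrt{\B}(\hat{\mu}^{\text{CC-L}}_1-\mu_1)\Rightarrow\mathcal{N}(0,\sigma^2)$, and then applies Fatou's lemma to obtain $\liminf_\n \B\,\EE{(\hat{\mu}^{\text{CC-L}}_1-\mu_1)^2}\geq\sigma^2$. Your route can likely be repaired along the same lines, but as written the passage from the Theorem~\ref{theo:lin_regret}-style upper bound to a two-sided bound is not justified, so the CC-L half of the denominator is incomplete.
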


\section{Empirical performance in simulations}
\label{sec:empirical_simulations}

In this section, we study the empirical performance of Aurora and competing empirical Bayes algorithms in three scenarios: homoskedastic location families (Section~\ref{subsec:first_illustration}), heteroskedastic location families (Section~\ref{subsec:heterosk_sims}) and a heavy-tailed likelihood (Section~\ref{subsec:pareto}).

\subsection{Homoskedastic location families} 
\label{subsec:first_illustration}

We start by empirically studying the location family problem from Section~\ref{subsec:loc_family} for $\B=10$ replicates and $\n=10^4$ units. We first generate the means 
$\mu_i$ 
from one of two possible priors $G$, parameterized by a simulation parameter $A$: the normal prior $\mathcal{N}(0.5,A)$ and the three-point discrete prior that assigns equal probabilities to $\cb{-\sqrt{3A/2},\,\, 0,\,\, \sqrt{3A/2}}$. Both prior 
distributions have variance $A$.

We then generate the replicates $Z_{ij}$ around each 
mean $\mu_i$ from one of 
three location families: Normal, Laplace, or 
Rectangular. The parameters of these distributions are 
chosen so that the noise variance is $\sigma^2 = \Var{Z_{ij} \mid \mu_i}= 4$.

We compare eight estimators of $\mu_i$: 
\begin{itemize}[leftmargin=*,wide]
    \item \textbf{Aurora-type methods}: Auroral, Aurora-$k$NN (Aur-$k$NN) with $k$ chosen by leave-one-out cross-validation from the set $\cb{1,\dotsc,1000}$ (as described in \supplementname~\ref{subsec:auroraknn_crossval}) and CC-L (described in example~\ref{exam:normal_normal}).
    \item \textbf{Standard estimators of location:} The mean $\bar{Z}_i$, the median and the midrange (that is, $\hat \mu_i = (\max_j\cb{Z_{ij}}+\min_j\cb{Z_{ij}})/2$).
    \item \textbf{Standard empirical Bayes estimators applied to the averages $\bar{Z}_i$:} James-Stein (positive-part) shrinking towards $\sum_{i=1}^{\n}\bar{Z}_i/\n$ (which we provide with oracle knowledge of  $\sigma^2=4$) and the nonparametric maximum likelihood estimator (NPMLE)  of~\citet{koenker2014convex} (as implemented in the REBayes package~\citep{koenker2017rebayes} in the function ``GLmix''), which is a convex programming formulation of the estimation scheme of~\citet{kiefer1956consistency} and \citet{jiang2009general}. For the NPMLE we estimate the standard deviation for each unit by the sample standard deviation $\hat{\sigma}_i^2$ over its replicates and use the working approximation $\bar{Z}_i \mid \mu_i \stackrel{\cdot}{\sim} \mathcal{N}(\mu_i,\hat{\sigma}_i^2/\B)$.
\end{itemize}

\begin{figure}[!ht]
    \centering
    \includegraphics[width=\linewidth]{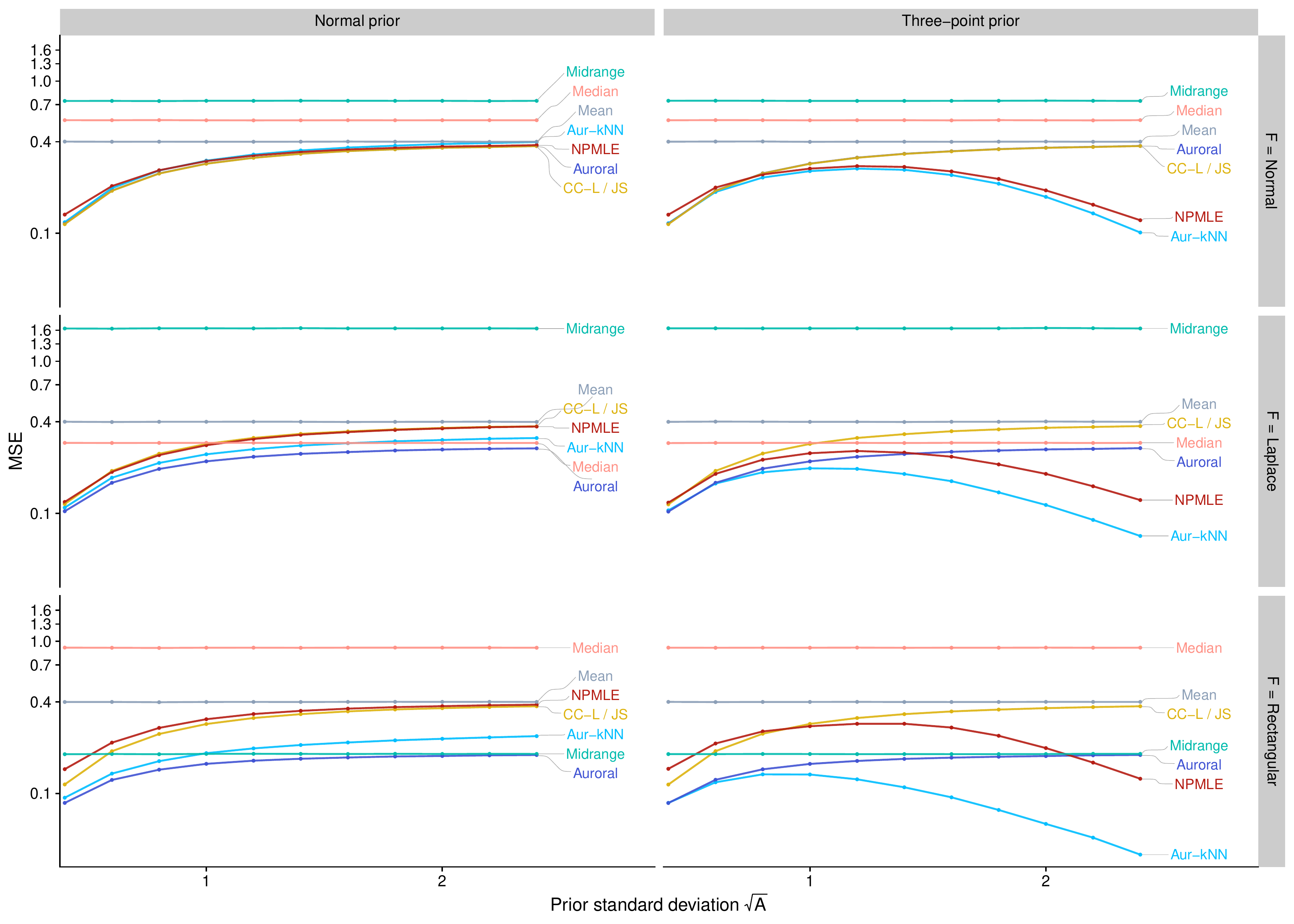}
    \caption{\textbf{Homoskedastic location families:} 
    The MSE of the 8 estimators, as a function of the prior standard deviation. 
    Each column represents a different prior distribution: 
    normal (left) and a three-point distribution (right). 
    Each row represents a different location family 
    for the likelihood (Normal, Laplace, Rectangular).}
    \label{fig:homoskedastic_simulation}
\end{figure}

The results\footnote{Throughout this section we calculate the mean squared error by averaging over 100 Monte Carlo replicates.} are shown in Figure~\ref{fig:homoskedastic_simulation}. The standard location estimators have constant mean squared error (MSE) in all panels, since they do not make use of the prior. 

We discuss the case of a Normal prior first: here the MSE of all methods is non-decreasing in $A$. Auroral closely matches the best estimator for every $A$ and likelihood. In the case of Normal likelihood, James-Stein (with oracle knowledge of $\sigma^2$), CC-L and Auroral perform best,\footnote{CC-L and James-Stein perform so similarly in the simulations of this subsection that they are indistinguishable in all panels of Figure~\ref{fig:homoskedastic_simulation} with the difference of their MSEs smaller than $10^{-3}$ in all cases. In the first panel (Normal prior and likelihood) Auroral is also indistinguishable from CC-L and James-Stein.} followed closely by Aurora-$k$NN and the NPMLE.
The standard location estimators are competitive when the prior is relatively uninformative (i.e., 
$A$ is large). Among these, the mean performs best for the Normal likelihood, the median for the Laplace likelihood and the midrange for the Rectangular likelihood.

The three component prior highlights a case in which non-linear empirical Bayes shrinkage can be helpful. Here, Aurora-$k$NN performs best across all settings, closely followed by the NPMLE in the case of the Normal likelihood\footnote{Even for the Normal likelihood, the assumptions of the implemented NPMLE are not fully satisfied, since we use estimated standard deviations $\hat{\sigma}_i^2$ for each unit.}. The NPMLE is also the second most competitive method for the Laplace likelihood with large prior variance $A$. The behavior of all other methods tracks closely with their behavior under a Normal prior.

\begin{figure}
    \centering
    \includegraphics[width=\linewidth]{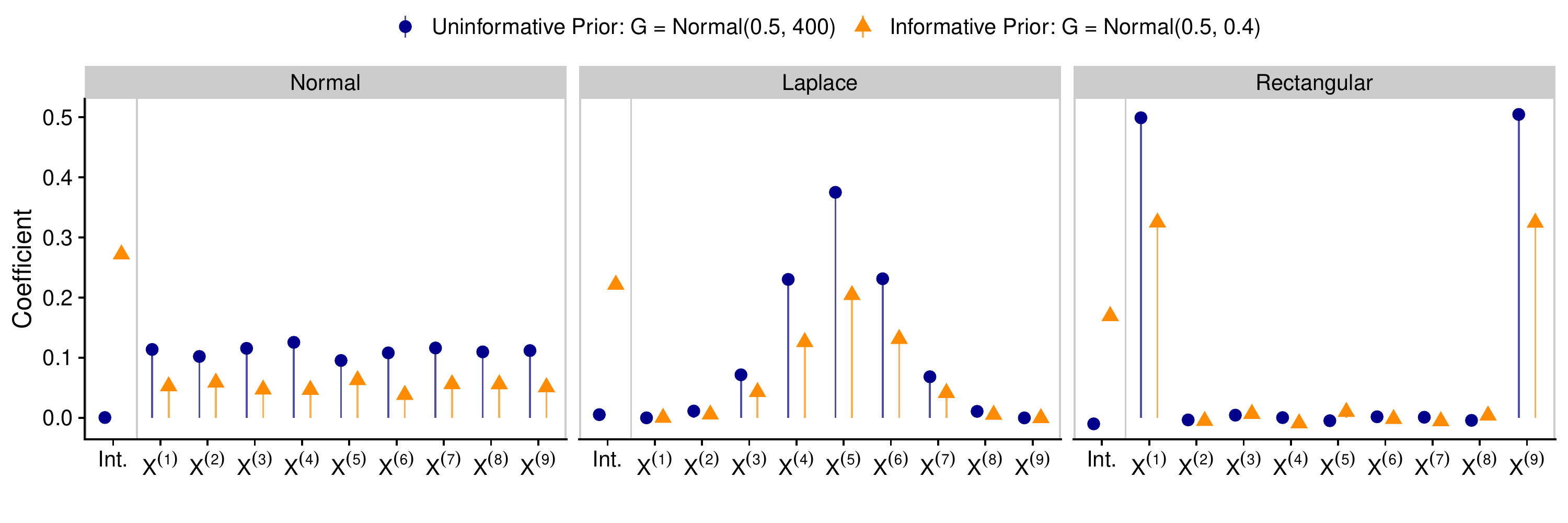}
    \caption{{\small\textbf{The coefficients of the intercept and the order
    statistics in the linear regression model 
    for $\hat m$ in Auroral:} The 
    colors represent different choices of prior $G$ (defined in the legend), while 
    the panels represent different choices of likelihood $F$. The coefficients shown have been averaged over the held-out replicate $j$ of the Auroral algorithm (Table~\ref{tab:algorithm}).}}
    \label{fig:location}
\end{figure}

One may at this point wonder: How do the Auroral weights (coefficients) $\hat\beta$ in 
equation~\eqref{eq:linear_predictor} look like? In Figure~\ref{fig:location}, we show these weights from a single replication (with $\n=2 \cdot 10^5$ and $\B=10$) for each of the three likelihoods considered above and for two Normal priors.
First, we focus on the points in blue, which correspond to 
an uninformative prior ($G=\mathcal{N}(0.5, 400)$). When the likelihood $F$ is Normal, the Auroral weights are roughly constant and 
equal to $1 / 9$. In other words, $\muebdsl_{i,j}$, that is the Auroral fit with held-out replicate $j$, is (approximately) the 
sample mean $\bar X_i(j)$ of $\ordX_i(j)$.
When the likelihood $F$ is Laplace, the Auroral weights pick out the median 
$X_i^{(5)}(j)$ and a few order statistics around it.
When the likelihood $F$ is Rectangular, Auroral assigns approximately
$1/2$ weight each to the minimum and the maximum and $0$ weight 
to all of the other order statistics. In other words, 
$\muebdsl_{i,j}$ is (approximately) the midrange of $\ordX_i(j)$.
Notice that Auroral did not know the likelihood $F$ in any of these examples. Rather, it 
adaptively learned an appropriate summary from the 
data.

Next, we examine the difference between using informative 
versus uninformative priors $G$. When the prior is 
informative ($G=\mathcal{N}(0.5,0.4)$, orange in Figure~\ref{fig:location}), Auroral automatically 
learns a non-zero intercept, which is determined by the 
prior mean, and the remaining weights 
are shrunk towards zero.

\subsection{Heteroskedastic location families}
\label{subsec:heterosk_sims}
In our second simulation setting, we study location families where $\sigma_i^2$ is also random and so we find ourselves in the heteroskedastic location family problem. Again we benchmark Auroral,  Aurora-$k$NN ($k$ chosen by leave-one-out cross-validation from the set $\cb{1,\dotsc,1000}$), CC-L, the Gaussian NPMLE and the sample mean. We also consider two estimators which have been proposed specifically for the heteroskedastic Normal problem, the SURE (Stein's Unbiased Risk Estimate) method of~\citet{xie2012sure} that shrinks towards the grand mean and the GL (Group-linear) estimator of~\citet{weinstein2018group}. We apply these estimators to the averages $\bar{Z}_i$. Both of these estimators have been developed under the assumption that the analyst has exact knowledge of $\sigma_i^2$; so we provide them with this oracle knowledge (SURE (or.) and GL (or.) --- the other methods are not provided this information). Furthermore, we apply the Group-linear method that uses the sample variance $\hat{\sigma}_i^2$ calculated based on the replicates. 

We use three simulations, inspired by simulation settings a), c) and f) of~\citet{weinstein2018group}: in all three simulations we let $\n=10000, \B=10$. First we draw \smash{$\bar{\sigma}^2_i \sim U[0.1, \bar{\sigma}^2_{\text{max}}]$}, where $\bar{\sigma}^2_{\text{max}}$ is a simulation parameter that we vary. Then for the first setting we draw \smash{$\mu_i \sim \mathcal{N}(0,0.5)$}, while for the last two settings we let \smash{$\mu_i = \bar{\sigma}_i^2$}.~\citet{weinstein2018group} use the latter as a model of strong mean-variance dependence. The methods that have access to $\bar{\sigma}_i^2$ can in principle predict perfectly (i.e., the Bayes risk is equal to $0$). Finally we draw \smash{$Z_{ij} \mid \mu_i, \sigma_i^2 \sim F(\cdot \mid \mu_i, \sigma_i^2)$} where $\sigma_i^2 = \bar{\sigma}_i^2 \cdot \B$ and $F$ is either the Normal location-scale family (first two settings) or the Rectangular location-scale family (last setting). 

\begin{figure}[!ht]
    \centering
    \includegraphics[width=\linewidth]{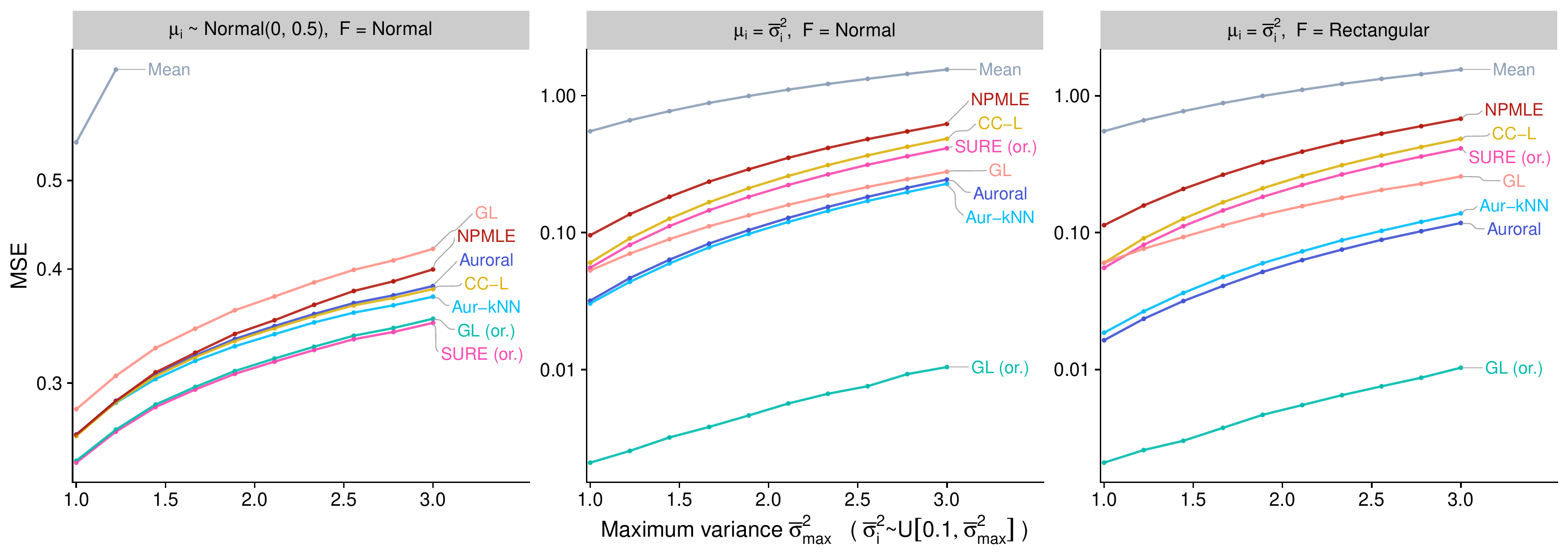}
    \caption{\textbf{Heteroskedastic location families: } Data are generated as follows: First  $\bar{\sigma}^2_i \sim U[0.1, \bar{\sigma}^2_{\text{max}}]$, with $\bar{\sigma}^2_{\text{max}}$ varying on the x-axis. Then $\mu_i \sim \mathcal{N}(0,0.5)$ (in the left panel) or $\mu_i=\bar{\sigma}_i^2$. Finally \smash{$Z_{ij} \mid \mu_i,\sigma_i^2 \sim F(\cdot \mid \mu_i, \sigma_i^2)$}, $j=1,\dotsc,\B$ where $\B=10$, $\sigma_i^2 = \bar{\sigma_i}^2 \B$  and $F(\cdot \mid \mu_i, \sigma_i^2)$ is a Normal location-scale family (first two panels) or Rectangular (last panel). The $y$-axis shows the mean squared error of the estimation methods.}
    \label{fig:heteroskedastic_simulation}
\end{figure}

Results from the simulations are shown in Figure~\ref{fig:heteroskedastic_simulation}. The oracle SURE and oracle Group-linear estimators perform best in the first panel and oracle Group-linear strongly outperforms all other methods in the last two panels. This is not surprising, since oracle Group-linear has oracle access to $\sigma_i^2$ and the method was developed for precisely such settings with strong mean-variance relationship. Among the other methods, Auroral and Aurora-$k$NN remain competitive. In the first panel they match CC-L, while in the last two panels they outperform Group-linear with estimated variances. We point out that Auroral outperforms CC-L in the second panel, despite the Normal likelihood. This is possible, because the mean is no longer sufficient in the heteroskedastic problem.

\subsection{A Pareto example}
\label{subsec:pareto}
For our third example we consider a Pareto likelihood, which is heavy tailed and non-symmetric. Concretely, we let $\mu_i \sim G= U[2, \mu_{\text{max}}]$ (with $\mu_{\text{max}}$ a varying simulation parameter) and $F(\cdot \mid \mu_i)$ is the Pareto distribution with tail index $\alpha=3$ and mean $\mu_i$. We compare Auroral, Aurora-$k$NN (Aur-$k$NN) with $k$ chosen by leave-one-out cross-validation from the set $\cb{1,\dotsc,100}$ (as described in \supplementname~\ref{subsec:auroraknn_crossval}), CC-L, the sample mean and median, as well as the maximum likelihood estimator for the Pareto distribution (assuming the tail index is unknown). For this example we also vary $(\B,\n) = (20, 10^4), (100,10^4), (100,10^5)$. The results are shown in Figure~\ref{fig:pareto_simulation}. Throughout all settings, Auroral performs best, followed by Aurora-$k$NN. All methods improve as $\B$ increases. Auroral and Aurora-$k$NN also improve as $\n$ increases. 

\begin{figure}
    \centering
    \includegraphics[width=\linewidth]{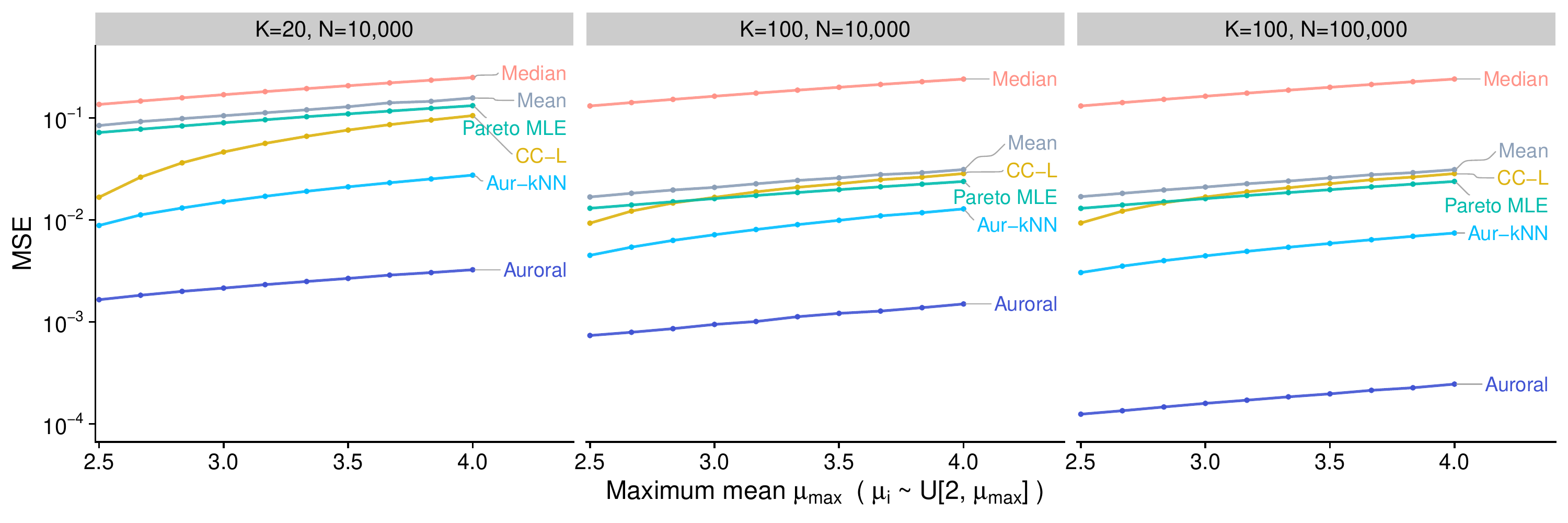}
    \caption{\textbf{Pareto distribution example: } Data are generated as $\mu_i \sim U[2, \mu_{\text{max}}]$, with $\mu_{\text{max}}$ varying on the x-axis and $Z_{ij} \mid \mu_i \sim F(\cdot \mid \mu_i)$, where $F(\cdot \mid \mu_i)$ is the Pareto distribution with mean $\mu_i$ and tail index $\alpha=3$. The panels correspond to different choices for $\B$ and $\n$. The $y$-axis shows the mean squared error of the estimation methods.}
    \label{fig:pareto_simulation}
\end{figure}

\section{Application: Predicting treatment effects at Google}
\label{subsec:cpc}

In this section, we apply Auroral to a problem encountered at Google and other technology companies ---estimating treatment effects at a fine-grained level. All major technology firms run large randomized controlled experiments, often called A/B tests, to study interventions and to evaluate policies~\citep{tang2010overlapping, kohavi2013online,kohavi2017online, athey2019economists}. Estimation of the average treatment effect from such an experiment (e.g., comparing a metric between treated users and control users) is a well-understood statistical task~\citep{wager2016high, athey2017econometrics}. In the application we consider below, instead, interest lies in estimating treatment effects on fine-grained groups -- a separate treatment effect for each of thousands of different online advertisers. In this setting, Empirical Bayes techniques, such as Auroral, can stabilize estimates by sharing information across advertisers.

To apply Auroral for the task of fine-grained estimation of treatment effects, we require replicates. Interestingly, the data from the technology firm we work with is routinely organized and analyzed using `streaming buckets', i.e., the experiment data is divided into $\B$ chunks of approximately equal size that are called streaming buckets. The buckets correspond to (approximately) disjoint subsets of users, partitioned at random, and so data across different buckets are independent to sufficient approximation. We refer to~\citet{chamandy2012estimating} for a detailed description of the motivation for using streaming buckets to deal with the data's scale and structure, as well as statistical and computational issues involved in the analysis of streaming bucket data. For the purpose of our application, each streaming bucket corresponds to a replicate in~\eqref{eq:universal_eb}.

\begin{table}
\centering
\begin{tabular}{l|c c}
&  \multicolumn{2}{c}{$\%\text{ change in }\MSE$} \\
Estimator  & $\Delta\text{CPC}$ &  $\Delta\text{CTR}$\\
\hline 
Aggregate    & Baseline & Baseline \\
Mean &  $-8.1\%\;\phantom{0}(\pm 9.8\%)$ & $-1.0\%\phantom{0}\;(\pm 1.2\%)$ \\
CC-L with Aggregate        & $-49.7\%\;(\pm 5.8\%)$ & $-34.0\% \;(\pm 0.9\%)$\\
CC-L with Mean & $-48.1\%\;(\pm 6.0\%)$ & $-33.7\% \;(\pm 0.9\%)$\\
Auroral & $\mathbf{-63.7\%}\; (\pm 4.2\%)$ & $\mathbf{-37.1\%}\;(\pm 0.9\%)$\\
\end{tabular}
\caption{\textbf{Empirical performance on the advertiser-level estimation problem:} Percent change in mean squared error for estimating change in cost-per-click ($\Delta\text{CPC}$) and change in click-through rate ($\Delta\text{CTR}$) compared to the aggregate estimate ($\pm$ standard errors).}
\label{tab:cpc}
\end{table}

We model the statistical problem of our application as follows. The metric of interest is the cost-per-click (CPC), which is the price a specific advertiser pays for an ad clicked by a user. The goal of the experiment is to estimate the change in CPC before and after treatment. We have data for each advertiser ($\n$ advertisers) and two treatment arms; control ($w=0$) and treated ($w=1$). The data is further divided into $\B$ buckets. For each advertiser $i=1,\dotsc,\n$, bucket $j=1,\dotsc,\B$, and treatment arm $w = 0, 1$, we record the total number of clicks $N_{ijw} \in \mathbb{N}_{>0}$ and the total cost of the clicks $A_{ijw}$. We define $\text{CPC}_{ijw} := A_{ijw}/N_{ijw}$, the empirical cost-per-click for advertiser $i$ in bucket $j$ and treatment arm $w$.
\color{black}
In this application, the advertisers are the units and the buckets are the replicates. Each observation is $Z_{ij} =\text{CPC}_{ij1} - \text{CPC}_{ij0}$, and the goal is to estimate the treatment effect 
\begin{equation}
    \mu_i := \EE{\text{CPC}_{ij1} - \text{CPC}_{ij0} \cond \mu_i, \alpha_i}.
\end{equation}
$\mu_i, \alpha_i$ from model~\eqref{eq:universal_eb} capture advertiser-level idiosyncrasies. We consider the following estimation strategies:

\begin{enumerate}[wide=\parindent]
    \item The aggregate estimate: we pool the data in all buckets and then compute the difference in CPCs, i.e., $\hat{\mu}_i = \sum_{j=1}^{\B} A_{ij1} \big / \sum_{j=1}^{\B} N_{ij1} - \sum_{j=1}^{\B} A_{ij0}\big / \sum_{j=1}^{\B} N_{ij0}$. 
    \item The mean of the $Z_{ij}$, i.e., $\hat{\mu}_i = \frac{1}{\B}\sum_{j=1}^{\B} Z_{ij}$.
    \item The CC-L estimator, wherein for each $j$ we use ordinary least squares to regress $Y_i(j)$ onto the aggregate estimate in the other  $\B-1$ buckets, i.e., onto $ \sum_{j' \neq j} A_{ij'1} \big / \sum_{j'\neq j} N_{ij'1} - \sum_{j' \neq j} A_{ij'0}\big / \sum_{j'\neq j} N_{ij'0}$.
    \item The CC-L estimator, wherein for each $j$ we use ordinary least squares to regress $Y_i(j)$ onto the mean of the other $\B-1$ buckets, i.e., onto  $\frac{1}{\B-1}\sum_{j'\neq j} Z_{ij'}$
    \item The Auroral estimator that (for each $j$) regresses $Y_i(j)$ on the order statistics  $\ordX_i(j)$.
\end{enumerate}

We empirically evaluate the methods as follows: we use data from an experiment running at Google for one week, retaining only the top advertisers based on number of clicks, resulting in $\n >50,000$. The number of replicates is equal to $\B=4$. As ground truth, we use the aggregate estimate based on experiment data from the 3 preceding and 3 succeeding weeks. Then, we compute the mean squared error of the estimates (calculated from the one week) against the ground truth and report the percent change compared to the aggregate estimate.

The results are shown in Table~\ref{tab:cpc}. The table also shows the results of the same analysis applied to a second metric, the change in click-through rate (CTR),  which is the proportion of times that an ad by a given advertiser which is shown to a user is actually clicked. We observe that the improvement in estimation error through Auroral is substantial. Furthermore, Auroral outperforms both variants of CC-L, which in turn outperform estimators that do not share information across advertisers (i.e., the aggregate estimate and the mean of the $Z_{ij}$).

\section{Conclusion}
\label{sec:conclusion}

We have presented a general framework for constructing 
empirical Bayes estimators from $\B$ noisy replicates of $\n$ units. The basic idea of our method, which we term 
Aurora, is to leave one replicate out and 
regress this held-out replicate on the remaining 
$\B -1$ replicates.  We then repeat this process over all choices of held-out replicate and average the results. We have shown that if the $\B -1$ 
replicates are first 
sorted, then even linear regression produces 
results that are competitive with the best 
methods, which usually make parametric assumptions, 
while our method is fully nonparametric.

We conclude by mentioning some direct extensions of Aurora
that are suggested by its connection to regression.

\paragraph{More powerful regression methods:} In this paper, 
we have used linear regression and $k$-Nearest Neighbor regression 
to learn \smash{$\hat m(\cdot)$}.  But we can go 
further;
for example, we could use isotonic regression~\citep{guntuboyina2018nonparametric} based on a partial order on \smash{$\ordX_i$}.
Or we could combine linear and isotonic 
regression by considering single index models with 
non-decreasing link function~\citep{balabdaoui2016least}, i.e., predictors of the form $\hat m(x^{(\cdot)}) = t(\alpha^\top x^{(\cdot)})$, where $\Norm{\alpha}_2=1$ and $t$ is an unknown non-decreasing function. Other possibilities include recursive partitioning \citep{breiman1984classification, zeileis2008model}, in which linear regression is fit on the leaves of a tree, or even random forests aggregated from such trees~\citep{friedberg2018local, kunzel2019linear}.

\paragraph{More general targets:} We have only
considered estimation of \smash{$\mu_i = \EEInline{Z_{ij} \cond \mu_i, \alpha_i}$} in~\eqref{eq:universal_eb}. As pointed out in \citet{johns1957non, johns1986fully} this naturally extends to parameters \smash{$\theta_i = \EEInline{h(Z_{ij})\cond \mu_i, \alpha_i}$} where $h$ is a known function. 
The only modification needed to estimate $\theta_i$ is that 
we fit a regression model to learn \smash{$\EEInline{ h(Y_i) \cond \ordX_i}$} instead. We may further extend Vernon Johns' observation to arbitrary U-statistics. Concretely, given \smash{$r < \B, r \in \mathbb N$} and a fixed function \smash{$h: \RR^r \to \RR$}, we can use Aurora to estimate \smash{$\theta_i = \EE{h(Z_{i1}, Z_{i2}, \dotsc, Z_{ir}) \cond \mu_i, \alpha_i}$}. In this case we need to hold out $r$ replicates to form the response. For example, with $r=2$ and \smash{$h(z_1,z_2)=(z_1-z_2)^2/2$}, we can estimate the conditional variance 
\smash{$\sigma^2_i = \Var{Z_{ij}\cond \mu_i, \alpha_i}$}. Denoising the variance with empirical Bayes is an important problem that proved to be essential for the analysis of genomic data~\citep{smyth2004linear,lu2016variance}. However, these 
papers assumed a parametric form of the likelihood, 
while Aurora would permit fully nonparametric estimation of 
the variance parameter.

\paragraph{External covariates:} Model~\eqref{eq:universal_eb} posits {\em a priori} exchangeability of the $\n$ units. However, in many applications, domain experts also have access to side-information  $\zeta_i$ about each unit. Hence, multiple authors \citep{fay1979estimates, tan2016steinized, kou2017optimal, banerjee2018adaptive, coey2019improving, ignatiadis2019covariate} have developed methods that improve mean estimation by utilizing information in the $\zeta_i$. Aurora can be directly extended to accommodate such side information. Instead of regressing 
$Y_i$ on \smash{$\ordX_i$}, one regresses $Y_i$ on 
both \smash{$\ordX_i$} and $\zeta_i$.

\vspace{1cm}
\subsection*{Software}
We provide reproducible code for our simulation results in the following Github repository:
\url{https://github.com/nignatiadis/AuroraPaper}.
A package implementing the method is available at \url{https://github.com/nignatiadis/Aurora.jl}. The package has been implemented in the Julia programming language~\citep{bezanson2017julia}.

\subsection*{Acknowledgments}
We are grateful to Niall Cardin, Michael Sklar and Stefan Wager 
for helpful discussions and comments on the manuscript. We would also like to thank the Associate Editor and the anonymous reviewers for their insightful and helpful suggestions.

\bibliographystyle{abbrvnat}
\bibliography{eb_datasplit}

\newpage

\begin{appendix}
\setcounter{page}{1}
\renewcommand{\thepage}{S\arabic{page}} 
\setcounter{table}{0}
\renewcommand{\thetable}{S\arabic{table}}
\setcounter{equation}{0}
\renewcommand{\theequation}{S\arabic{equation}}
\setcounter{rema}{0}
\renewcommand{\therema}{S\arabic{rema}}
\setcounter{prop}{0}
\renewcommand{\theprop}{S\arabic{prop}}

\section{Proofs for Section~\ref{sec:method}}
\subsection{Proof for Proposition~\ref{prop:averaged_jackknife}}
\label{subsec:proof_jackknife}

\begin{proof}
For the LHS inequality, it suffices to note that $\AvgBayesRule(\boldZ_i)$ is a function of $\boldZ_i$, and so is dominated by the Bayes rule in terms of mean squared error. For the RHS, we proceed as follows. First, 
$$
\begin{aligned}
\bayesRisk{\B-1}{G,F} &= \EE{ \p{\mu_i - m^*(\ordX_i)}^2} \\
&= \EE{\Var{\mu_i \mid \ordX_i}} \\
&\stackrel{(i)}{=} \Var{\mu_i} - \Var{\EE{\mu_i \mid \ordX_i}} \\
&= \Var{\mu_i} - \VarInline{m^*(\ordX_i)}.
\end{aligned}
$$
$(i)$ follows from the law of total variance and the other equalities are a consequence of the definition  \smash{$m^*(\ordX_i) = \EEInline{\mu_i \mid \ordX_i}$}. Next,
$$
\begin{aligned}
\AvgBayesRisk{\B-1}{G, F} &= \EE{ \p{\mu_i - \AvgBayesRule(\boldZ_i)}^2} \\
&= \Var{ \mu_i - \AvgBayesRule(\boldZ_i)} \\
& = \Var{\mu_i} + \Var{ \AvgBayesRule(\boldZ_i)} -2 \Cov{\mu_i,\, \AvgBayesRule(\boldZ_i)}
\end{aligned}
$$
By linearity,
$$
\begin{aligned}
\Cov{\mu_i, \AvgBayesRule(\boldZ_i)} &= \Cov{\mu_i,\, \frac{1}{\B}\sum_{j=1}^{\B} m^*\p{\ordX_i(j)}} \\
&= \frac{1}{\B}\sum_{j=1}^{\B}\Cov{\mu_i,\, m^*\p{\ordX_i(j)}} \\
&= \Cov{\mu_i,\, m^*(\ordX_i)} \\
&= \Var{m^*(\ordX_i)}.
\end{aligned}
$$
In the last line we used again the fact that \smash{$m^*(\ordX_i) = \EEInline{\mu_i \mid \ordX_i}$}.
We now proceed with the key step of our argument:
$$
\begin{aligned}
\Var{ \AvgBayesRule(\boldZ_i)} &\stackrel{(i)}{=} \EE{ \Var{\AvgBayesRule(\boldZ_i) \mid \mu_i, \alpha_i}} + \Var{\EE{\AvgBayesRule(\boldZ_i) \mid \mu_i, \alpha_i}} \\ 
&\stackrel{(ii)}{\leq} \frac{\B-1}{\B}\EE{\Var{m^*(\ordX_i) \mid \mu_i, \alpha_i}} + \Var{\EE{m^*(\ordX_i) \mid \mu_i, \alpha_i}} \\
&\stackrel{(iii)}{=}  \Var{m^*(\ordX_i)} -  \frac{1}{\B}\EE{\Var{m^*(\ordX_i) \mid \mu_i, \alpha_i}}.
\end{aligned}
$$
$(i)$ and $(iii)$ follow from the law of total variance. For $(ii)$ we used two results: for the right part, we used the fact that $\EE{\AvgBayesRule(\boldZ_i) \mid \mu_i, \alpha_i}=\EEInline{m^*(\ordX_i) \mid \mu_i, \alpha_i}$. For the left part, as announced in the main text, we used Theorem 2 of~\citet{efron1981jackknife}.

We conclude by combining the preceding displays.
\end{proof}
\subsection{Proof for Theorem~\ref{theo:eb_decomp}}
\begin{proof}
We first obtain a decomposition for a single coordinate $i$.
For simplicity, we suppress the dependence on $G$ and $F$  and first prove the result for $\hat\mu_i := \muebds_{i,j}$, i..e, Aurora based on a single held-out response replicate $j \in\cb{1,\dotsc,\B}$.
\begin{align}
    \EE{\p{\mu_i - \hat{\mu}_i}^2} &= \EE{\p{\mu_i - \EE{\mu_i \cond \boldZ} + \EE{\mu_i \cond \boldZ} - \hat{\mu}_i}^2} \nonumber\\
    &= \EE{\p{\mu_i - \EE{\mu_i \cond \boldZ}}^2} +  \EE{\p{\EE{\mu_i \cond \boldZ} - \hat{\mu}_i}^2} \nonumber\\
    &= \EE{\p{\mu_i - \EE{\mu_i \cond \boldZ_i}}^2} +  \EE{\p{\EE{\mu_i \cond \boldZ_i} - \hat{\mu}_i}^2} \nonumber\\
    &= \bayesRisk{\B}{G, F} + \EE{\p{\EE{\mu_i \cond \boldZ_i} - \hat{\mu}_i}^2} \label{eq:standard_bayes_mse_decomp}
\end{align}
To see why the cross-term $\EE{\left(\mu_i - \EE{\mu_i \cond \boldZ} \right) \left( \EE{\mu_i \cond \boldZ} - \hat{\mu}_i \right)}$ vanishes in the second equality above, 
observe that the factor 
$\left(\EE{\mu_i \cond \boldZ} - \hat{\mu}_i\right)$ is 
measurable with respect to $\boldZ$. 
Therefore, the expectation conditional on $\boldZ$ 
is zero (almost surely), so the unconditional expectation 
(i.e., the cross-term) is also zero.

Next, we examine the quantity inside the expectation in 
the second term of \eqref{eq:standard_bayes_mse_decomp}. By adding and subtracting $m^*(\ordX_i) := \EE{\mu_i \cond \ordX_i}$ and using the inequality $(a + b)^2 \leq 2a^2 + 2b^2$, we obtain
\begin{align}
 \p{\EE{\mu_i \cond \boldZ_i} - \hat{\mu}_i}^2 &= \p{\p{\EE{\mu_i \cond \boldZ_i} - m^*(\ordX_i)} + \p{ m^*(\ordX_i) - \hat{\mu}_i}}^2 \nonumber \\
 &\leq 2\p{\EE{\mu_i \cond \boldZ_i} - m^*(\ordX_i)}^2 + 2 \p{ m^*(\ordX_i) - \hat{\mu}_i}^2. \label{eq:second_term_decomp}
\end{align}
Now, we take the expectation of \eqref{eq:second_term_decomp}. 
For the first term, we can repeat 
the argument from \eqref{eq:standard_bayes_mse_decomp}
to obtain 
\begin{align*}
    \bayesRisk{\B-1}{G, F} &= \EE{\p{\mu_i - m^*(\ordX_i)}^2} \\
    &= \EE{\p{\mu_i - \EE{\mu_i \mid \boldZ}}^2} + \EE{\p{\EE{\mu_i \mid \boldZ} - m^*(\ordX_i)}^2} \\
    &=\bayesRisk{\B}{G, F} + \EE{\p{\EE{\mu_i \mid \boldZ} - m^*(\ordX_i)}^2}.
\end{align*}
which can be rearranged to show that 
$$ 2\EE{\p{\EE{\mu_i \cond \boldZ_i} - m^*(\ordX_i)}^2} = 2(\bayesRisk{\B-1}{G, F} - \bayesRisk{\B}{G, F}). $$
To summarize, we have the following result 
for a single coordinate $i$:
\begin{align*}
    \EE{(\mu_i - \hat\mu_i)^2} \leq \bayesRisk{\B}{G, F} + 2(\bayesRisk{\B-1}{G, F} - \bayesRisk{\B}{G, F}) + 2 \EE{\p{ m^*(\ordX_i) - \hat{\mu}_i}^2}.
\end{align*}
Finally, we average over all $i$ to obtain the desired result:
\begin{align*}
    \frac{1}{\n} \sum_{i=1}^{\n} \EE{(\mu_i - \hat\mu_i)^2} &\leq \bayesRisk{\B}{G, F} + 2(\bayesRisk{\B-1}{G, F} - \bayesRisk{\B}{G, F}) + 2 \frac{1}{\n} \sum_{i=1}^\n \EE{\p{ m^*(\ordX_i) - \hat{\mu}_i}^2} \\
    &= \bayesRisk{\B}{G, F} + 2(\bayesRisk{\B-1}{G, F} - \bayesRisk{\B}{G, F}) + 2 \Err\p{m^*, \hat{m}}.
\end{align*}
The proof for $\hat\mu_i := \muebds_{i}$ is the same verbatim, if we replace $m^*(\ordX_i)$ by $\AvgBayesRule(\boldZ_i)$, $\bayesRisk{\B-1}{G,F}$ by $\AvgBayesRisk{\B-1}{G,F}$ and $\Err\p{m^*, \hat{m}}$ by $\overline{\Err}\p{m^*, \hat{m}}$.
\end{proof}

\section{Aurora with Nearest Neighbors}

\subsection{Proof for Theorem~\ref{theo:johns}}

\begin{proof}
Throughout the proof we assume that ties among the \smash{$\ordX_i(j)$} happen with probability $0$. We avoid ties as follows. As in Chapter 6 of~\citet{gyorfi2006distribution}, we use $k$NN to regress $Y_i(j)$ on \smash{$(\ordX_i(j), U_i(j))$}, where $U_i(j)$ are independently drawn from $U[0,\varepsilon]$ for some small, fixed $\varepsilon > 0$. The regression function remains the same, since almost surely \smash{$\EEInline{Y_i(j) \cond \ordX_i(j)} = \EEInline{Y_i(j) \cond \ordX_i(j), U_i(j)}$}. We will however suppress the $U_i(j)$ from our notation and assume ties do not occur for the \smash{$\ordX_i(j)$}; otherwise the proofs go through verbatim by replacing \smash{$\ordX_i(j)$} by \smash{$(\ordX_i(j), U_i(j))$} in all subsequent arguments.

We first claim that:
\begin{equation}
\label{eq:eb_decomp_no_twos}
\begin{aligned}
&\abs{\frac{1}{\n}\sum_{i=1}^{\n} \EE{\p{\mu_i - \muebds_i}^2} -  \AvgBayesRisk{\B-1}{G, F}} \\
\leq \;\; &\Err(m^*, \hat{m})^{1/2}\p{\Err(m^*, \hat{m})^{1/2} +   2\AvgBayesRisk{\B-1}{G, F}^{1/2}}. 
\end{aligned} 
\end{equation}
To see this, first note that:
\begin{equation}
\p{\mu_i - \hat{\mu}_i}^2=\p{\mu_i - \AvgBayesRule(\boldZ_i)}^2 +  \p{ \AvgBayesRule(\boldZ_i) - \hat{\mu}_i}^2 +  2\p{\mu_i - \AvgBayesRule(\boldZ_i)}\p{ \AvgBayesRule(\boldZ_i) - \hat{\mu}_i}.
\label{eq:decomp_knn}
\end{equation}
By the Cauchy–Schwarz inequality:
$$\abs{\EE{\p{\mu_i - \AvgBayesRule(\boldZ_i)}\p{ \AvgBayesRule(\boldZ_i) - \hat{\mu}_i}}} \leq \EE{\p{\mu_i - \AvgBayesRule(\boldZ_i)}^2}^{1/2} \EE{\p{ \AvgBayesRule(\boldZ_i) - \hat{\mu}_i}^2}^{1/2}.$$
By definition it holds that $\EEInline{\p{\mu_i - \AvgBayesRule(\boldZ_i)}^2} = \AvgBayesRisk{\B-1}{G,F}$. By~\eqref{eq:in_sample_error} and permutation equivariance (with respect to permutations of the units) of the $k$NN estimator $\hat{m}(\cdot)$ (when there are no ties) it holds that:
$$\EE{\p{ \AvgBayesRule(\boldZ_i) - \hat{\mu}_i}^2} = \overline{\Err}\p{m^*, \hat{m}} \leq \Err\p{m^*, \hat{m}}.$$
Taking expectations in~\eqref{eq:decomp_knn}, using the above results and rearranging, we conclude with the claim in~\eqref{eq:eb_decomp_no_twos}. Returning to the main proof, in view of~\eqref{eq:eb_decomp_no_twos}, it suffices to show that:
\begin{equation}
    \label{eq:err_to_zero}
    \limsup_{\n \to \infty} \Err(m^*, \hat{m}) = 0.
\end{equation}
By exchangeability of units in~\eqref{eq:universal_eb} and permutation equivariance of the $k$NN estimator $\hat{m}(\cdot)$ without ties, \eqref{eq:err_to_zero} is equivalent to proving that:
$$\limsup_{\n \to \infty} \EE{ \p{ m^*(\ordX_1) - \hat{m}(\ordX_1)}^2} = 0.$$
We prove this by instead considering the $k$NN estimator $\hat{m}_{\n, -1}(\cdot)$ applied to all observations except the first; however still with the same number of nearest neighbors $k=k_{\n}$ (instead of $k_{\n-1}$). Then:
{\small $$ \EE{ \p{ m^*(\ordX_1) - \hat{m}(\ordX_1)}^2} \leq 2\EE{ \p{ m^*(\ordX_1) - \hat{m}_{\n,-1}(\ordX_1)}^2} + 2\EE{ \p{ \hat{m}_{\n,-1}(\ordX_1) - \hat{m}(\ordX_1)}^2}.$$}
The first of these terms converges to $0$ by existing results on universal consistency in nonparametric regression, concretely Theorem 6.1 of~\citet{gyorfi2006distribution}. We need to show that the second term also converges to $0$. Let $j^*$ be the $k$-th NN of $\ordX_1$ among $\cb{\ordX_2, \dotsc, \ordX_{\n}}$. Then
$$\hat{m}(\ordX_1) - \hat{m}_{\n,-1}(\ordX_1)  = \frac{Y_1 - Y_{j^*}}{k}. $$
Consequently:
$$\EE{\p{\hat{m}_{\n,-1}(\ordX_1) - \hat{m}(\ordX_1)}^2} \leq \frac{ 2\EEInline{Y_1^2}}{k^2} + \frac{2\EEInline{Y_{j^*}^2}}{k^2}. $$
The first term goes to $0$, since we assumed that $\EE{Y_1^2} = \EE{Z_{ij}^2} < \infty$ and $k = k_{\n} \to \infty$. We handle the second term as follows:

$$
\begin{aligned}
\EE{Y_{j^*}^2} &= \EE{ \sum_{i=2}^{\n}Y_i^2\ind\p{ \ordX_i  \text{ is the } k\text{-th NN of } \ordX_1 \text{ in } \cb{\ordX_2, \dotsc, \ordX_{\n}}}} \\
&\leq \EE{ \sum_{i=2}^{\n}Y_i^2\ind\p{ \ordX_i  \text{ is among the } k\text{ NNs of } \ordX_1 \text{ in } \cb{\ordX_2, \dotsc, \ordX_{\n}}}} \\
&= \sum_{i=2}^{\n} \EE{ Y_i^2\ind\p{ \ordX_i  \text{ is among the } k\text{ NNs of } \ordX_1 \text{ in } \cb{\ordX_2, \dotsc, \ordX_{\n}}}} \\
&\stackrel{(i)}{=} \sum_{i=2}^{\n} \EE{ Y_1^2\ind\p{ \ordX_1  \text{ is among the } k\text{ NNs of } \ordX_i \text{ in } \cb{\ordX_1, \dotsc, \ordX_{\n}}\setminus\cb{\ordX_i}}}\\
&=  \EE{ Y_1^2 \sum_{i=2}^{\n}\ind\p{ \ordX_1  \text{ is among the } k\text{ NNs of } \ordX_i \text{ in } \cb{\ordX_1, \dotsc, \ordX_{\n}}\setminus\cb{\ordX_i}}}\\
& \stackrel{(ii)}{\leq}   \EE{Y_1^2 \gamma k} = \gamma k \EE{Y_1^2}.
\end{aligned}
$$
We elaborate on two steps: $(i)$ holds by exchangeability of the $(\ordX_i, Y_i),i=1,\dotsc,\n$. $(ii)$ holds for a constant $\gamma < \infty$ that depends only on the dimension $\B$ by Corollary 6.1. of~\citet{gyorfi2006distribution}. To conclude we divide by $k^2$ and the result follows since we assumed that $k \to \infty$ and $\EE{Y_1^2} < \infty$.

\end{proof}
\newpage
\subsection{Tuning of Aurora-$k$NN by cross-validation}
\label{subsec:auroraknn_crossval}

Aurora-$k$NN is a specific instantiation of the general Aurora algorithm (Table~\ref{tab:algorithm}) with a specific choice of Step 3, which takes the form of leave-one-out cross-validated $k$-Nearest Neighbor regression (see e.g.,~\citet{azadkia2019optimal} and references therein). The box below presents the algorithm in detail:

\LinesNotNumbered
\begin{algorithm}
\SetAlgorithmName{Aurora-$k$NN}{auroraknn}{List}
\caption{Aurora with supervised predictions based on $k$-Nearest Neighbor regression ($k$ chosen by leave-one-out cross-validation).}
\Input{$\boldZ_i$, $i=1,\dotsc,\n$: Replicated samples ($K$ replicates per unit). \\
$k_{\text{max}}$: Integer upper bound on number of nearest neighbors to consider.}
 \hrulealg
    \For{$j=1,\dotsc,K$}{
     \nextnr
    Split the replicates for each unit $i$, $\boldZ_i$,
    into $\boldX_i := (Z_{i1},\dotsc, Z_{i(j-1)}, Z_{i(j+1)}, \dotsc, Z_{i\B})$ and $Y_i := Z_{ij}$, as in \eqref{eq:xy_def}.\;
    \nextnr 
    For each $i$ order the values of $\boldX_i$ to obtain $\ordX_i$.\;
    \nextnr 
    Preprocess \smash{$\ordX_i, \; i=1,\dotsc,\n$} to facilitate nearest neighbor searches.\;
    \For{$i=1,\dotsc,\n$}{
    Let $o(i,1),\dotsc,o(i,k_{\text{max}}-1)$ be the indices of the $k_{\text{max}}-1$ nearest neighbors of $\ordX_i$, excluding $i$ and sorted by increasing distance (i.e., $\NormInline{\ordX_{i'} - \ordX_i}_2$ with $i' \in \cb{1,\dotsc,\n}\setminus\cb{i}$ is minimized by $i'=o(i,1)$). \;
    }
    \For{$k=1,\dotsc,k_{\text{max}}-1$}{
       Compute the leave-one-out cross-validation error of nearest neighbor regression with $k$ neighbors, i.e., $\text{LOO}(k) = \frac{1}{\n}\sum_{i=1}^{\n} \big( Y_i - \frac{1}{k}\sum_{i'=1}^k Y_{o(i,i')}\big)^2.$
    }
    Let $k^*_j \in \argmin_{k\in \cb{1,\dotsc,k_{\text{max}}}}\cb{\text{LOO}(k-1)}$, where $\text{LOO}(0) = \frac{1}{\n} \sum_{i=1}^{\n} Y_i^2$.\;
    \nextnr 
    Compute the $k$NN prediction with $k=k^*_j$ for all $i$, i.e.,
    $$\muebds_{i,j} := \frac{1}{k^*_j}\Bigg(Y_i + \sum_{i'=1}^{k^*_j-1} Y_{o(i,i')}\Bigg).$$
 }
  Return $\muebds_i := \frac{1}{\B}\sum_{j=1}^{\B} \muebds_{i,j}$ for all $i$.\;
\end{algorithm}
Two remarks are in order:
\begin{enumerate}
    \item A naive approach to finding all nearest neighbors in step 3 of the algorithm (say, for a fixed held-out response replicate $j$) has computational complexity $O(\n^2 \B)$. By preprocessing all ordered samples (at the beginning of step 3), this computational complexity may be decreased, especially for small $\B$. In our default implementation we preprocess the data using a $k$d-tree as implemented in the NearestNeighbors.jl~\citep{NearestNeighbors_jl} package when $\B \leq 12$,. For $\B > 12$ we use brute-force search of nearest neighbors.
    \item The leave-one-out calculation can be substantially sped up by reusing the computation from step $k$ when computing $\text{LOO}(k+1)$ via the following elementary identity:
    $$\frac{1}{k+1}\sum_{i'=1}^{k+1} Y_{o(i,i')} = \frac{1}{k+1} \p{ k \cdot \frac{1}{k}\sum_{i'=1}^{k} Y_{o(i,i')} \; + \; Y_{o(i,k+1)}}.$$
\end{enumerate}

\section{Proof for Auroral estimator (Theorem~\ref{theo:lin_regret})}
\label{sec:auroral_proof}
\begin{proof}
Throughout this proof we let \smash{$\projXj$} be the orthogonal projection operator onto the linear space spanned by the columns of \smash{$\ordX(j)$} and the ones vector $(1,\dotsc,1)^\top$. We also use vectorized notation, e.g., $\mu=(\mu_1,\dotsc,\mu_{\n})^\top$, $Y(j) = (Y_1(j),\dotsc,Y_{\n}(j))^\top$. With this notation it holds that $$\muebdsl_i = \frac{1}{\B}\sum_{j=1}^{\B}\muebdsl_{i,j},\;\;\; \muebdsl_{\cdot,j} = \projXj Y(j).$$
First, by Jensen's inequality:
$$
\begin{aligned}
\EE{\p{ \mu_i - \muebdsl_{i}}^2} = \EE{\cb{\frac{1}{\B}\sum_{j=1}^{\B}(\mu_i - \muebdsl_{i,j})}^2} \leq \EE{\frac{1}{\B}\sum_{j=1}^{\B}(\mu_i - \muebdsl_{i,j})^2}.
\end{aligned}
$$
Thus it suffices to bound  $\frac{1}{\n}\sum_{i=1}^{\n}\EE{(\mu_i - \muebdsl_{i,j})^2}$ for a fixed $j$. In doing so, we omit $j$ from the notation, e.g., we write $\ordX$ instead of $\ordX(j)$, $\projX$ instead of $\projXj$ and $Y$ instead of $Y(j)$. We also (with slight abuse of notation) write  \smash{$m^* = m^*(\ordX)$}. It then holds that:

\begin{equation*}
    \begin{aligned}
    &\frac{1}{\n}\sum_{i=1}^{\n} \EE{\p{\mu_i - \muebdsl_{i,j}}^2} \\ 
    =\;\; & \frac{1}{\n} \EE{ \Norm{ \mu - \muebdsl_{\cdot,j}}^2_2} \\
    =\;\; & \frac{1}{\n}\EE{\Norm{ \mu - \projX Y}^2_2} \\
    =\;\; &\frac{1}{\n}\EE{\Norm{\mu - \projX m^* + \projX m^* - \projX Y}^2_2}\\
    =\;\; &\frac{1}{\n}\cb{\EE{\Norm{\mu - \projX m^*}_2^2} +  \EE{\Norm{\projX m^* -\projX Y}_2^2} + 2\EE{ \p{\mu - \projX m^*}^\top\p{\projX m^* - \projX Y}}} \\ 
    =\;\; & \text{I} \;+\; \text{II} \; + \; \text{III}.
\end{aligned} 
\end{equation*}
It remains to bound the terms $\text{I}, \text{II}, \text{III}$. \\

\noindent \textbf{Bound on $\text{I}$:} We claim that $\text{I}\leq  \linearRisk{\B-1}{G, F}$. This holds since:
$$ \EE{\Norm{\mu - \projX m^*}_2^2 \cond \ordX} = \inf_{m \,\in\, \linclass} \EE{\Norm{\mu - m(\ordX)}_2^2 \cond \ordX}.$$
Thus:
$$
\begin{aligned}
\textrm{I} = \frac{1}{\n}\EE{\EE{\Norm{\mu - \projX m^*}_2^2 \cond \ordX}}
\leq  \inf_{m \,\in\, \linclass}\cb{ \frac{1}{\n} \EE{\Norm{\mu - m(\ordX)}_2^2}} 
= \linearRisk{\B-1}{G, F}.
\end{aligned}
$$

\noindent \textbf{Bound on $\text{II}$:}  
The argument here is similar to  results on fixed design linear regression, see e.g., Theorem 11.1 in~\citet{gyorfi2006distribution}. We have that:
$$
\begin{aligned}
\EE{\Norm{\projX m^* -\projX Y}_2^2}
&= \EE{ \Norm{ \projX\p{m^*(\ordX) - Y}}^2_2} \\
&= \EE{ \tr\p{ \projX \p{m^*(\ordX) - Y}\p{m^*(\ordX) - Y}^\top}} \\
&= \EE{  \tr\p{ \projX \EE{\p{m^*(\ordX) - Y}\p{m^*(\ordX) - Y}^\top \cond \ordX}}} \\
&= \EE{ \tr\p{ \projX \Var{Y \cond \ordX}}},
\end{aligned}
$$
where $\VarInline{Y \cond \ordX}$ is the $\n \times \n$ diagonal matrix with $i$-th diagonal entry equal to $\VarInline{Y_i \cond \ordX_i}$.  To conclude with the bound on  \smash{$\EEInline{\Norm{\projX m^* -\projX Y}_2^2}$}, we provide two separate bounds on \smash{$\EEInline{ \tr( \projX \VarInline{Y \cond \ordX})}$} according to whether Assumption (i) or (ii) of Theorem~\ref{theo:lin_regret} holds.

We start with (i). We note that $\VarInline{Y \cond \ordX} \preceq \max_i\{\VarInline{Y_i \cond \ordX_i}\} \cdot I_{\n}$ in the positive semi-definite order. By trace properties and since $\projX$ is an orthogonal projection matrix, it thus holds that:
$$\tr\p{ \projX \VarInline{Y \cond \ordX}} \leq \max_{i=1,\dotsc,\n} \VarInline{Y_i \cond \ordX_i} \cdot \B.$$ 
Integrating over $\ordX$ and using assumption (i), we conclude that:
$$\EE{\tr\p{ \projX \VarInline{Y \cond \ordX}}} \leq C_{\n} \cdot \B \;\; \Longrightarrow \;\; \text{II} \leq \frac{C_{\n} \cdot \B}{\n}.$$
We turn to (ii). Let $h_i = (\projX)_{ii}$, then:
$$
\begin{aligned}
\EE{ \tr\p{ \projX \VarInline{Y \cond \ordX}}} &= \sum_{i=1}^{\n} \EE{ h_i \VarInline{Y_i \cond \ordX_i}} \\
& \leq \sum_{i=1}^{\n} \EE{h_i^2}^{1/2} \EE{ \VarInline{Y_i \cond \ordX_i}^2}^{1/2} \\
& \leq \n \sqrt{\frac{\B}{\n}} \Gamma.
\end{aligned}
$$
For the penultimate inequality we used Cauchy–Schwarz and for the last inequality, we used the following two intermediate results: first, \smash{$\EEInline{\sum_{i=1}^{\n} h_i} = \EEInline{\tr(\projX)} \leq \B$} and so by symmetry $\EE{h_i} \leq \B/\n$ and $\EE{h_i^2} \leq \EE{h_i}\leq \B/\n$. Second,
$$\EE{ \VarInline{Y_i \cond \ordX_i}^2} \leq \EE{ \EE{Y_i^2 \cond \ordX_i}^2} \leq \EE{\EE{Y_i^4 \cond \ordX_i}} = \EE{Y_i^4} \leq \Gamma^2.$$
\noindent \textbf{Bound on $\text{III}$:} It remains to bound the cross-term $\text{III}$. We will show that $\text{III} \leq 0$. The crucial fact we use is that \smash{$\EEInline{Y \cond \mu, \ordX} = \EEInline{Y \cond \mu} = \mu$}:
\begin{equation*}
\begin{aligned}
\EE{ \p{\mu - \projX m^*}^\top\p{\projX m^* - \projX Y} \cond \ordX, \mu } &= \EE{ \p{\mu - \projX m^*}^\top\projX \p{m^* - Y} \cond \ordX, \mu} \\
&=  \p{\mu - \projX m^*}^\top\projX \p{m^* - \mu}  \\
&\stackrel{(*)}{=}  \p{\mu - m^*}^\top\projX \p{m^* - \mu} \\
&\stackrel{(**)}{\leq} 0.
\end{aligned}
\end{equation*}
Integrating over $\ordX, \mu$ we conclude. Let us  justify $(*), (**)$.\\
$(*)$ follows because $\p{m^* - \projX m^*}^\top\projX \p{m^* - \mu}=0$, since $\p{m^* - \projX m^*}$ is orthogonal to the space spanned by $\projX$. $(**)$ holds since for any vector $v$, it holds that $v^\top \projX v \geq 0$ by positive semidefiniteness of $\projX$.

The claim at the end of Theorem~\ref{theo:lin_regret} follows directly by Theorem~\ref{theo:eb_decomp} and the bounds we derived above for $\text{II}$, since:
$$\Err\p{m^*, \hat{m}} = \text{II} = \frac{1}{\n} \EE{\Norm{\projX m^* -\projX Y}_2^2}.$$
\end{proof}

\section{Normal prior and Homoskedastic Normal likelihood}
\label{sec:example_details}

We consider the more general case where $Z_{ij} \cond \mu_i \sim \mathcal{N}(\mu_i, \sigma^2)$. The results for the example may be recovered by taking $\sigma^2=1$. Throughout we assume that $\sigma^2, A > 0$.

\subsection{Oracle risks: Details for Example~\ref{exam:normal_normal_regret}} 
\label{subsec:oracle_risks}

The three oracle estimators are:
$$\EE{\mu_i \cond \boldZ_i} = \frac{A}{A+\sigma^2/\B} \bar{Z}_i,\;\; m^*(\ordX_i) = \frac{A}{A+\sigma^2/(\B-1)} \bar{X}_i,\;\;\AvgBayesRule(\boldZ_i) = \frac{A}{A+\sigma^2/(\B-1)} \bar{Z}_i,$$
where $\bar{Z}_i$ is the sample average of $Z_{i1},\dotsc,Z_{i\B}$.
They have the following oracle risks:
\begin{alignat*}{2}
&\bayesRisk{\B}{G,F} &&= \frac{A \sigma^2}{AK + \sigma^2}\\
&\bayesRisk{\B-1}{G,F} &&= \frac{A \sigma^2}{A(K-1) + \sigma^2}\\
& \AvgBayesRisk{\B-1}{G,F} &&= \frac{A \sigma^2}{AK + \sigma^2} + \p{\frac{A\sigma^2}{ (A\B+\sigma^2)(A(\B-1) + \sigma^2)}}^2\p{A + \frac{\sigma^2}{\B}}
\end{alignat*}
Furthermore, the RHS of Proposition~\ref{prop:averaged_jackknife}, equation~\eqref{eq:threebenchmarks}, holds with equality, i.e.,
$$
\begin{aligned}
&\AvgBayesRisk{\B-1}{G,F} = \frac{A \sigma^2}{A(K-1) + \sigma^2} - \EE{\VarInline{   m^*(\ordX_i) \cond \mu_i}}/\B,\\
\text{where }\;\; &\EE{\VarInline{   m^*(\ordX_i) \cond \mu_i}} = \p{\frac{A}{A+\sigma^2/(\B-1)}}^2  \cdot \frac{\sigma^2}{\B-1}.
\end{aligned}
$$

\begin{proof}
The first two calculations are standard, and we provide the details for $\AvgBayesRule$ and its risk only. First, by definition of $\AvgBayesRule$:
$$ \AvgBayesRule(\boldZ_i) = \frac{1}{\B}\sum_{j=1}^{\B}m^*\p{\ordX_i(j)} =  \frac{1}{\B}\sum_{j=1}^{\B}\frac{A}{A+\sigma^2/(\B-1)} \bar{X}_i(j) = \frac{A}{A+\sigma^2/(\B-1)} \bar{Z}_i.$$
To evaluate the risk, let us write $\lambda = A/(A+\sigma^2/(\B-1))$ and $\lambda^*=A/(A+\sigma^2/\B)$ (recall $\lambda^* \bar{Z}_i$ is the Bayes rule here). Then:

$$
\begin{aligned}
\EE{(\mu_i - \lambda \bar{Z}_i)^2} & = \EE{\p{(\mu_i  - \lambda^* \bar{Z}_i) + (\lambda^*-\lambda)\bar{Z}_i}^2} \\
&= \EE{\p{(\mu_i  - \lambda^* \bar{Z}_i}^2} + (\lambda^*-\lambda)^2 \EE{\bar{Z}_i^2} \\
&=\bayesRisk{\B}{G,F} + (\lambda^*-\lambda)^2 \p{A + \frac{\sigma^2}{\B}} \\ 
&= \bayesRisk{\B}{G,F}  + \p{\frac{A\sigma^2}{ (A\B+\sigma^2)(A(\B-1) + \sigma^2)}}^2\p{A + \frac{\sigma^2}{\B}}.
\end{aligned}
$$
Furthermore:
$$
\begin{aligned}
\EE{\VarInline{   m^*(\ordX_i) \cond \mu_i}}/\B &= \EE{ \p{\frac{A}{A+\sigma^2/(\B-1)}}^2 \Var{ \bar{X}_i \cond  \mu_i}}/\B \\ 
& = \p{\frac{A}{A+\sigma^2/(\B-1)}}^2  \cdot \frac{\sigma^2}{\B(\B-1)}.
\end{aligned}
$$

\end{proof}
\subsection{Data-driven risks: Details for  Example~\ref{exam:normal_normal}}

Recall that the Bayes risk $\bayesRisk{\B}{G,F}$ is equal to $\p{A \sigma^2}/\p{AK + \sigma^2}$. The upper bounds on the risk of the three estimators considered in this example are as follows:.

\paragraph{Auroral:} 
$$
\begin{aligned}
\frac{1}{\n}\sum_{i=1}^{\n} \EE{\p{ \mu_i - \muebdsl_i}^2} &\leq \bayesRisk{\B}{G, F} \, + \, 2\p{\AvgBayesRisk{\B-1}{G, F} - \bayesRisk{\B}{G, F}} \, +\, \frac{\B}{\n}\cdot\p{\sigma^2 + \frac{A \frac{\sigma^2}{\B-1}}{A+\frac{\sigma^2}{\B-1}}} \\
&= \bayesRisk{\B}{G, F} \, + \, O\p{ \frac{1}{\B^4}} \,+\, O\p{ \frac{\B}{\n}}.
\end{aligned}
$$
\paragraph{CC-L:} 
$$
\begin{aligned}
\frac{1}{\n}\sum_{i=1}^{\n} \EE{\p{ \mu_i - \hat{\mu}^{\text{CC-L}}}^2} &\leq \bayesRisk{\B}{G, F} \, + \, 2\p{\AvgBayesRisk{\B-1}{G, F} - \bayesRisk{\B}{G, F}} \, +\, \frac{2}{\n}\cdot\p{\sigma^2 + \frac{A \frac{\sigma^2}{\B-1}}{A+\frac{\sigma^2}{\B-1}}} \\
&= \bayesRisk{\B}{G, F} \, + \, O\p{ \frac{1}{\B^4}} \,+\, O\p{ \frac{1}{\n}}.
\end{aligned}
$$
\paragraph{James-Stein:} 

\begin{flalign*}
\frac{1}{\n}\sum_{i=1}^{\n} \EE{\p{ \mu_i - \hat{\mu}^{\text{JS}}_i}^2}&=  \bayesRisk{\B}{G, F} \; + \; \frac{1}{\n\B^2}\frac{2\B \sigma^4}{ \sigma^2 + A\B} && \\
&= \bayesRisk{\B}{G, F} \, + \, O\p{ \frac{1}{\n \B^2}}. &&
\end{flalign*}

\begin{proof}
We consider each estimator separately.\\

\noindent \textbf{Auroral:} The result follows from Theorem~\ref{theo:eb_decomp} along with the bound on $\Err\p{m^*, \hat{m}}$ derived in Theorem~\ref{theo:lin_regret}. To apply the latter, we need to compute $\VarInline{ Y_i \mid \ordX_i}$.

$$ \VarInline{ Y_i \mid \ordX_i} = \Var{ Y_i \mid \bX_i}=\Var{Y_i} - \frac{\Cov{Y_i, \widebar{X}_i}^2}{\VarInline{\widebar{X}_i}} = \sigma^2 +A - \frac{A^2}{A+\frac{\sigma^2}{\B-1}}.$$
Thus:
\begin{equation}
\label{eq:normal_normal_var}
\VarInline{ Y_i \mid \ordX_i}= \sigma^2 + \frac{A \frac{\sigma^2}{\B-1}}{A+\frac{\sigma^2}{\B-1}}.
\end{equation}

\noindent \textbf{CC-L:}  The derivation is similar to the result for Auroral above and is omitted. The main difference is that we solve a linear least squares problem with an intercept and a single regressor (instead of an intercept and $\B-1$ regressors). \\

\noindent \textbf{James-Stein:} The calculations are standard, e.g., see Chapter 1 of~\citet{efron2012large}.  First, using Stein's identity we get that:
$$ \EE{\Norm{\hat{\mu}^{JS} - \mu}_2^2} = \n\frac{\sigma^2}{\B} - \frac{\sigma^4}{\B^2} (\n-2)^2 \EE{\frac{1}{\Norm{\bZ}^2_2}}.$$
Now notice that $\Norm{\bZ}^2_2 \sim (\sigma^2/\B + A)\chi^2_{\n}$, where $\chi^2_{\n}$ is the $\chi^2$ distribution with $\n$ degrees of freedom, and so $\EE{(\n-2)(\sigma^2/\B+A)/\Norm{\bZ}^2_2} = 1$, which in turn yields:

$$ \EE{\Norm{\hat{\mu}^{JS} - \mu}_2^2} = \n\frac{\sigma^2}{\B} - \frac{\frac{\sigma^4}{\B^2} (\n-2)}{\frac{\sigma^2}{\B} +A} = \frac{\n \frac{\sigma^2}{\B} A}{\frac{\sigma^2}{\B} + A} + \frac{2\frac{\sigma^4}{\B^2}}{
\frac{\sigma^2}{\B} +A }.$$
Dividing by $\n$ we get the required claim.

\end{proof}

\subsubsection{Auroral and CC-L with a single held-out response}
We conclude this section by noting that we can compute the risk of Auroral and CC-L exactly, if we omit the averaging over $j$ in the Aurora algorithm. In particular:
$$
\begin{aligned}
&\frac{1}{\n}\sum_{i=1}^{\n} \EE{\p{ \mu_i - \muebdsl_{i,j}}^2}&=\;\; &\bayesRisk{\B-1}{G, F} \; &+ \;\frac{\B}{\n}\p{\sigma^2 - \frac{A \sigma^2}{\sigma^2 + A(\B-1)}}. \\
&\frac{1}{\n}\sum_{i=1}^{\n} \EE{\p{ \mu_i - \hat{\mu}^{\text{CC-L}}_{i,j}}^2}&=\;\; &\bayesRisk{\B-1}{G, F} \; &+ \; \frac{2}{\n}\p{\sigma^2 - \frac{A\sigma^2}{\sigma^2+ A(\B-1)}}.
\end{aligned}
$$

\begin{proof}

We start with Auroral. Throughout we use the same notation as in Section~\ref{sec:auroral_proof}. Further, we write $\hat{\mu}$ for $\muebdsl_{\cdot,j} = \projXj Y(j) = \projX Y$. First:

$$
\begin{aligned}
\EE{ \Norm{ \mu - \hat{\mu}}^2_2} &= \EE{ \Norm{ \mu - Y + Y- \hat{\mu}}^2_2} \\
&= \EE{ \Norm{Y-\hat{\mu}}^2_2} + \EE{\Norm{ Y-\mu}_2^2} - 2 \EE{(Y-\mu)^\top \p{Y-\hat{\mu}}}.
\end{aligned}
$$
Notice that $\EE{\Norm{ Y-\mu}_2^2} = \n\sigma^2$. By conditioning on $\mu, \ordX$, we get:
$$
\begin{aligned}
\EE{(Y-\mu)^\top \p{Y-\hat{\mu}}} &= \EE{(Y-\mu)^\top (I-\projX)Y} \\
&= \EE{ (Y-\mu)^\top (I-\projX)(Y-\mu)} \\
&= \EE{ \tr\p{ (I-\projX)\EE{(Y-\mu)(Y-\mu)^\top \cond \mu, \ordX}}} \\
& = (\n-\B)\sigma^2.
\end{aligned}
$$   
Here we used that $Y_i \mid \mu_i$ is homoskedastic. So as an intermediate result we conclude that:
$$\EEInline{ \Norm{ \mu - \hat{\mu}}^2_2} = \EEInline{ \Norm{Y-\hat{\mu}}^2_2} - (\n-2\B)\sigma^2.$$
For the remaining term we condition on $\ordX$ and observe that $\projX \EEInline{Y \cond \ordX} = \EEInline{Y \cond \ordX}$ and that $Y_i \mid \ordX_i$ is homoskedastic:
$$
\begin{aligned}
\EE{ \Norm{Y- \hat{\mu}}^2_2 \cond \ordX} &= \EE{ \Norm{ (I-\projX)Y}^2_2 \cond \ordX} \\
&=\EE{ \Norm{ (I-\projX)\p{Y - \EEInline{Y \cond \ordX}}}^2_2 \cond \ordX} \\ 
&= \EE{ \tr\p{ (I-\projX) \VarInline{Y \cond \ordX}} \cond \ordX} \\
&= (\n-\B)\VarInline{Y_1 \cond \ordX_1}.
\end{aligned}
$$
We conclude by using the expression for $\VarInline{ Y_i \mid \ordX_i}$ derived in~\eqref{eq:normal_normal_var}, by iterated expectation and by rearranging terms.  The result for CC-L can be derived in a similar way. We note that the formula for CC-L appears (with a typo) as Corollary 1 in~\citet{coey2019improving}.

\end{proof}

\section{Location families: Formal statements for Section~\ref{subsec:loc_family}}
\label{sec:loc_family_example}

\subsection{Regularity assumptions}
\label{subsec:regularity}

\begin{assum}[Regular prior]
\label{assum:regular_prior}
The prior distribution $G$ has compact support $[t_1,t_2]$, $t_1 < t_2$, and has density $g$ w.r.t.\ the Lebesgue measure. Furthermore, $g$ is absolutely continuous on $[t_1,t_2]$, satisfies $g(t_1)=g(t_2)=0$ and also has finite Fisher information:
$$\mathcal{I}(g) := \int \frac{ g'(x)^2}{g(x)}\ind(g(x)>0)dx \, < \,\infty.$$
\end{assum}

\begin{assum}[Regular location density]
\label{assum:regular_loc}
We assume that $F(\cdot \mid \mu_i)$ has Lebesgue density $f( \cdot - \mu_i)$ with $f(\cdot)$ a fixed density and write (with some abuse of notation) $F(t) = \int_{(-\infty,t]} f(x) dx$ for the corresponding distribution. We assume that:
\begin{enumerate}[(i), leftmargin=*]
\item $f(\cdot)$ is symmetric around $0$.
\item The fourth moment of $f$ exists, i.e., $\int x^4 f(x)dx < \infty$. 
\item $f(\cdot)$ is twice continuously differentiable and $\cb{x \mid f(x) >0} \subset \mathbb R$ is an interval. The function $u \mapsto  \ell''(F^{-1}(u))$ is $(1/2 + \delta)$-Hölder continuous on $(0,1)$ for some $\delta > 0$,\footnote{
That is, there exist $L, \delta > 0$ such that $\abs{ \ell''(F^{-1}(u)) -  \ell''(F^{-1}(u'))} \leq L \abs{u-u'}^{1/2 + \delta}$ for all $u,u' \in (0,1)$.
}
where $\ell(x) = \log(f(x))$. The location Fisher information exists and is finite:
$$ \mathcal{I}(f) := \int \frac{ f'(x)^2}{f(x)}\ind(f(x)>0)dx = \int \ell'(x)^2 f(x)\ind(f(x)>0) dx \,< \,\infty.$$
\end{enumerate}
\end{assum}

\begin{rema}
Two examples of densities $f(\cdot)$ satisfying Assumption~\ref{assum:regular_loc} are:
\begin{enumerate}
    \item The Gaussian density $f(x)=\exp(-x^2/(2\sigma^2))/\sqrt{2\pi \sigma^2}$ for $\sigma >0$.   In this case, $\ell''(F^{-1}(u)) = -1/\sigma^2$ and $\mathcal{I}(f)=1/\sigma^2$.
    \item The logistic density, $f(x) = \exp(-x/s)/\{s \p{1 + \exp(-x/s)}^2\}$ for $s >0$. In this case, $\ell''(F^{-1}(u)) = -2u(1-u)/s^2$ and $\mathcal{I}(f)=1/(3s^2)$.
\end{enumerate}
The Laplace density does not satisfy the above regularity conditions. However, we may manually check that the conclusion of Corollary~\ref{coro:smoothloc} holds in that case as well (see Remark~\ref{rema:laplace_proof} below for the proof in the Laplace case).
\end{rema}

\subsection{Proof of Corollary~\ref{coro:smoothloc}}

We split up the proof over 3 individual parts, for each of the possible estimators considered.

\subsubsection{Proof for CC-L estimator}
\label{subsubsec:ccl_loc_proof}
\begin{proof}
To simplify the exposition of the proof, we make a centering assumption that $\EE[G]{\mu}=0$ and we fit the linear regression without intercept ($\beta_0 = 0$). The results are identical for the uncentered case where $\EE[G]{\mu} \neq 0$ and we use an intercept.

\noindent \textbf{Lower bound:} The estimator takes the form, 
$$\muccl_i =  \frac{1}{\B}\sum_{j=1}^{\B}\hat{\beta}(j)\bar{X}_i(j),\;\;\; \bar{X}_i(j) = \frac{1}{\B-1}\sum_{1 \leq \ell \neq j \leq \B} Z_{i\ell},\;\;\;
\hat{\beta}(j) = \frac{\sum_{i=1}^{\n} Y_i(j) \bar{X}_i(j)}{\sum_{i=1}^{\n} \bar{X}_i(j)^2}.$$
Let us write:
\begin{equation}
\begin{aligned}
\sqrt{\B}\p{\muccl_1 - \mu_1} &= \sqrt{\B}\p{\frac{1}{\B}\sum_{j=1}^{\B}\hat{\beta}(j)\bar{X}_1(j) - \mu_1} \\
&= \frac{1}{\sqrt{\B}}\sum_{j=1}^{\B}\p{\hat{\beta}(j)-1}\bar{X}_1(j) \;+\; \sqrt{\B}\p{\frac{1}{\B}\sum_{j=1}^{\B}\bar{X}_1(j) - \mu_1}\\ 
&\stackrel{(*)}{=} o_{\mathbb P}(1)\;  + \; \sqrt{\B}\p{\bar{Z}_1 - \mu_1 }
\end{aligned}
\end{equation}
We will justify $(*)$ below. From the Central Limit Theorem and Slutsky it follows that,
$$\sqrt{\B}\p{\muccl_1 - \mu_1} \xrightarrow{\mathcal{D}} \mathcal{N}\p{0, \sigma^2}.$$
Then, by Fatou's Lemma:
$$ \liminf_{\n \to \infty}\EE{\p{\sqrt{\B}(\muccl_1 - \mu_1)}^2} \geq \EE{\tilde{Z}^2} = \sigma^2, \text{ where } \tilde{Z}\sim \mathcal{N}\p{0, \sigma^2}.$$ 
By exchangeability: $\frac{1}{\n}\sum_{i=1}^{\n} \EE{\p{ \mu_i - \hat{\mu}^{\text{CC-L}}_i}^2} = \EE{ (\muccl_1 - \mu_1)^2}$ and so we have established that:
$$ \liminf_{\n \to \infty}\cb{ \frac{1}{\n}\sum_{i=1}^{\n} \EE{\p{ \mu_i - \hat{\mu}^{\text{CC-L}}_i}^2}\bigg/ \frac{\sigma^2}{\B} \geq 1}.$$ 
It remains to show $(*)$. It suffices to show that:
\begin{equation}
\label{eq:ccl_analysis}
\max_{j=1}^{\B}\cb{\abs{\bar{X}_1(j)}} = O_{\mathbb P}(1) \; \text{ and }\;  \sqrt{\B}\max_{j=1}^{\B}\cb{\abs{ \hat{\beta}(j)-1}} =  o_{\mathbb P}(1).
\end{equation}
For the first of these results, note that:
$$
\begin{aligned}
\max_{j=1}^{\B}\cb{\abs{\bar{X}_1(j)}} &= \max_{j=1}^{\B}\cb{\abs{\frac{\B}{\B-1}\bar{Z}_1 - \frac{Z_{1j}}{\B-1} }}\\
& \leq  \frac{\B}{\B-1}\p{ \abs{\bar{Z}_1} + \max_{j=1}^{\B}\cb{\frac{\abs{Z_{1j}}}{\B}}}.  
\end{aligned}
$$
By the law of large numbers, it holds that $\abs{\bar{Z}_1} = O_{\mathbb P}(1)$. Furthermore,  $\max_{j=1}^{\B}\cb{\abs{Z_{1j}}/{\B}} = o_{\mathbb P}(1)$, since for any $\varepsilon > 0$
$$
\begin{aligned}
\PP{ \max_{j=1}^{\B}\cb{\frac{\abs{Z_{1j}}}{\B}} > \varepsilon} &\leq \sum_{j=1}^{\B}\PP{\abs{Z_{1j}} > \varepsilon \B}\leq \B \frac{ \EE{ Z_{1j}^2}}{\varepsilon^2 \B^2} \to 0 \text{ as } \B \to \infty.
\end{aligned}
$$
For the second part of~\eqref{eq:ccl_analysis}, let us note that $\EE{Y_i(j) \bar{X}_i(j)} = A$ and $\EE{\bar{X}_i(j)^2} = A + \sigma^2/(\B-1)$, where $A = \int \mu^2 g(\mu)d\mu$. As a first consequence we have that $\EE{M_i(j)}=0$, where:
$$ M_i(j) := Y_i(j) \bar{X}_i(j) - \frac{A}{A + \sigma^2/(\B-1)}\bar{X}_i(j)^2.$$
Second, for any $\varepsilon' >0$ and any $j$, we get that:
$$
\begin{aligned}
 \PP{\abs{\frac{1}{\n} \sum_{j=1}^{\n}  \frac{\bar{X}_i(j)^2}{\EE{\bar{X}_i(j)^2}} - 1} > \varepsilon'} \leq  \frac{ \EE{\bar{X}_i(j)^4 }}{\n (\varepsilon')^2\EE{\bar{X}_i(j)^2}^2}.
\end{aligned} 
$$
Next, fixing another $\varepsilon>0$, it holds that:
$$
\begin{aligned}
& \PP{\sqrt{\B} \max_{j=1}^{\B}\cb{\abs{ \hat{\beta}(j)- \frac{A}{A + \sigma^2/(\B-1)}}} > \varepsilon} \\
\leq \;\; &\sum_{j=1}^{\B} \PP{\abs{ \hat{\beta}(j)- \frac{A}{A + \sigma^2/(\B-1)}} > \varepsilon/\sqrt{\B}} \\ 
\leq \;\; &\B \cdot  \PP{\abs{ \frac{1}{\n}\sum_{i=1}^{\n} M_i(j)} > \varepsilon/\sqrt{\B} \cdot \frac{1}{\n} \sum_{i=1}^n \bar{X}_i(j)^2} \\
\leq \;\; & \B \cdot \cb{ \PP{\abs{ \frac{1}{\n}\sum_{i=1}^{\n} M_i(j)} > \varepsilon/\sqrt{\B} \cdot \EE{\bar{X}_i(j)^2}/2} \, + \,  \PP{\frac{1}{\n} \sum_{j=1}^{\n} \bar{X}_i(j)^2 \leq \EE{\bar{X}_i(j)^2}/2}}\\ 
\leq \;\; & 4\B \cdot \cb{ \frac{  \B \Var{M_i(j)}}{\n \varepsilon^2 \EE{\bar{X}_i(j)^2}^2 } \, + \, \frac{ \EE{\bar{X}_i(j)^4 }}{\n \EE{\bar{X}_i(j)^2}^2}} \, \to 0 \text{ as } \n \to \infty,
\end{aligned}
$$
since $\B^2/\n \to 0$. Thus:
$$\sqrt{\B}\max_{j=1}^{\B}\cb{\abs{ \hat{\beta}(j)-\frac{A}{A + \sigma^2/(\B-1)}}} =  o_{\mathbb P}(1).$$
We conclude by noting that
$$\sqrt{\B}\p{\frac{A}{A+\sigma^2/(\B-1)} - 1} = \frac{\sqrt{\B}}{\B-1}\cdot\frac{\sigma^2}{A+\sigma^2/(\B-1)} = o(1).$$

\noindent\textbf{Upper bound:} Repeating the argument of the proof of Theorem~\ref{theo:lin_regret} and noting that $\Var{Y_i \cond \bar{X}_i} \leq C$ for some $C< \infty$ by our assumptions on the support of $\mu$ and since $\sigma_i^2 = \Var{Y_i \cond \mu_i}$ is constant for location families, we get:
$$ \frac{1}{\n}\sum_{i=1}^{\n} \EE{\p{\mu_i - \muccl_i}^2} \leq \inf_{\beta \in \mathbb R} \EE{\p{\mu_i - \beta \cdot \bar{X}_i}^2} + C \frac{1}{\n}.$$
The first term is upper bounded by $\sigma^2/(\B-1)$ by plugging in $\beta = 1$ and since $\B/\n \to 0$ we get:

$$ \limsup_{\n \to \infty} \frac{1}{\n}\sum_{i=1}^{\n} \EE{\p{ \mu_i - \hat{\mu}^{\text{CC-L}}_i}^2}\bigg/ \frac{\sigma^2}{\B} \leq 1. $$
We conclude, that
 $$\frac{1}{\n}\sum_{i=1}^{\n} \EE{\p{ \mu_i - \hat{\mu}^{\text{CC-L}}_i}^2} \bigg/ \frac{\sigma^2}{\B} \to 1 \text { as } \n \to \infty.$$
\end{proof}

\subsubsection{Proof for $\EE{\mu_i \mid \bar{Z}_i}$ estimator}
\label{subsubsec:averaged_post_mean_loc}
\begin{proof}
\textbf{Lower bound:} Let $\sigma^2 = \Var{Z_i \cond \mu_i}$ (it does not depend on $\mu_i$ for location families) and define the normalized sum \smash{$U_{i,\B} = \sum_{j=1}^{\B} Z_{ij} / \sqrt{\B \sigma^2}$}. Since $\sigma^2 >0$ and $\mathcal{I}(f) < \infty$, \citet[Theorem 1.6]{johnson2004fisher} prove that:
$$\mathcal{I}(U_{i,\B}) \to 1 \text{ as } \B \to \infty,$$
where $\mathcal{I}(U_{i,\B})$ is --- with some abuse of notation --- the location Fisher information of the Lebesgue density of $U_{i,\B}$ conditionally on the location parameter $\mu_i$.\footnote{Part of the statement is also that $\mathcal{I}(U_{i,\B})$ exists and is finite for all $\B$ large enough.} Next, note that $\bar{Z}_i = U_{i,\B}\sigma / \sqrt{\B}$ and so $\mathcal{I}(\bar{Z}_i) = \B \mathcal{I}(U_{i,\B})/\sigma^2$. By Corollary~\ref{coro:van_trees} below (van Trees inequality), applied for $\bar{Z}_i$ and one replicate, 
$$ \EE{ \p{ \mu_i - \EE{\mu_i \cond \bar{Z}_i}}^2} \geq \frac{1}{\mathcal{I}(\bar{Z}_i) + \mathcal{I}(g)} = \frac{1}{\B/\sigma^2(1+o(1)) + \mathcal{I}(g)}.$$
\textbf{Upper bound:} The Bayes risk of $\EE{\mu_i \cond \bar{Z}_i}$ is upper bounded by the risk of $\bar{Z}_i$, i.e.,
$$ \EE{ \p{ \mu_i - \EE{\mu_i \cond \bar{Z}_i}}^2} \leq \Var{\bar{Z}_i} = \frac{\sigma^2}{\B}.$$
Combining the upper and lower bounds, we find that,
$$\B\EE{ \p{ \mu_i - \EE{\mu_i \cond \bar{Z}_i}}^2} \to \sigma^2 \text{ as } \B \to \infty.$$
\end{proof}

\subsubsection{Proof for Auroral estimator}

\begin{proof}
\textbf{Lower bound:} By Corollary~\ref{coro:van_trees} below (van Trees inequality) it holds for the Bayes risk that,
$$ \bayesRisk{\B}{G,F} \geq \frac{1}{\B \mathcal{I}(f) + \mathcal{I}(g)}.$$
So:
$$ \liminf_{\n \to \infty} \frac{\B}{\n}\sum_{i=1}^{\n} \EE{\p{ \mu_i - \muebdsl_i}^2} \geq \liminf_{\n \to \infty} \p{\B \bayesRisk{\B}{G,F}} \geq \frac{1}{\mathcal{I}(f)}.$$
\textbf{Upper bound:}  The first step consists of upper bounding the risk $\linearRisk{\B-1}{G,F}$. We will do this by choosing an appropriate function from the class \smash{$\text{Lin}\p{\mathbb R^{\B-1}}$}, cf.~\eqref{eq:linear_class} and bounding its mean squared error for estimating $\mu_i$. To this end, fix $\beta_{(0)}$ for the intercept and let \smash{$\beta_{(j)} = h(j/\B)$},  $j=1,\dotsc, \B-1$ for a $(1/2+\delta)$-Hölder continuous function $h:[0,1] \to \RR$ to be chosen below. The L-statistic we study takes the form
\begin{equation}
\label{eq:Lstatistic_general}
T_i = \beta_{(0)} + \frac{1}{\B-1} \sum_{j=1}^{\B-1} h(j/\B) X_{i}^{(j)}.
\end{equation}
We may write $X_{ij} = \mu_i + \tilde{X}_{ij}$ where $\tilde{X}_{ij} \sim f(\cdot)$ is $0$-centered, so that:
$$ T_i =  \mu_i \frac{1}{\B-1} \sum_{j=1}^{\B-1} h(j/\B) + \beta_{(0)} + \frac{1}{\B-1} \sum_{j=1}^{\B-1} h(j/\B) \tilde{X}_{i}^{(j)}.$$
Here $\tilde{X}_{i}^{(j)}  = X_{i}^{(j)}-\mu_i $. By Hölder continuity of $h$:
$$\frac{1}{\B-1}\sum_{j=1}^{\B-1} h(j/\B) = \int_{0}^1 h(u)du + O\p{\B^{-1/2-\delta}}.$$
Choosing $\beta_{(0)}$ to be $-\frac{1}{\B-1} \sum_{j=1}^{\B-1} h(j/\B) \EEInline{\tilde{X}_{i}^{(j)}}$, we thus get that (with the $O(\B^{-1/2-\delta})$ term being uniform over $\mu_i$ in the support of $G$)
$$\text{Bias}(T_i \cond \mu_i) = \EE{T_i \cond \mu_i} - \mu_i =  \p{\int_{0}^1 h(u)du - 1}\cdot \mu_i + O\p{\B^{-1/2-\delta}}.$$
\citet[Proof of Theorem 22.3]{van2000asymptotic} establishes that:
$$\Var{T_i \cond \mu_i} = \frac{1}{\B-1}\int\int h(F(x))h(F(y))\p{F(x \land y) - F(x)F(y)}dxdy  + o(1/\B),$$
where the $o(1/\B)$ does not depend on $\mu_i$.
Now let us make the concrete choice $h(u) = - \frac{1}{\mathcal{I}(f)}\ell''\p{F^{-1}(u)}$.  At the end of the proof, we will show that for this choice of $h$:
\begin{equation}
\label{eq:fisher_info_calc}
    \int_0^1 h(u)du=1,\;\;\;\;\; \int\int h(F(x))h(F(y))\p{F(x \land y) - F(x)F(y)}dxdy = \frac{1}{\mathcal{I}(f)}
\end{equation}
Thus we get that
$$
\EE{(T_i - \mu_i)^2} = \text{Bias}(T_i \cond \mu_i)^2 + \Var{T_i \cond \mu_i} =  \frac{1}{(\B-1)\mathcal{I}(f)} + o\p{\frac{1}{\B}},
$$
and so,
\begin{equation}
\label{eq:upper_bound_lin_risk_fisher_info}
\B\linearRisk{\B-1}{G,F} \leq \frac{1}{\mathcal{I}(f)}(1+o(1)) \text { as } K \to \infty.
\end{equation}
Therefore, by Theorem~\ref{theo:lin_regret} and noting that $\VarInline{Y_i \cond \ordX_i} \leq C$ for some $C<\infty$ (arguing as in the proof for CC-L), we get
$$ \frac{\B}{\n}\sum_{i=1}^{\n} \EE{\p{ \mu_i - \muebdsl_i}^2}\leq \B\p{\linearRisk{\B-1}{G,F} + C\frac{\B}{\n}}\leq  \frac{1}{\mathcal{I}(f)}(1+o(1)) + C\frac{\B^2}{\n}$$
The conclusion follows by taking $\n \to \infty$ since then $\B \to \infty$, $\B^2/\n \to 0$, i.e., we get that,

$$\frac{1}{\n}\sum_{i=1}^{\n} \EE{\p{ \mu_i - \muebdsl_i}^2} \bigg/ \frac{\mathcal{I}(f)^{-1}}{\B} \to 1 \text{ as } \n \to \infty$$ 
\textbf{Fisher information calculations:} It remains to prove~\eqref{eq:fisher_info_calc}. First
$$ \int_0^1 h(u)du = -\frac{1}{\mathcal{I}(f)} \int_0^1 \ell''\p{F^{-1}(u)}du =- \frac{1}{\mathcal{I}(f)}\int \ell''(u) f(u)du =  \frac{\mathcal{I}(f)}{\mathcal{I}(f)} = 1.$$ 
For the second term, we first note that for the given choice of $h$:
$$
\begin{aligned}
&\int\int h(F(x))h(F(y))\p{F(x \land y) - F(x)F(y)}dx\,dy \\
=\,\;\; &\frac{1}{\mathcal{I}(f)^2} \int\int \ell''(x) \ell''(y)\p{F(x \land y) - F(x)F(y)}dx\,dy.
\end{aligned}
$$ 
Thus we need to show that the numerator of the last expression is equal to $\mathcal{I}(f) = \EE{\ell'(X)^2}$, where $X \sim f(\cdot)$. To this end, first let $X'$ be an i.i.d. copy of $X$. Then, since $\EE{\ell'(X)}=0$, it follows that $\EE{ \ell'(X)^2} =\EE{ (\ell'(X)- \ell'(X'))^2}/2$. On the other hand, by absolute continuity of $\ell'$, we may write $X,X'$ almost surely:
$$\ell'(X) - \ell'(X') = \int \ell''(x)\p{\ind(X' \leq x) - \ind (X \leq x)}\, dx.$$
Applying the same argument formally with variable $y$ instead of $x$ and multiplying the results, we get:
$$\p{\ell'(X) - \ell'(X')}^2 = \int \ell''(x)\ell''(y)\p{\ind(X' \leq x) - \ind (X \leq x)}\p{\ind(X' \leq y) - \ind (X \leq y)}\, dx\,dy.$$
Assume momentarily that we can apply Fubini's theorem, then $\EE{\p{\ell'(X) - \ell'(X')}^2}/2$ is equal to,
$$
\begin{aligned}
\frac{1}{2} &\int \ell''(x)\ell''(y)\EE{\p{\ind(X' \leq x) - \ind (X \leq x)}\p{\ind(X' \leq y) - \ind (X \leq y)}}\, dx\,dy \\ 
=\,\;\; & \int \ell''(x)\ell''(y)\p{F(x \land y) - F(x)F(y)}dx\,dy,
\end{aligned}
$$
as claimed. To see why we may apply Fubini's theorem, note that under the claimed assumptions, there exists a constant $C$ such that $\abs{\ell''(x)} \leq C$ for almost all $x$ in the support of $f(\cdot)$ and so, $X, X'$ almost surely:
$$
\begin{aligned}
&\int \int \abs{\ell''(x)\ell''(y)\p{\ind(X' \leq x) - \ind (X \leq x)}\p{\ind(X' \leq y) - \ind (X \leq y)}} dx\,dy \\ 
\leq \,\;\; & C^2 \int \int \abs{\ind(X' \leq x) - \ind (X \leq x)}\abs{\ind(X' \leq y) - \ind (X \leq y)} dx\,dy \\ 
= \,\;\; & C^2 (X'-X)^2.
\end{aligned}
$$
The conclusion follows by Tonelli's theorem, since $\EE{X^2} = \EE{X'^2} < \infty$.

\end{proof}

\begin{rema}[Proof in the case of the Laplace distribution.]

\label{rema:laplace_proof}
Consider the Laplace location family with Lebesgue density $f(\cdot-\mu)$, where
$$ f(x) = \frac{1}{\sqrt{2}\sigma} \exp\p{ -\sqrt{2}\abs{x}/\sigma}.$$
The scale parameterization is such that $\sigma^2 = \int f^2(x)dx$. We note that $f(\cdot)$ is absolutely continuous and is differentiable in quadratic mean with Fisher Information equal to $\mathcal{I}(f) = 2/\sigma^2$~\citep[Example 12.2.4]{lehmann2006testing}. 

These properties of $f(\cdot)$ suffice for most steps of the proof of Corollary~\ref{coro:smoothloc}. The higher order smoothness is required only for deriving the upper bound for the Auroral estimator. In particular, the construction of a L-statistic in~\eqref{eq:Lstatistic_general} used to upper bound $\linearRisk{\B-1}{G,F}$ in~\eqref{eq:upper_bound_lin_risk_fisher_info}, no longer works. However, the following simple choice works instead: we choose $T_i$ as the median of $\boldX_i$. For simplicity we take limits over even $\B$, i.e., we assume that $\B=2b$ for $b \in \mathbb N_{\geq 1}$ and take $T_i = X_{i}^{(b)}$ and $b \to \infty$. \citet{chu1955moments} then prove that:
$$ \EE{ \p{T_i - \mu_i}^2} \bigg/ \p{\frac{1}{4 f^2(0) (\B-1)}} \to 1 \text{ as } b \to \infty.$$
But $1/(4 f^2(0)) = \sigma^2/2 = \mathcal{I}(f)^{-1}$ and so~\eqref{eq:upper_bound_lin_risk_fisher_info} holds, and the rest of the proof follows verbatim.

\end{rema}

\subsubsection{Proof of Corollary~\ref{coro:rectangular_loc}}

As in the proof of Corollary~\ref{coro:smoothloc}, we need to bound the risk of each of the three estimators separately. It will be convenient to do the following preliminary calculation. In the rectangular location family,
$$ \sigma^2 = \Var{Z_{ij} \cond \mu_i} = \frac{1}{2B} \int_{-B}^B z^2 dz = \frac{B^2}{3}.$$
\textbf{CC-L:} All steps of the proof in Section~\ref{subsubsec:ccl_loc_proof} go through, and so we get,
$$\frac{1}{\n}\sum_{i=1}^{\n} \EE{\p{ \mu_i - \hat{\mu}^{\text{CC-L}}_i}^2} \bigg/ \frac{B^2}{3\B} \to 1 \text { as } \n \to \infty.$$
\textbf{Posterior mean based on average, $\EE{\mu_i \mid \bar{Z}_i}$:} In this case the proof of Section~\ref{subsubsec:averaged_post_mean_loc} goes through, and we get the same expression as for CC-L above. There is one subtlety involved in applying~\citet[Theorem 1.6]{johnson2004fisher}: the Fisher information of the rectangular distribution is not well-defined. Nevertheless, the convolution of $f(\cdot)$ with itself, i.e., the density of $Z_{i1} + Z_{i2}$, is equal to the triangular density, which has finite Fisher information and is absolutely continuous. Thus, we may apply Theorem 1.6 of~\citet{johnson2004fisher}.\\
\newline
\textbf{Auroral:} Let $T_i = (X_i^{(1)} + X_i^{(\B-1)})/2$ be the sample midrange based on $\ordX_i$. $T_i$ is unbiased for $\mu_i$ and~\citet[Section 4]{rider1957midrange} showed that $\Var{T_i \cond \mu_i} = 2B^2/(\B(\B+1))$. Since $T_i  \in \linclass$, it follows that $\B(\B+1)\linearRisk{\B-1}{G,F} \leq 2B^2$. Applying Theorem~\ref{theo:lin_regret} and noting that $\VarInline{Y_i \cond \ordX_i} \leq C$ for some $C<\infty$, we get
$$ \frac{\B(\B+1)}{\n}\sum_{i=1}^{\n} \EE{\p{ \mu_i - \muebdsl_i}^2}\leq \B(\B+1)\p{\linearRisk{\B-1}{G,F} + C\frac{\B}{\n}}\leq  2B^2 + C\frac{\B^2(\B+1)}{\n}.$$
Since $\B^3/\n \to 0$, we conclude that:
$$ \limsup_{\n \to \infty} \frac{\B(\B+1)}{\n}\sum_{i=1}^{\n} \EE{\p{ \mu_i - \muebdsl_i}^2} \; \leq \;  2B^2.$$
We conclude with the statement of the corollary by combining the results for the three estimators.

\subsubsection{The van Trees inequality}
\begin{coro}[The van Trees inequality for location families]
\label{coro:van_trees}
In the setting of Section~\ref{subsec:loc_family}, assume that:
\begin{enumerate}[(a)]
\item  The density $f(\cdot)$ w.r.t.\ the Lebesgue measure is absolutely continuous and symmetric around $0$.
\item The location Fisher information for $f(\cdot)$ exists and is finite:
$$ \mathcal{I}(f) := \int \frac{ f'(x)^2}{f(x)}\ind(f(x)>0)dx \, < \infty. $$
\item The prior distribution $G$ has compact support $[t_1,t_2]$ and has density $g$ w.r.t.\ the Lebesgue measure. Furthermore, $g$ is absolutely continuous on $[t_1,t_2]$, satisfies $g(t_1)=g(t_2)=0$ and also has finite Fisher information:
$$\mathcal{I}(g) = \int \frac{ g'(x)^2}{g(x)}\ind(g(x)>0)dx \, < \infty.$$
Then, the Bayes risk $\bayesRisk{\B}{G,F}$ for estimating $\mu_i$ satisfies:
$$ \bayesRisk{\B}{G,F} \geq \frac{1}{\B \mathcal{I}(f) + \mathcal{I}(g)}.$$
\end{enumerate}
\end{coro}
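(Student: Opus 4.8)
The plan is to derive this from the van Trees inequality, the Bayesian analogue of the Cram\'er--Rao bound, applied to the joint density of $(\mu_i,\boldZ_i)$, and then to specialize the resulting bound to the Bayes estimator, which attains $\bayesRisk{\B}{G,F}$. Fix any measurable estimator $\hat\mu=\hat\mu(\boldZ_i)$; if $\mathbb{E}[\hat\mu^2]=\infty$ there is nothing to prove, so assume $\mathbb{E}[\hat\mu^2]<\infty$. Write $\ell=\log f$, let $p(\mathbf{z},\mu)=g(\mu)\prod_{j=1}^{\B}f(z_j-\mu)$ be the joint Lebesgue density on $\RR^{\B}\times[t_1,t_2]$, and introduce the ``Bayesian score''
$$
S(\boldZ_i,\mu_i):=\frac{\partial}{\partial\mu}\log p(\boldZ_i,\mu_i)=\frac{g'(\mu_i)}{g(\mu_i)}-\sum_{j=1}^{\B}\ell'(Z_{ij}-\mu_i).
$$
Everything then reduces to two moment identities and one application of Cauchy--Schwarz.

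First I would show that the cross-correlation equals one, $\mathbb{E}[(\hat\mu-\mu_i)\,S(\boldZ_i,\mu_i)]=1$. Writing this expectation as $\int_{\RR^{\B}}\int_{t_1}^{t_2}(\hat\mu(\mathbf{z})-\mu)\,\partial_\mu p(\mathbf{z},\mu)\,d\mu\,d\mathbf{z}$ and doing the inner $\mu$-integral first, the $\hat\mu$-term contributes $\int \hat\mu(\mathbf{z})\,[p(\mathbf{z},t_2)-p(\mathbf{z},t_1)]\,d\mathbf{z}=0$ since $g(t_1)=g(t_2)=0$, and the $-\mu$-term, after an integration by parts, contributes $\int_{\RR^{\B}}\int_{t_1}^{t_2}p(\mathbf{z},\mu)\,d\mu\,d\mathbf{z}=1$ (the boundary term again vanishing because $g(t_1)=g(t_2)=0$). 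Second, I would show $\mathbb{E}[S^2]=\B\,\mathcal{I}(f)+\mathcal{I}(g)$: the square of the prior term integrates to $\mathcal{I}(g)$; the cross term vanishes because, conditionally on $\mu_i$, $\mathbb{E}[\ell'(Z_{ij}-\mu_i)\mid\mu_i]=\int f'(x)\,dx=0$; and the square of the likelihood term equals, conditionally on $\mu_i$, $\sum_{j=1}^{\B}\mathrm{Var}[\ell'(Z_{ij}-\mu_i)\mid\mu_i]=\B\,\mathcal{I}(f)$, using conditional independence of the replicates and the fact that the per-replicate Fisher information is the constant $\mathcal{I}(f)$ in a location family. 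Cauchy--Schwarz then gives $1=(\mathbb{E}[(\hat\mu-\mu_i)S])^2\le\mathbb{E}[(\hat\mu-\mu_i)^2]\,(\B\mathcal{I}(f)+\mathcal{I}(g))$, so every estimator has risk at least $(\B\mathcal{I}(f)+\mathcal{I}(g))^{-1}$; choosing $\hat\mu=\mathbb{E}[\mu_i\mid\boldZ_i]$ (whose risk is finite since $G$ has compact support) gives the stated lower bound on $\bayesRisk{\B}{G,F}$.

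The main obstacle is purely a matter of regularity bookkeeping rather than a conceptual difficulty, and it is where the hypotheses are spent: one must check (a) that $\mu\mapsto p(\mathbf{z},\mu)$ is absolutely continuous on $[t_1,t_2]$ with $\partial_\mu p=p\cdot S$ almost everywhere, which follows because $g$ and each $f(z_j-\cdot)$ are absolutely continuous on the compact interval and a product of absolutely continuous functions on a compact interval is absolutely continuous; (b) that $S$ is square-integrable, which is precisely the finiteness of $\mathcal{I}(f)$ and $\mathcal{I}(g)$ and which, via Cauchy--Schwarz together with $\mathbb{E}[\hat\mu^2]<\infty$ and the boundedness of $\mathrm{supp}(G)=[t_1,t_2]$, supplies the integrability needed both to interchange the two integrals (Fubini) and to integrate by parts in the cross-correlation step; and (c) that the boundary conditions $g(t_1)=g(t_2)=0$ really do kill every boundary term. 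I would collect (a)--(c) into a short preliminary paragraph and then run the two-line Cauchy--Schwarz argument. Note that symmetry of $f$ plays no role in this proof --- it enters Corollary~\ref{coro:smoothloc} only to make $\mathcal{I}(f)^{-1}$ the constant matching the L-statistic upper bound --- so I would either omit it from the hypotheses here or simply not invoke it. Alternatively, one could quote the van Trees inequality in a standard packaged form and merely verify its conditions, but the self-contained derivation above is short enough to spell out.
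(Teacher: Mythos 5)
Your proposal is correct and follows the same route as the paper: the paper's proof simply cites the van Trees inequality (Theorem 2.13 of Tsybakov) and verifies its three hypotheses --- joint measurability, constancy of the location Fisher information together with its additivity over the $\B$ independent replicates, and the prior regularity in (c) --- whereas you unpack that black box and re-derive the inequality via the Bayesian score $S=\partial_\mu \log p$, the two moment identities $\EE{(\hat\mu-\mu_i)S}=1$ and $\EE{S^2}=\B\,\mathcal{I}(f)+\mathcal{I}(g)$, and Cauchy--Schwarz, which is precisely the standard proof of the cited theorem. Your side remark that symmetry of $f$ is not actually needed for this lower bound is also correct; it is used elsewhere (to match the L-statistic upper bound in Corollary~\ref{coro:smoothloc}), not here.
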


\begin{proof}
This is a direct corollary of the van Trees inequality~\citep*{van1968detection, gill1995applications}. Concretely, here we apply the form presented in Theorem 2.13 in~\citet{tsybakov2008introduction}. There results are phrased more generally for models $p(x,t)$, where $p(\cdot, t)$ is the Lebesgue density for a fixed parameter value $t$. Here we have $p(x,t) = f(x-t)$, which simplifies results. Concretely, we quickly verify the three assumptions (i),(ii),(iii) of Theorem 2.13 in~\citet{tsybakov2008introduction}:

For (i), joint measurability of $p(x,t)$ in $(x,t)$ follows from absolute continuity of $f(\cdot)$. For $(ii)$ we use translation invariance of the Lebesgue measure to note that the Fisher information is constant as a function of the location parameter. We also use the fact that the Fisher Information in the experiment where we observe $\B$ independent replicates is equal to $\B$-times the Fisher information in the experiment with a single replicate. Finally, assumption (iii) in~\citet{tsybakov2008introduction} is identical to assumption (c).
\end{proof}

\clearpage

\end{appendix}

\end{document}